\DeclareMathOperator{\tw}{tw}
\DeclareMathOperator{\val}{val}
\DeclareMathOperator{\tr}{tr}
\DeclareMathOperator{\pw}{pw}
\DeclareMathOperator{\soe}{soe}
\newcommand{\HomInd}{\textup{\textsc{HomInd}}}
\renewcommand{\phi}{\varphi}
\renewcommand{\epsilon}{\varepsilon}
\newcommand{\sem}[1]{{[\![#1]\!]}}
\newcommand{\Rat}{\mathbb{Q}}
\newcommand{\RatS}{{\mathbb{Q}^S}}
\newcommand{\RatSS}{{\mathbb{Q}^{S \times S}}}
\newcommand{\RatSnS}{{\mathbb{Q}^{S^{n} \times S}}}
\newcommand{\Ac}{{\mathcal{A}}}
\newcommand{\AC}{\ensuremath{\textup{AC}^0}\xspace}
\newcommand{\NL}{\ensuremath{\textup{NL}}\xspace}
\newcommand{\NCone}{\ensuremath{\textup{NC}^1}\xspace}
\newcommand{\NCtwo}{\ensuremath{\textup{NC}^2}\xspace}
\newcommand{\DET}{\ensuremath{\textup{DET}}\xspace}
\newcommand{\LOGSPACE}{\ensuremath{\textup{L}}\xspace}
\newcommand{\CL}{\ensuremath{\textup{C}_=\textup{L}}\xspace}
\newcommand{\SharpL}{\ensuremath{\#\textup{L}}\xspace}
\newcommand{\GapL}{\ensuremath{\textup{GapL}}\xspace}
\newcommand{\PTIME}{\ensuremath{\textup{P}}\xspace}
\newcommand{\coRP}{\ensuremath{\textup{coRP}}\xspace}
\newcommand{\CMSO}{\ensuremath{\mathsf{CMSO}_2}\xspace}
\newcommand\PIT{\textsmaller{PIT}\xspace}
\newcommand\MWA{\textsmaller{MWA}\xspace}
\newcommand\MTA{\textsmaller{MTA}\xspace}
\newcommand\MWAs{\textsmaller{MWA}s\xspace}
\newcommand\MTAs{\textsmaller{MTA}s\xspace}
\newcommand\VCP{\textsmaller{VCP}\xspace}
\tikzset{
	vertex/.style={draw,circle,fill=lipicsGray},
	every node/.style={anchor=center},
	lbl/.style={color=lipicsLightGray}
}
\newsavebox{\fminibox}
\newlength{\fminilength}
\newenvironment{fminipage}[1][\linewidth]{
	\setlength{\fminilength}{#1-2\fboxsep-2\fboxrule}\begin{lrbox}{\fminibox}\begin{minipage}{\fminilength}}{
	\end{minipage}\end{lrbox}\noindent\fbox{\usebox{\fminibox}}
}
\newcommand\textcite\cite
\title{Homomorphism Indistinguishability, Multiplicity Automata Equivalence, and Polynomial Identity Testing}
\author{Marek {\v{C}ern\'y}}{Universiteit Antwerpen, Belgium}{marek.cerny@uantwerp.be}{https://orcid.org/0000-0001-6013-2054}{University of Antwerp (BOF, Doctoral project 47103).}
\author{Tim Seppelt}{IT-Universitetet i K{\o}benhavn, Denmark}{tise@itu.dk}{https://orcid.org/0000-0002-6447-0568}{\flag[2.5cm]{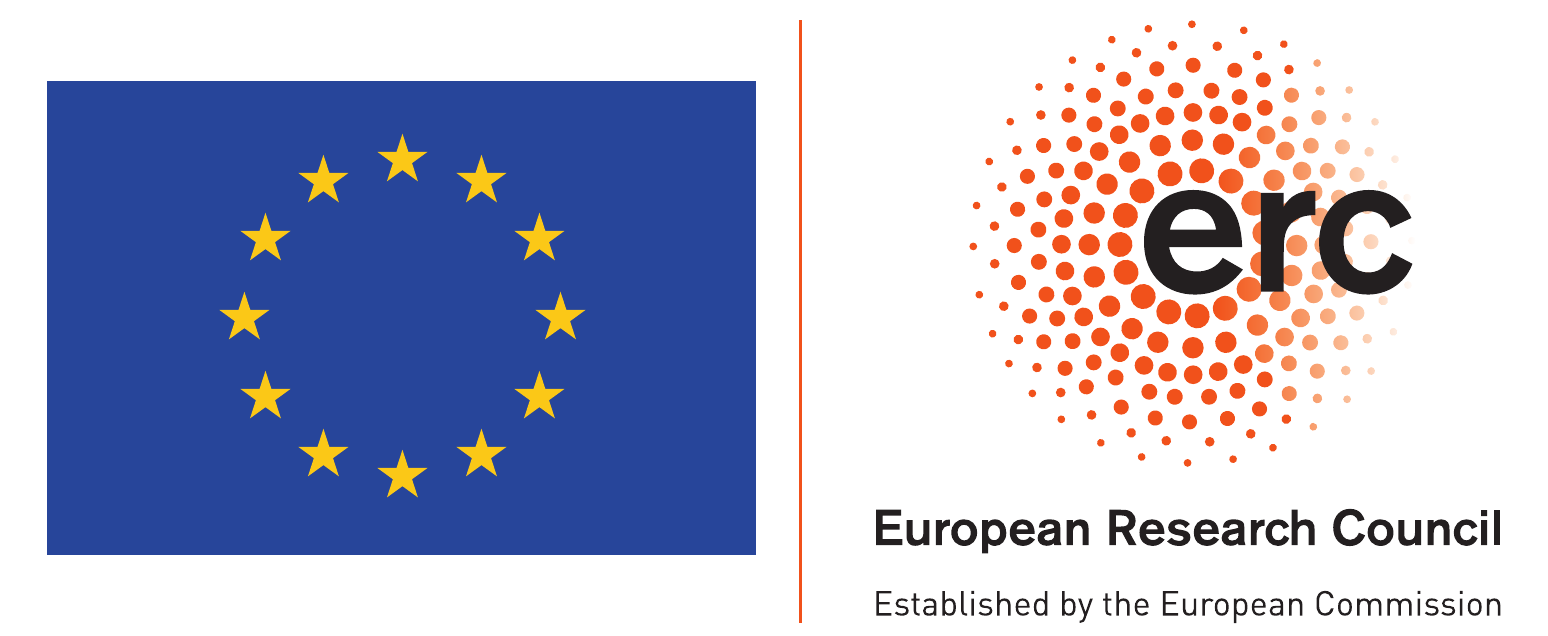}European Union (CountHom, 101077083). Views and opinions expressed are however those of the author(s) only and do not necessarily reflect those of the European Union or the European Research Council Executive Agency. Neither the European Union nor the granting
authority can be held responsible for them.}
\titlerunning{Homomorphism Indistinguishability,  Multiplicity Automata Equivalence, and \PIT}
\authorrunning{\v{C}ern\'y and Seppelt}
\keywords{treewidth, Courcelle's theorem, logspace, multiplicity automata, polynomial identity testing}
\begin{document}
	\maketitle

    \begin{abstract}
        Two graphs $G$ and $H$ are \emph{homomorphism indistinguishable}
        over a graph class $\mathcal{F}$
        if they admit the same number of homomorphisms from every graph $F \in \mathcal{F}$.
        Many graph isomorphism relaxations such as (quantum) isomorphism and cospectrality can be characterised as homomorphism indistinguishability over specific graph classes.
        Thereby, the problems $\HomInd(\mathcal{F})$ of deciding homomorphism indistinguishability over $\mathcal{F}$ subsume diverse graph isomorphism relaxations whose complexities range from logspace to undecidable.
        Establishing the first general result on the complexity of $\HomInd(\mathcal{F})$, 
        Seppelt (\textsmaller{MFCS}~2024) showed that $\HomInd(\mathcal{F})$ is in randomised polynomial time for every graph class $\mathcal{F}$ of bounded treewidth that can be defined in counting monadic second-order logic $\CMSO$.
        
        We show that this algorithm is conditionally optimal, i.e.\ it cannot be derandomised unless polynomial identity testing is in $\PTIME$.
        For $\CMSO$-definable graph classes $\mathcal{F}$ of bounded pathwidth, 
        we improve the previous complexity upper bound for $\HomInd(\mathcal{F})$ from $\PTIME$ to $\CL$ and show that this is tight. Secondarily, we establish a connection between homomorphism indistinguishability and multiplicity automata equivalence which allows us to pinpoint the complexity of the latter problem as $\CL$-complete.
    \end{abstract}

    \section{Introduction}

    Graph data is ubiquitous: Graphs may represent social networks, transportation networks, chemical or pharmaceutical molecules, databases or program executions. 
    A central task when presented with graph data is detecting whether two graphs are structurally equivalent or \emph{isomorphic}.
    Graph isomorphism, however, is complexity-theoretically elusive \cite{grohe_graph_2020} and practically not very robust for example w.r.t.\ noise or perturbations. 
    These limitations motivate the study of \emph{graph isomorphism relaxations}, i.e.\ equivalence relations between graphs that are coarser than isomorphism.
    Although a plethora of graph isomorphism relaxations has been proposed and studied in the past decades, 
    they---to our knowledge---lack a coherent theory which explains e.g.\ their computational complexity.
    In recent years, 
    homomorphism indistinguishability has emerged as a framework that provides increasingly comprehensive answers to this end.

    Two graphs $G$ and $H$ are \emph{homomorphism indistinguishable} over a graph class $\mathcal{F}$ if, for every $F \in \mathcal{F}$, the graphs $G$ and $H$ admit the same number of homomorphisms from the graph $F$.
    Many well-studied graph isomorphism relaxations from a wide range of areas can be characterised as homomorphism indistinguishability relations over natural graph classes.
    Examples include graph isomorphism corresponding to homomorphism indistinguishability over all graphs \cite{lovasz_operations_1967},
    quantum isomorphism (planar graphs, \cite{mancinska_quantum_2020}),
    cospectrality (cycles), and
    equivalence under the $k$-dimensional Weisfeiler--Leman algorithm (graphs of treewidth at most~$k$, \cite{dvorak_recognizing_2010,dell_lovasz_2018}).
    Further examples draw from notions originating in 
    category theory \cite{dawar_lovasz-type_2021,montacute_pebble-relation_2024},
    optimisation \cite{grohe_pebble_2015,atserias_sheraliadams_2013,malkin_sheraliadams_2014,dell_lovasz_2018,grohe_homomorphism_2025,rattan_weisfeiler-leman_2023,roberson_lasserre_2024}, quantum information theory \cite{kar_npa_2025}, machine learning \cite{xu_how_2019,morris_weisfeiler_2019,gai_homomorphism_2025,zhang_beyond_2024}, 
    and finite model theory \cite{grohe_counting_2020,fluck_going_2024,adler_monotonicity_2024,schindling_homomorphism_2025}, see also the monograph \cite{seppelt_homomorphism_2024}.

    The central decision problem associated with homomorphism indistinguishability is $\HomInd(\mathcal{F})$ which asks, given input graphs $G$ and $H$, whether they are homomorphism indistinguishable over a fixed graph class $\mathcal{F}$.
    For varying $\mathcal{F}$, the problems $\HomInd(\mathcal{F})$ subsume diverse graph isomorphism relaxations falling into a wide range of complexity classes.
    For example, for the class of trees $\mathcal{T}$, $\HomInd(\mathcal{T})$ is in polynomial time via the Weisfeiler--Leman algorithm, 
    most notably, for the class of all planar graphs $\mathcal{P}$, $\HomInd(\mathcal{P})$ is undecidable \cite{mancinska_quantum_2020,atserias_quantum_2019,slofstra_set_2019},
    and, for the class of all graphs $\mathcal{G}$,
    $\HomInd(\mathcal{G})$ is graph isomorphism and thus in quasipolynomial time \cite{babai_graph_2016}.
    Note that, despite $\mathcal{T} \subseteq \mathcal{P} \subseteq \mathcal{G}$,
    there is no apparent relation 
    between the complexities of the homomorphism indistinguishability problems over these graph classes, see also \cite[Table 9.1]{seppelt_homomorphism_2024}.
    
    Crucially, by viewing graph isomorphism relaxations as homomorphism indistinguishability relations $\HomInd(\mathcal{F})$, one may analyse their complexity in terms of properties of the graph  classes $\mathcal{F}$, thus facilitating a principled study of the complexity of graph isomorphism relaxations.
    Adopting this approach,
    Seppelt \cite{seppelt_algorithmic_2024} showed that $\HomInd(\mathcal{F})$ is in randomised polynomial time \coRP 
    for every recognisable graph class $\mathcal{F}$ of bounded treewidth. 
    Recognisability is a fairly general property of graph classes introduced by Courcelle \cite{courcelle_monadic_1990} which can be thought of as roughly equivalent to definability in the rather powerful counting monadic second-order logic \CMSO \cite{bojanczyk_definability_2016}.
    By the Robertson--Seymour theorem \cite{robertson_graph_2004}, 
    all minor-closed graph classes are recognisable, the former being central to homomorphism indistinguishability \cite{roberson_oddomorphisms_2022,seppelt_logical_2024}.
    In \cite{seppelt_algorithmic_2024}, it was asked whether the algorithm presented there can be derandomised.
    Our first result asserts that this would be subject to major complexity-theoretic challenges.

    \begin{theorem}[label=thm:recognisable-trees,restate=thmRecogTree]
        There exists a $\CMSO$-definable graph class $\mathcal{F}$ of bounded treewidth
        such that $\HomInd(\mathcal{F})$, \MTA equivalence, and \PIT are logspace many-one interreducible.
    \end{theorem}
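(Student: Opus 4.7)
The plan is to establish the cycle of logspace many-one reductions
\[
\HomInd(\mathcal{F}) \;\leq_{\LOGSPACE}\; \MTA\text{ equivalence} \;\leq_{\LOGSPACE}\; \PIT \;\leq_{\LOGSPACE}\; \HomInd(\mathcal{F})
\]
for a single graph class $\mathcal{F}$ that I design for this purpose. I would take $\mathcal{F}$ to consist of graph encodings of ordered labelled binary trees over a fixed finite ranked alphabet $\Sigma$: node labels are realised combinatorially by small pendant subgraphs, and the left/right distinction between children is encoded by tiny orientation gadgets. Bounded treewidth and \CMSO-definability are then immediate because the gadgets have bounded size and the legality of the encoding is a local property.

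The first reduction uses the standard bounded-treewidth homomorphism counting scheme: every target graph $G$ canonically induces an \MTA $\Ac_G$ whose states record partial homomorphism profiles on a bag of the tree decomposition, whose transitions compose profiles at an internal node, and whose value on a tree $t$ equals $\hom(F_t, G)$ for the graph encoding $F_t \in \mathcal{F}$. Two graphs $G, H$ are then homomorphism indistinguishable over $\mathcal{F}$ exactly when $\Ac_G$ and $\Ac_H$ are equivalent, and $G \mapsto \Ac_G$ is logspace-computable. The second reduction, from \MTA equivalence to \PIT, is essentially classical: the output of the difference automaton $\Ac_1 - \Ac_2$ on a tree is a polynomial in the transition entries whose identical vanishing is captured by a structured arithmetic computation derivable from the automata by iterated matrix multiplication, which is a \PIT instance.

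The main obstacle is the closing leg $\PIT \leq_{\LOGSPACE} \HomInd(\mathcal{F})$. I would first encode a \PIT instance as a pair of \MTAs, using the standard translation of arithmetic formulas (or algebraic branching programs) into weighted tree automata. From two \MTAs $\Ac_1, \Ac_2$ I would then synthesise target graphs $G_1, G_2$ whose vertex sets are distinguishable copies of the state set and whose adjacencies implement each transition as a local homomorphism-counting gadget carrying the prescribed rational weight (clearing denominators as needed). The delicate point is ensuring that homomorphisms $F \to G_i$ are in weight-preserving bijection with the runs of $\Ac_i$ on the tree encoded by $F$, with no spurious contributions. I would enforce this by using colour-distinguishing pendant gadgets that pin down images of auxiliary vertices, so the only remaining degrees of freedom are the states assigned to each tree node; the resulting sum reconstructs precisely the automaton's output. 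Since the same $\mathcal{F}$ serves in all three reductions, verifying that each translation is logspace closes the cycle and yields the claimed interreducibility.
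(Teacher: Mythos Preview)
Your first two reductions match the paper. The reduction $\HomInd(\mathcal{F}) \leq_{\LOGSPACE} \MTA$ equivalence is precisely \cref{thm:main1}, and $\MTA$ equivalence $\leq_{\LOGSPACE} \PIT$ is quoted from \cite{marusic_complexity_2015}.

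The gap is in your closing leg $\PIT \leq_{\LOGSPACE} \HomInd(\mathcal{F})$. You propose to pass through a pair of \MTAs and then realise each \MTA as a target graph whose local gadgets carry the transition weights. But \MTA transition entries are rationals that may be negative, whereas homomorphism counts are non-negative integers; ``clearing denominators'' does nothing for the sign. You cannot in general arrange $\hom(F_t,G_i) = \sem{\Ac_i}(t)$ (or a fixed positive multiple thereof) when $\sem{\Ac_i}$ takes negative values, and splitting an \MTA into a difference of two non-negative \MTAs is not the entrywise splitting of its matrices --- matrix products do not respect that decomposition. Even if one restricts to the specific \MTAs produced by the Maru\v{s}i\'c--Worrell reduction, those encode subtraction and are not obviously sign-controlled. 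A second, smaller issue is that encoding a ternary transition $\mu(\sigma)\in\RatSnS$ of arity $2$ by a graph gadget so that homomorphisms from your tree-encodings factor exactly through runs is delicate and left unspecified.

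The paper sidesteps both problems by \emph{not} routing the last reduction through \MTAs. It first invokes \cite{allender_complexity_2009,marusic_complexity_2015} to normalise \PIT to the question of whether two variable-free, \emph{subtraction-free} circuits with $\{0,1\}$ leaves and alternating $+/\times$ layers of equal height compute the same integer. From each such circuit $C$ it builds a small directed vertex-coloured graph $G(C)$ (essentially the circuit with edges subdivided), and the class $\mathcal{F}$ consists of fixed ``probe'' graphs $F_h$ whose homomorphism count into $G(C)$ equals a fixed positive constant times the value of $C$. Because the circuits are subtraction-free with $0/1$ inputs, all quantities are non-negative integers and there is no weight-encoding problem at all; a Kneser-graph gadget (\cref{lem:uncolour}) then removes colours and directions. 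If you want to salvage your route, the missing ingredient is exactly this preliminary normalisation to subtraction-free, small-constant circuits before you ever build target graphs.
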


    Here, \PIT denotes the polynomial identity testing problem, i.e.\ the problem of deciding whether a polynomial represented by an arithmetic circuit is the zero polynomial. 
    A~deterministic polynomial-time algorithm for \PIT would have far-reaching complexity-theoretic repercussions \cite{kabanets_derandomizing_2003}.
    It was shown in \cite{marusic_complexity_2015} that \PIT is logspace many-one interreducible with equivalence testing for multiplicity tree automata (\MTA). 
    \MTAs \cite{berstel_recognizable_1982} are an algebraic model of computation:
    they assign to every tree a number from a field ($\mathbb{Q}$ in our case).
    One may, for example, think of the trees as representing \textsmaller{XML} documents
    and of the numbers as being probabilities. 
    \MTAs have a wide range of applications, see \cite{marusic_complexity_2015}.
    
    As a secondary result, we streamline the reasoning in \cite{seppelt_algorithmic_2024} by reducing $\HomInd(\mathcal{F})$, for every recognisable graph class $\mathcal{F}$ of bounded treewidth, to \MTA equivalence, see \cref{thm:main1}.
    In retrospect, this is quite natural given Courcelle's automata-theoretic motivation for the notion of recognisability \cite{courcelle_monadic_1990}.
    Analogous to the bounded-treewidth case,
    we show that $\HomInd(\mathcal{F})$ for recognisable graph classes $\mathcal{F}$ of bounded pathwidth reduces to equivalence of multiplicity word automata (\MWA, \cite{schutzenberger_definition_1961}), see \cref{thm:main1}.
    This does not only simplify the reasoning in \cite{seppelt_algorithmic_2024} but also improves the complexity upper bound for such problems  $\HomInd(\mathcal{F})$, via a classical result of Tzeng~\cite{tzeng_path_1996},
    from polynomial time to \DET, the class of problems that are $\NCone$-reducible to computing the determinant of integer matrices \cite{cook_taxonomy_1985}.
    Refining Tzeng's analysis, we establish the precise complexity of \MWA equivalence.
    See \cref{fig:overview} for an overview of complexity classes.

    \begin{figure}
        \centering
        \begin{tikzpicture}[node distance=1.3cm]
            \node (c1) {\LOGSPACE\vphantom{$\textup{L}^{\SharpL}$\CL}};
            \node (c2) [right of=c1] {\NL\vphantom{$\textup{L}^{\SharpL}$\CL}};
            \node (c3) [right of=c2] {\CL\vphantom{$\textup{L}^{\SharpL}$\CL}};
            \node (c4) [right of=c3] {$\textup{L}^{\CL}$\vphantom{$\textup{L}^{\SharpL}$\CL}};
            \node (c5) [right of=c4] {$\textup{L}^{\SharpL}$\vphantom{\CL}};
            \node (c6) [right of=c5] {\DET\vphantom{$\textup{L}^{\SharpL}$\CL}};
            \node (c7) [right of=c6] {\NCtwo\vphantom{$\textup{L}^{\SharpL}$\CL}};
            \node (c8) [right of=c7] {$\textup{NC}$\vphantom{$\textup{L}^{\SharpL}$\CL}};
            \node (c9) [right of=c8] {\PTIME\vphantom{$\textup{L}^{\SharpL}$\CL}};
            \node (c10) [right of=c9] {$\textup{PIT}$\vphantom{$\textup{L}^{\SharpL}$\CL}};
            \node (c11) [right of=c10] {\coRP\vphantom{$\textup{L}^{\SharpL}$\CL}};

            \draw [draw=lipicsLightGray,->] (c1) -- (c2); 
            \draw [draw=lipicsLightGray,->] (c2) -- (c3); 
            \draw [draw=lipicsLightGray,->] (c3) -- (c4); 
            \draw [draw=lipicsLightGray,->] (c4) -- (c5); 
            \draw [draw=lipicsLightGray,->] (c5) -- (c6); 
            \draw [draw=lipicsLightGray,->] (c6) -- (c7); 
            \draw [draw=lipicsLightGray,->] (c7) -- (c8); 
            \draw [draw=lipicsLightGray,->] (c8) -- (c9); 
            \draw [draw=lipicsLightGray,->] (c9) -- (c10); 
            \draw [draw=lipicsLightGray,->] (c10) -- (c11);

            \node [above of=c10, text width=3cm, align=center, anchor=south, yshift=.5cm] (rectw) {$\HomInd(\mathcal{F})$ for bounded-treewidth recognisable $\mathcal{F}$};

            \node [above of=c3, text width=3cm, align=center, anchor=south, yshift=.5cm] (recpw) {$\HomInd(\mathcal{F})$ for bounded-pathwidth recognisable $\mathcal{F}$};

\draw [<->] (c10) edge node [midway, anchor=east] {\cref{thm:recognisable-trees,thm:main1}} (rectw);
            \draw [<->] (c3) edge node [midway, anchor=west] {\cref{thm:pw-rec,thm:pw-hardness}} (recpw);
        \end{tikzpicture}
        \caption{Overview of results relative to some complexity classes. For details on $\textup{L}^{\CL}$, $\textup{L}^{\SharpL}$, and \DET, see \cite{allender_complexity_1999,allender_arithmetic_2004,mahajan_determinant_1997}.
        $\textup{PIT}$ denotes the class of problems logspace many-one reducible to \PIT.}
        \label{fig:overview}
    \end{figure}

    \begin{theorem}[restate=thmMarek,label=thm:main3]
        \MWA equivalence is \CL-complete under logspace many-one reductions.
    \end{theorem}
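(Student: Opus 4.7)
I would attack the two directions separately: membership in $\CL$ and $\CL$-hardness.

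For the upper bound, the starting point is the standard difference construction, which reduces \MWA equivalence to testing whether a single \MWA $\mathcal{A}$, with $n$ states, transition matrices $(M^\sigma)_{\sigma \in \Sigma}$, initial vector $\alpha$, and final vector $\beta$, computes the zero series. The classical linear-algebraic fact underlying Tzeng's algorithm \cite{tzeng_path_1996} is that the forward-reachable subspace $\mathrm{span}\{\alpha^\top M^w : w \in \Sigma^*\}$ stabilises after at most $n-1$ breadth-first expansions, so $\mathcal{A} \equiv 0$ if and only if $\alpha^\top M^w \beta = 0$ for every word $w$ of length less than $n$. Since we are working over \Rat{}, the conjunction of these (exponentially many) vanishing conditions is equivalent to the single equation
\[ g(\mathcal{A}) \;:=\; \sum_{\ell=0}^{n-1} \sum_{w \in \Sigma^\ell} (\alpha^\top M^w \beta)^2 \;=\; (\alpha \otimes \alpha)^\top \Bigl(\textstyle\sum_{\ell=0}^{n-1} N^\ell\Bigr)(\beta \otimes \beta) \;=\; 0, \]
where $N = \sum_{\sigma \in \Sigma} M^\sigma \otimes M^\sigma$ is an integer matrix of polynomial size. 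Iterated products of polynomial-size integer matrices are \GapL{}-computable entrywise \cite{allender_complexity_1999,mahajan_determinant_1997}, so $g(\mathcal{A})$ is a \GapL{} function of the input, and its vanishing is by definition a \CL{} predicate.

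For \CL{}-hardness, I would reduce from the canonical \CL{}-complete problem of deciding whether the signed sum of weights of source-to-sink paths in a levelled $\mathbb{Z}$-weighted \textsc{Dag} vanishes. Given such a \textsc{Dag} $D$ with source $s$, sink $t$, and level depth $L$, construct the unary \MWA $\mathcal{A}_D$ whose state set is $V(D)$, whose single transition matrix $M$ is the weighted adjacency matrix of $D$, and with initial vector $e_s$ and final vector $e_t$. Because $D$ is levelled, $e_s^\top M^\ell e_t = 0$ for every $\ell \neq L$, while $e_s^\top M^L e_t$ equals exactly the desired path sum. Hence $\mathcal{A}_D$ is equivalent to the trivial one-state zero \MWA if and only if the signed path sum vanishes, and the whole construction is computable in logspace.

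The step requiring the most care is establishing the \GapL{} bound on $g(\mathcal{A})$. While the tensor square $N$, the prefix sum $\sum_{\ell=0}^{n-1} N^\ell$, and the boundary weightings $(\alpha \otimes \alpha)$ and $(\beta \otimes \beta)$ are each easy to write down, one has to package the whole expression as a single iterated matrix multiplication instance in order to invoke its \GapL{}-completeness cleanly. I would do this by augmenting $N$ with a distinguished source and sink connected to the original states by weight-$(\alpha\otimes\alpha)$ entry edges and weight-$(\beta\otimes\beta)$ exit edges that may fire at any intermediate layer, so that $g(\mathcal{A})$ becomes one fixed entry of the $n$-th power of a single polynomial-size integer matrix.
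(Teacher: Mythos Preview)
Your proposal is correct on both halves, and the upper-bound argument is essentially the paper's: reduce to zero-testing, pass to the tensor square so that vanishing becomes the single equation $\sum_{\ell<n}\alpha_2^\top T_2^\ell\eta_2=0$ with $T_2=\sum_\sigma M^\sigma\otimes M^\sigma$, and recognise this as a \GapL computation. The only packaging difference is that the paper, instead of your augmented-matrix power trick, writes the equation as feasibility of a block-lower-triangular linear system whose coefficient matrix has full column rank by construction, so that the question becomes a rank-bound check on the augmented matrix---a known \CL-complete problem. One small slip: you call $N$ an \emph{integer} matrix, but \MWA{}s carry rational weights; you need to clear denominators in logspace first (the paper cites \cite[Section~2, Remark]{allender_complexity_1999} for this).

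Your hardness argument, on the other hand, is genuinely different and more direct. The paper obtains \CL-hardness of \MWA equivalence only as a by-product of its main development: it first shows (\cref{thm:pw-hardness}, via characteristic-polynomial verification, \cref{lem:pos-det}, \cref{thm:cycles}, and the Kneser-graph encoding of \cref{lem:uncolour}) that $\HomInd(\mathcal{F})$ is \CL-hard for a specific bounded-pathwidth class, and then invokes the reduction of \cref{thm:main1} from $\HomInd(\mathcal{F})$ to \MWA equivalence. Your route---observing that the vanishing of a fixed entry of an iterated integer-matrix product is already \CL-complete and encoding it as a unary \MWA over the levelled \textsc{Dag}---bypasses all of that machinery. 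The trade-off is that the paper's detour is what actually delivers its headline result about homomorphism indistinguishability, whereas your argument, while cleaner for \MWA equivalence in isolation, says nothing about $\HomInd$.
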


    Combined with the initially described reduction,
    this yields the following upper bound for the complexity of $\HomInd(\mathcal{F})$ for recognisable bounded-pathwidth graph classes $\mathcal{F}$.

    \begin{theorem}\label{thm:pw-rec}
        For every recognisable graph class $\mathcal{F}$ of bounded pathwidth,
        $\HomInd(\mathcal{F})\! \in~\!\!\CL$.
    \end{theorem}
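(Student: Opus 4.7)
The plan is to prove Theorem~\ref{thm:pw-rec} by composing two results already established earlier in the paper. By Theorem~\ref{thm:main1}, for every recognisable graph class $\mathcal{F}$ of bounded pathwidth, there is a logspace many-one reduction from $\HomInd(\mathcal{F})$ to \MWA equivalence. By Theorem~\ref{thm:main3}, \MWA equivalence lies in \CL. Since \CL is closed under logspace many-one reductions, the composition immediately yields $\HomInd(\mathcal{F}) \in \CL$.

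Unpacking the first step, the reduction promised by Theorem~\ref{thm:main1} takes input graphs $G$ and $H$ and outputs two \MWAs $\mathcal{A}_G$ and $\mathcal{A}_H$ over a fixed finite alphabet $\Sigma_k$ of path-decomposition operations, where $k$ bounds the pathwidth of the graphs in $\mathcal{F}$. The \MWAs are designed so that on every word $w \in \Sigma_k^*$ encoding a valid width-$k$ path decomposition of a graph $F_w$ one has $\mathcal{A}_G(w) = \hom(F_w, G)$ and $\mathcal{A}_H(w) = \hom(F_w, H)$. Taking the product with a deterministic word automaton recognising the (regular) language of decompositions of graphs in $\mathcal{F}$---which exists because $\mathcal{F}$ is recognisable and of bounded pathwidth---gives two \MWAs whose outputs agree on every input word iff $G$ and $H$ are homomorphism indistinguishable over $\mathcal{F}$.

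The main subtlety, which I expect to be the principal obstacle in a self-contained argument, is verifying that this construction can be performed in logspace rather than merely in polynomial time. This is feasible because, on each letter of $\Sigma_k$, the transition matrix of $\mathcal{A}_G$ (resp.\ $\mathcal{A}_H$) is a fixed expression---depending only on $k$ and the letter, not on the input---in the adjacency entries of $G$ (resp.\ $H$), so each entry can be written down using a short formula together with a pointer into the input. The automaton recognising decompositions of graphs in $\mathcal{F}$ is a constant-size object independent of the input and contributes no further space overhead. With the reduction thus implementable in logspace, Theorem~\ref{thm:main3} supplies the desired upper bound.
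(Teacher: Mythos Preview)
Your proposal is correct and follows exactly the paper's approach: the paper derives \cref{thm:pw-rec} directly by combining \cref{thm:main1} (the logspace many-one reduction to \MWA equivalence) with \cref{thm:main3}/\cref{lem:mwa-in-cl} (\MWA equivalence is in \CL), using closure of \CL under logspace many-one reductions. Your second and third paragraphs correctly recapitulate the content of \cref{thm:main1} and its proof, so they are redundant given that you already invoke that theorem, but nothing in them is wrong.
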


    Since \CL is a subclass of $\NCtwo$,
    \cref{thm:pw-rec} implies that homomorphism indistinguishability over every recognisable bounded-pathwidth graph class can be decided on a parallel computer with polynomially many processors in time $O(\log^2n)$.
    Thereby, the theorem establishes a trade-off between distinguishing power and computational complexity. 
    While homomorphism indistinguishability over bounded-pathwidth graph classes is provably weaker than homomorphism indistinguishability over bounded-treewidth graph classes \cite[Theorem~6.4.6]{seppelt_homomorphism_2024}, it can be decided at a substantially lower computational cost.
    This is of particular interest for the graph learning community where message-passing graph neural networks are an often-employed paradigm, see \cite{grohe_logic_2021}. 
    With respect to distinguish power, these architectures correspond to the Weisfeiler--Leman algorithm \cite{xu_how_2019,morris_weisfeiler_2019} which cannot be efficiently parallelised unless $\PTIME = \textup{NC}$ \cite{grohe_equivalence_1999}.
    In contrast, our \cref{thm:pw-rec} provides a wealth of efficiently parallisable graph isomorphism relaxations.

    In order to get an intuition for the complexity class \CL as introduced in \cite{allender_relationships_1996},
    one may consider its complete problems such as the set of singular integer matrices \cite{allender_arithmetic_2004}
    or the machine characterisation \cite{allender_complexity_1999,allender_relationships_1996} asserting that a language $A \subseteq \{0,1\}^*$ is in \CL if, and only if, there exists a non-deterministic logspace Turing machine $M$ such that $x \in A$ iff the number of accepting paths equals the number of rejecting paths of $M$ on input $x$.
    Many linear-algebraic problems where shown to be complete for \CL \cite{hoang_complexity_2000,mahajan_complexity_2010}, see also \cite{allender_arithmetic_2004,hoang_complexity_2003,mahajan_determinant_1997}.
    We add to the list of such problems by showing that the upper bound in \cref{thm:pw-rec} is tight.

    \begin{theorem}[restate=thmRecogPath, label=thm:pw-hardness]
        There exists a \CMSO-definable graph class $\mathcal{F}$ of bounded pathwidth
        such that $\HomInd(\mathcal{F})$ is $\CL$-complete under logspace many-one reductions.
    \end{theorem}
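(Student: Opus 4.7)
The upper bound is given by \cref{thm:pw-rec}, so the task is to exhibit a $\CMSO$-definable bounded-pathwidth class $\mathcal{F}$ for which $\HomInd(\mathcal{F})$ is $\CL$-hard. I plan to reduce from \MWA equivalence over a fixed alphabet $\Sigma$, which is $\CL$-hard by \cref{thm:main3}. The overall strategy is to \emph{invert} the reduction from $\HomInd$ to \MWA equivalence that underlies \cref{thm:pw-rec}: instead of interpreting bounded-pathwidth homomorphism counts as an \MWA computation, I would construct graphs that realise any given \MWA as a homomorphism-count function on a carefully chosen class of ``labelled paths''.

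\textbf{The class $\mathcal{F}$.} Take $\Sigma = \{a, b\}$ together with small rooted gadgets $\Gamma_a, \Gamma_b$ of distinct constant sizes (e.g.\ pendant paths of lengths $2$ and $3$). For $w = \sigma_1 \cdots \sigma_n \in \Sigma^*$, define $F_w$ by attaching a copy of $\Gamma_{\sigma_i}$ at each internal vertex $v_i$ of a path spine $v_0 v_1 \cdots v_n$, and set $\mathcal{F} := \{F_w : w \in \Sigma^*\}$. All members of $\mathcal{F}$ have pathwidth bounded by a constant, and $\mathcal{F}$ is $\CMSO$-definable: it consists of trees admitting a unique decomposition into a spine plus pendant gadgets of one of two prescribed shapes, which is readily expressible by a $\CMSO$ formula.

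\textbf{The reduction.} Given \MWAs $\mathcal{A}, \mathcal{B}$ over $\Sigma$, decompose each integer matrix and endpoint vector via $X = X^+ - X^-$ into non-negative parts. Build ``sign-augmented'' graphs whose vertex sets are copies of the state space duplicated along a sign bit, where $\Gamma_\sigma$-gadget attachments encode transitions preserving or flipping the sign depending on whether they come from $X^+$ or $X^-$, and where initial/final vectors determine the boundary structure. Collecting the appropriate sign-classes into graphs $G$ and $H$ should yield
\[
\hom(F_w, G) - \hom(F_w, H) \;=\; c \cdot \bigl(\val_\mathcal{A}(w) - \val_\mathcal{B}(w)\bigr)
\]
for some positive constant $c$ independent of $w$. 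Thus $G$ and $H$ are homomorphism indistinguishable over $\mathcal{F}$ if and only if $\mathcal{A}$ and $\mathcal{B}$ are \MWA-equivalent. The construction fits in logspace, assuming \MWA entries are polynomially bounded, which may be enforced by unary encoding without loss of generality.

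\textbf{Main obstacle.} The principal technical difficulty is arranging the gadgets $\Gamma_\sigma$ so that homomorphisms $F_w \to G$ correspond bijectively, up to a controlled constant factor, with signed computation traces of $\mathcal{A}$ on $w$. This requires the gadgets to be rigid when rooted (no non-trivial automorphisms fixing the root) and mutually incomparable (no copy of $\Gamma_a$ maps homomorphically into $\Gamma_b$ or vice versa), so that the labelling $w$ is unambiguously recoverable from any homomorphism into the state graph. Pendant paths of distinct fixed lengths satisfy both properties by an elementary argument, but the bookkeeping for signs, endpoints, and the multiplicative constant $c$ has to be verified carefully in order to conclude that the resulting logspace mapping is indeed a valid many-one reduction.
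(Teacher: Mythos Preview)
Your proposal has a circularity problem. You invoke \cref{thm:main3} to get $\CL$-hardness of \MWA equivalence, but in the paper the hardness half of \cref{thm:main3} is \emph{derived from} \cref{thm:pw-hardness} (via \cref{thm:main1}). So as written you are assuming what you want to prove. You would need an independent $\CL$-hardness source for \MWA equivalence, and the paper does not provide one prior to this point.

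There are also technical issues with the sketch. First, pendant paths of distinct lengths are \emph{not} homomorphically incomparable: a rooted path of length $3$ folds onto a rooted path of length $2$, and conversely the shorter one embeds into the longer one. So your claimed rigidity/incomparability fails for the very gadgets you propose; you would need something like the Kneser gadgets of \cref{lem:uncolour}. Second, the assertion that one may pass to unary-encoded \MWA entries ``without loss of generality'' for $\CL$-hardness is unjustified; the known $\CL$-hard instances involve integers in binary, and restricting to polynomially bounded entries could well drop the complexity.

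The paper avoids all of this by taking a different and more concrete route: it starts from the $\CL$-complete problem \VCP (verifying the characteristic polynomial of an integer matrix), uses \cref{lem:pos-det} to make the entries non-negative, then encodes trace equalities $\tr(A^k)=\tr(B^k)$ as homomorphism counts from \emph{directed cycles} via a bit-gadget construction (\cref{thm:cycles}), and finally applies \cref{lem:uncolour} to land in an undirected, uncoloured, $\mathsf{MSO}_1$-definable bounded-pathwidth class. The key simplification is that cycles compute traces of matrix powers directly, so Newton's identities connect homomorphism indistinguishability over cycles to $\chi_A=\chi_B$ without any need to simulate a general \MWA inside a graph.
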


    \subsection*{Related Work}
    
    Our results provide a complexity-theoretic foundation 
    for the many-faceted connections between systems of equations and homomorphism indistinguishability. 
    Two graphs $G$ and $H$ are isomorphic if, and only if, there exists a permutation matrix $X$ such that $X A_G = A_H X$ where $A_G$ and $A_H$ denote the adjacency matrices of $G$ and $H$, respectively.
    When relaxing the constraints on the matrix $X$, one obtains characterisations of other homomorphism indistinguishability relations.
    For example, two graphs $G$ and $H$ are homomorphism indistinguishable over all paths iff there exists a pseudo-stochastic matrix $X$ satisfying the above equation \cite{dell_lovasz_2018}.
    Thus, homomorphism indistinguishability of simple graphs generalises a variety of relaxations of permutation-similarity for symmetric $\{0,1\}$-matrices. 
    Various relaxations of similarity for integer matrices have been shown \cite{hoang_complexity_2000,hoang_complexity_2003} to be complete for $\CL$ and $\LOGSPACE^{\CL}$. 
    Less is known about symmetric $\{0,1\}$-matrices, see \cref{sec:undirected}.
    
    The fixed-parameter complexity with respect to logarithmic space of homomorphism counting was studied in \cite{chen_fine_2015,haak_parameterised_2023}.
    However, this task is only tangentially related to homomorphism indistinguishability.
    Since the considered graph classes $\mathcal{F}$ are typically infinite,
    being able to count homomorphisms from $F \in \mathcal{F}$ does not a priori help to decide $\HomInd(\mathcal{F})$.
    In fact, the algorithms in \cref{thm:pw-rec} and \cite{seppelt_algorithmic_2024} compute linear-algebraic invariants rather than homomorphism counts.
    Conversely, it is not clear how to infer homomorphism counts using $\HomInd(\mathcal{F})$.

    Finally, 
    we would like to highlight two previous results on the complexity of homomorphism indistinguishability within polynomial time (both results are originally proven in model-theoretic terms but can be recast via \cite{dvorak_recognizing_2010,dell_lovasz_2018,grohe_counting_2020}):
    Firstly, Grohe~\cite{grohe_equivalence_1999} showed that, for every $k \geq 1$,
    homomorphism indistinguishability over all graphs of treewidth at most $k$ is complete for \PTIME under uniform \AC-reductions.
    Note that this does not have implications for other graph classes of bounded treewidth, e.g.\ the class of outerplanar graphs or the class in \cref{thm:recognisable-trees}.
    Secondly, Ra{\ss}mann,  Schindling, and Schweitzer
    \cite{rasmann_finite_2025}
    showed that, for every $k \geq 1$, homomorphism indistinguishability over all graphs of treedepth at most $k$
    can be decided in \LOGSPACE.
    Again, this does not have implications for other bounded-treedepth graph classes.
    Since bounded treedepth implies bounded pathwidth \cite{nesetril_tree-depth_2006},
    \Cref{thm:pw-rec} applies to a much wider class of graph classes while yielding containment in \CL rather than \LOGSPACE.

    \section{Preliminaries}

    The natural numbers are $\mathbb{N} = \{0,1,2,\dots\}$.
    In all computational tasks, rationals are encoded as fractions of binary integers.

    \subsection{Multiplicity automata}
    A~\emph{multiplicity word automaton} (\MWA, \cite{schutzenberger_definition_1961}) is a tuple $\Ac = (S, \Sigma, M, \alpha, \eta)$,
    where $S$ is a~finite set of states, $\Sigma$ is a~finite alphabet,
    $\alpha^\top \in \RatS$ is the \emph{initial vector}, and $\eta \in \RatS$ is the \emph{final vector}.
    The map $M \colon \Sigma \to \RatSS$ assigns to each letter $a \in \Sigma$ a~\emph{transition matrix}.
    For convenience, we extend $M$ from $\Sigma$ to all words $w = a_1 \cdots a_t \in \Sigma^*$ by defining
    $
        M(w) \coloneqq M(a_1) \cdots M(a_t), 
    $
    so that the empty word $\epsilon$ is mapped to the identity matrix $M(\epsilon) = I_S$.
    The automaton $\Ac$ \emph{recognises} the rational series $\sem{\Ac}\colon \Sigma^* \to \Rat$ given by $w \mapsto  \alpha^\top M(w) \eta$.

    A~\emph{finitary type} is a set $\Omega$ of symbols with 
    an~\emph{arity} map assigning to each symbol $\sigma \in \Omega$ a~natural number $|\sigma|$.
    For $n\in \mathbb N$, the set of all $n$-ary symbols in $\Omega$ is denoted by $\Omega_n$.
    A set of \emph{$\Omega$-trees} denoted by $T_\Omega$ is the smallest set such that $\Omega_0 \subseteq T_\Omega$,
    and if $n\ge 1$, symbol $\sigma\in \Omega_n$ and $t_1, \dots, t_n \in T_\Omega$ then element $\sigma(t_1, \dots, t_n)\in \Omega_n$.
    
    The \emph{Kronecker product} $M\otimes M'\in \Rat^{(S\times S') \times (R\times R')}$ 
    of matrices $M\!\in\Rat^{S \times R}$ and $M'\!~\!\in~\!\Rat^{S'\times R'}$
    is given by $(s s', rr') \mapsto M(s, r)\cdot M'(s',r').$
    The \emph{direct sum} $M\oplus M'\in \Rat^{(S\uplus S')\times (R\uplus R')}$ is given by the original entries of $M$ and $M'$, assigning $0$ to the remaining entries in $(R'\times S) \uplus (R\times S').$
    
    A \emph{multiplicity tree automaton} (\MTA, \cite{berstel_recognizable_1982}) is a~tuple $\Ac = (S, \Omega, \mu, \eta)$,
    where $S$ is a finite set of states, $\Omega$ is a~finitary type, 
    $\mu$ is 
    a \emph{tree representation}, i.e.\ a union of maps $M_n\colon \Omega_n \to \RatSnS$ for each arity $n$,
    and $\eta\in \RatS$ is the \emph{final vector}.
    For each symbol $\sigma \in \Omega_n$, the matrix $\mu(\sigma) = M_n(\sigma) \in \RatSnS$ is called the \emph{transition matrix}.
    We extend tree representation $\mu$ from $\Omega$ to all elements $\sigma(t_1, \dots, t_n) \in T_\Omega$ by defining
    \begin{align*}
        \mu(\sigma(t_1, \dots, t_n)) &\coloneqq \left(\mu(t_1) \otimes \cdots \otimes \mu(t_n)\right) \cdot \mu(\sigma). 
    \end{align*}
    
    The automaton $\Ac$ \emph{recognises} the series $\sem{\Ac}\colon T_\Omega \to \Rat$ given by
    $t \mapsto \mu(t) \cdot \eta$.
    Two \MTAs $\Ac$ and $\Ac'$ over $\Omega$ are \emph{equivalent} if $\sem{\Ac} = \sem{\Ac'}$.
    The decision problem of \MTA equivalence~\cite{marusic_complexity_2015} assumes rational entries given as fractions of binary integers.
    Note that \MWAs are a special case of \MTAs via $(S, \Omega_0 \cup \Sigma, M_0 \cup M, \eta)$,
    where $\Omega_0 = \{\sigma_0\}$ and $M_0\colon\{\sigma_0\} \to \{\alpha\}$.

    The following operations on rational word series correspond to operations on \MWAs.
    Let $\Ac' = (S', \Sigma, M', \alpha', \eta')$ be another \MWA over the same alphabet $\Sigma$.
    \begin{description}
        \item[Sum lifts to direct sum.] The \MWA $\Ac \oplus \Ac'$ with states $S \uplus S'$ is given by the initial vector $\alpha\oplus \alpha'$,
        transition matrix $M(a)\oplus M'(a)$ for each $a\in \Sigma$ and the final vector $\eta\oplus \eta'$.
        \MWA $\Ac \ominus \Ac'$ is given analogously except for the final vector $\eta\oplus (-\eta')$.
        It holds that $\sem{\Ac \oplus \Ac'}(w) = \sem{\Ac}(w) + \sem{\Ac'}(w)$ and $\sem{\Ac \ominus \Ac'}(w) = \sem{\Ac}(w) - \sem{\Ac'}(w)$ for each $w \in \Sigma^*$.
        \item[Zero series lifts to the zero automaton.] Assume $\Ac'$ has no states $S' = \emptyset$, hence the underlying vector space is of dimension 0. Then $\Ac'$ recognizes the zero series, that is $\sem{\Ac'}(w) = 0$ for each word $w \in \Sigma^*$. We call $\Ac'$ the \emph{zero automaton} over $\Sigma$.
        \item[Product lifts to Kronecker product.] The \MWA $\Ac \otimes \Ac'$ with states $S \times S'$ is given by the initial vector $\alpha\otimes \alpha'$, transition matrix $M(a)\otimes M'(a)$ for each $a\in \Sigma$ and the final vector $\eta\otimes \eta'$.
        It holds that $\sem{\Ac \otimes \Ac'}(w) = \sem{\Ac}(w) \cdot \sem{\Ac'}(w)$ for each word $w \in \Sigma^*$.
    \end{description}
     These operations naturally extend to \MTAs \cite{berstel_recognizable_1982, marusic_complexity_2015}.

	\subsection{Relational structures and logic}
	
	We assume familiarity with standard notions from finite model theory, see e.g.\ \cite{immerman_descriptive_1999}.
	All signatures, structures, and graphs are finite.
	Let $\tau$ denote a relational signature.
	A class $\mathcal{C}$ of $\tau$-structures is \emph{$\mathsf{L}$-definable} for some logic $\mathsf{L}$ if there exists a sentence $\phi \in \mathsf{L}$ such that a $\tau$-structure $A$ satisfies $\phi$ if, and only if, $A \in \mathcal{C}$.
    For some implicit relational signature $\tau$, we write $\mathsf{MSO}$ for monadic second-order logic over $\tau$.
    There are several options for encoding graphs as relational structures in the context of $\mathsf{MSO}$.
    The less powerful variant $\mathsf{MSO}_1$ encodes a graph $(V, E)$ as relational structure with universe $V$ and the edge relation.
    In the more powerful variant, $\mathsf{MSO}_2$ the graph is encoded as relational structure with universe $V \uplus E$ and the incidence relation.
    See \cite{courcelle_graph_2012} for further details.

    Let $A$ and $B$ be $\tau$-structures.
    A \emph{homomorphism} $h \colon A \to B$ is a map from the universe of $A$ to the universe of $B$ such that $h(R^A) \subseteq R^B$ for all relation symbols $R \in \tau$.
    For example, a homomorphism $h \colon F \to G$ between simple graphs $G$ and $H$ is a map $V(G) \to V(H)$ such that $h(uv) \in E(G)$ for all $uv \in E(F)$.
    We write $\hom(A, B)$ for the number of homomorphisms from $A$ to $B$.
    Two $\tau$-structures $A$ and $B$ are \emph{homomorphism indistinguishable} over a class of $\tau$-structures $\mathcal{F}$, in symbols $A \equiv_{\mathcal{F}} B$, if $\hom(F, A) = \hom(F, B)$ for all $F \in \mathcal{F}$.
    We will ultimately be interested in homomorphism indistinguishability of simple graphs over classes of simple graphs, which has been the main focus in the literature, see \cite[p.\ 32]{seppelt_homomorphism_2024}.

    \subsection{Logarithmic space and related complexity classes}

    Let \LOGSPACE denote the class of languages decided by a deterministic logarithmic-space Turing machine.
    See \cite{arora_computational_2009} for a definition of logspace many-one reductions.
    Let \SharpL denote the class of functions $f \colon \{0,1\}^* \to \mathbb{N}$ that count the number of accepting paths of a non-deterministic logarithmic-space Turing machine \cite{alvarez_very_1993}.
    Let \GapL denote the class of functions of the form $f - g$ for some functions $f, g \in \SharpL$ \cite{allender_relationships_1996}.
    Finally, let \CL denote the class of languages of the form $\{x \in \{0,1\}^* \mid f(x) = 0\}$ for some function $f \in \GapL$ \cite{allender_relationships_1996}.
    It was shown by \textcite{damm_det_1991,vinay_counting_1991,toda_counting_1984} 
    that \GapL coincides with the class of functions which are \LOGSPACE-many-one reducible to the determinant, see \cite{allender_complexity_1999,mahajan_determinant_1997}.
    A complete problem for $\CL$ is the set of singular integer matrices, cf.\ \cite{allender_arithmetic_2004}.
    To get a feeling for \CL, we make the following observation in full detail.
    
    \begin{observation}\label{obs:paths}
        Homomorphism indistinguishability over directed cycles is in \CL.
    \end{observation}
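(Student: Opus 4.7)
The plan is to reduce homomorphism indistinguishability over directed cycles to equality of characteristic polynomials of two padded integer matrices, and then to place this reduced problem in \CL using standard closure properties of \GapL.

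For the first step, recall the classical identity $\hom(C_k, G) = \tr(A_G^k)$, where $A_G$ denotes the adjacency matrix of $G$ and $C_k$ is the directed cycle of length $k$. Hence $G$ and $H$ are homomorphism indistinguishable over directed cycles iff $\tr(A_G^k) = \tr(A_H^k)$ for every $k \geq 1$. Since $\tr(A^k)$ is the $k$-th power sum of the eigenvalues of $A$, Newton's identities over $\mathbb{Q}$ imply this is equivalent to $A_G$ and $A_H$ sharing the same multiset of nonzero eigenvalues. To turn this into a single polynomial identity, I pad both matrices to the common size $N = n_G + n_H$ by adjoining zero blocks, setting $M_G \coloneqq A_G \oplus 0_{n_H}$ and $M_H \coloneqq A_H \oplus 0_{n_G}$. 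Since padding only alters the multiplicity of the eigenvalue zero, a direct check shows $M_G$ and $M_H$ have identical spectra iff $A_G$ and $A_H$ have identical multisets of nonzero eigenvalues. Hence the problem reduces to testing $\chi_{M_G}(x) = \chi_{M_H}(x)$.

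For the second step, recall that the $k$-th coefficient of the characteristic polynomial of an integer matrix is, up to sign, the sum of its $(N-k)\times(N-k)$ principal minors; each such minor is the determinant of an integer submatrix and thus a \GapL-function of the input, and sums of polynomially many \GapL-functions remain in \GapL. Hence each coefficient difference $d_k \coloneqq [\chi_{M_G}]_k - [\chi_{M_H}]_k$ is a \GapL-function of the input graphs. Invoking the closure of \GapL under products, which follows from $\det(A)\cdot\det(B) = \det(A \oplus B)$, the squares $d_k^2$ and the sum $D \coloneqq \sum_{k=0}^{N} d_k^2$ are again \GapL-functions. Since $D$ vanishes iff $\chi_{M_G} = \chi_{M_H}$, the definition of \CL immediately yields a \CL algorithm.

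The main delicate point is combining the $N+1$ scalar coefficient equalities into a single \CL decision; running $N+1$ independent \CL checks would only place the problem in $\LOGSPACE^{\CL}$. Squaring and summing within \GapL circumvents this, and the crux is the closure of \GapL under both sums and products, which is standard but essential to the argument landing in \CL itself.
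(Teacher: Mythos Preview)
Your overall strategy---pad the adjacency matrices to a common size, reduce to equality of characteristic polynomials, then combine the coefficient-wise checks into a single \GapL sum of squares---is sound, and the padding argument handling different vertex counts is a nice touch. However, there is a genuine gap in your justification that each coefficient $[\chi_{M_G}]_k$ lies in \GapL. You write that the $k$-th coefficient is, up to sign, the sum of all $(N-k)\times(N-k)$ principal minors, and then invoke closure of \GapL under sums of \emph{polynomially many} \GapL-functions. But there are $\binom{N}{N-k}$ such minors, which is exponential in $N$ for $k$ near $N/2$, so the closure property you cite does not apply. The conclusion is still true---it is a standard fact (e.g.\ via the Mahajan--Vinay combinatorial formula or Berkowitz's algorithm) that each coefficient of the characteristic polynomial of an integer matrix is \GapL-computable---but your argument as written does not establish it.

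The paper's proof sidesteps this issue entirely by working with traces rather than coefficients: it observes directly that $(G,k)\mapsto\hom(\vec{C}_k,G)=\tr(A_G^k)$ is in \SharpL via an explicit walk-guessing nondeterministic logspace machine, takes differences to land in \GapL, and then sums squares over $0\le k\le n=\max(|V(G)|,|V(H)|)$. Newton's identities are invoked only at the end, to argue that equality for $k\le n$ implies equality for all $k$. This route is shorter precisely because the trace $\tr(A^k)$ is manifestly a \SharpL count, whereas the elementary symmetric functions (characteristic polynomial coefficients) are not obviously so. As a secondary remark, the paper includes the degenerate $\vec{C}_0$ (a single vertex) in the class of directed cycles, so homomorphism indistinguishability also forces $|V(G)|=|V(H)|$; your reduction via $\chi_{M_G}=\chi_{M_H}$ does not detect this, though it is of course trivial to add.
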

    \begin{proof}
        Slightly abusing notation, we write $\vec{C}_0$ for the edge-less one-vertex graph.
        The cycle $\vec{C}_k$ for $k \geq 1$ has $k$ vertices and $k$ directed edges, i.e.\ $\vec{C}_1$ is a single vertex with a loop.
        
        First, note that the function $f \colon (G, k) \mapsto \hom(\vec{C}_k, G)$ is in $\SharpL$.
        Here, $G$ is a directed graph and $k \geq 0$.
        A non-deterministic logspace Turing machine operates whose number of accepting paths is $\hom(\vec{C}_k, G)$
        operates by guessing a vertex $v_1 \in V(G)$
        and, for $2 \leq i \leq k$, guesses a vertex $v_i \in V(G)$ and rejects if $v_{k}v_1 \not\in E(G)$.
        Finally, it rejects if $v_{i-1}v_0 \not\in E(G)$ and accepts otherwise.
        Clearly, the machine accepts iff $v_1 \dots v_k$ is the homomorphic image of $\vec{C}_k$
        Since it suffices to store indices of three vertices vertices, i.e.\ $v_1$, $v_{i-1}$, and $v_i$, 
        the space requirement is $3 \log(|V(G)|) + \log(k)$.

        As the difference of two $\SharpL$-functions \cite[Proposition~2]{allender_relationships_1996},
        the function $p \colon (G, H, k) \mapsto \hom(\vec{C}_k, G) - \hom(\vec{C}_k, H)$ is in $\GapL$.
        By \cite[Theorem~9]{allender_relationships_1996},
        the function $q \colon (G, H) \mapsto \sum_{k=0}^{n} p(G,H,k)^2$ is in $\GapL$ where $n \coloneqq \max\{ |V(G)|, |V(H)|\}$.
        It holds that $q(G, H) = 0$ if, and only if, $\hom(\vec{C}_k, G) = \hom(\vec{C}_k, H)$ for all $0 \leq k \leq n$.
        By Newton's identities \cite[2.4.P10]{horn_matrix_2010}, the latter holds if, and only if, $G$ and $H$ are homomorphism indistinguishable over  directed cycles $\vec{C}_k$ for arbitrary length $k \geq 0$.
        Thus, the claim follows by the definition of $\CL$ \cite[Definition~2]{allender_relationships_1996}.
    \end{proof}

    \subsection{Labelled graphs and homomorphism tensors}

    We recall some of the notation from \cite{seppelt_algorithmic_2024}.
    Let $k \geq 1$. 
	A \emph{distinctly $k$-labelled graph} is a tuple $\boldsymbol{F} = (F, \boldsymbol{u})$ where $F$ is a graph and $\boldsymbol{u} \in V(F)^k$ is such that $u_i \neq u_j$ for all $1 \leq i < j \leq k$.
	We say $u_i \in V(F)$, the $i$-th entry of $\boldsymbol{u}$, \emph{carries the $i$-th label}.
	Write $\mathcal{D}(k)$ for the class of distinctly $k$-labelled graphs.

    Let $k, \ell \geq 1$.
	A \emph{distinctly $(k,\ell)$-bilabelled graph} is a tuple $\boldsymbol{F} = (F, \boldsymbol{u}, \boldsymbol{v})$ where $F$ is a graph and $\boldsymbol{u} \in V(F)^k$ and $\boldsymbol{v} \in V(F)^\ell$ are such that $u_i \neq u_j$ for all $1 \leq i < j \leq k$ and $v_i \neq v_j$ for all $1 \leq i < j \leq \ell$.
	Note that $\boldsymbol{u}$ and $\boldsymbol{v}$ might share entries.
	We say $u_i \in V(F)$ and $v_i \in V(F)$ \emph{carry the $i$-th in-label} and \emph{out-label}, respectively.
	Write $\mathcal{D}(k,\ell)$ for the class of distinctly $(k,\ell)$-bilabelled graphs.
	
	For a graph $G$, and $\boldsymbol{F} = (F, \boldsymbol{u}) \in \mathcal{D}(k)$ define the \emph{homomorphism tensor}~$\boldsymbol{F}_G \in \mathbb{N}^{V(G)^k}$ of $\boldsymbol{F}$ w.r.t.\ $G$ 
	whose $\boldsymbol{v}$-th entry is equal to the number of homomorphisms $h \colon F \to G$ such that $h(u_i) = v_i$ for all $i \in [k]$.
	Analogously, for $\boldsymbol{F} \in \mathcal{D}(k,\ell)$, define $\boldsymbol{F}_G \in \mathbb{N}^{V(G)^k \times V(G)^\ell}$.
	
	As observed in \cite{mancinska_quantum_2020,grohe_homomorphism_2025}, 
	(bi)labelled graphs and their homomorphism tensors are intriguing due to the following correspondences between combinatorial operations on the former and algebraic operations on the latter: \label{operations}
	\begin{description}
		\item[Dropping labels corresponds to sum-of-entries.] For $\boldsymbol{F} = (F, \boldsymbol{u}) \in \mathcal{D}(k)$, define $\soe(\boldsymbol{F}) \allowbreak \coloneqq F$ as the underlying unlabelled graph of~$\boldsymbol{F}$.
		Then for all graphs $G$, $\hom(\soe \boldsymbol{F}, G) = \sum_{\boldsymbol{v} \in V(G)^k} \boldsymbol{F}_G(\boldsymbol{v}) \eqqcolon \soe(\boldsymbol{F}_G)$.
		\item[Gluing corresponds to Schur products.]
		For $\boldsymbol{F} = (F, \boldsymbol{u})$ and $\boldsymbol{F}' = (F', \boldsymbol{u}')$ in $\mathcal{D}(k)$,
		define $\boldsymbol{F} \odot \boldsymbol{F}' \in \mathcal{D}(k)$ as the $k$-labelled graph obtained by taking the disjoint union of $F$ and $F'$ and placing the $i$-th label at the vertex obtained by merging $u_i$ with $u'_i$ for all $i \in [k]$.
		Then for every graph~$G$ and $\boldsymbol{v} \in V(G)^k$, $(\boldsymbol{F} \odot \boldsymbol{F}')_G(\boldsymbol{v}) = \boldsymbol{F}_G(\boldsymbol{v}) \boldsymbol{F}'_G(\boldsymbol{v}) \eqqcolon (\boldsymbol{F}_G \odot \boldsymbol{F}'_G)(\boldsymbol{v})$.
		One may similarly define the gluing product of two $(k,\ell)$-bilabelled graphs.
		\item[Series composition corresponds to matrix products.]
		For bilabelled graphs $\boldsymbol{K} = (K, \boldsymbol{u}, \boldsymbol{v})$ and $\boldsymbol{K}' = (K', \boldsymbol{u}', \boldsymbol{v}')$ in $\mathcal{D}(k,k)$,
		define $\boldsymbol{K} \cdot \boldsymbol{K}' \in \mathcal{D}(k,k)$
		as the bilabelled graph obtained by taking the disjoint union of $K$ and $K'$,
		merging the vertices $v_i$ and $u'_i$ for $i \in [k]$, 
		and placing the $i$-th in-label (out-label) on $u_i$ (on $v'_i$) for $i \in [k]$.
		Then for all graphs $G$ and $\boldsymbol{x}, \boldsymbol{z} \in V(G)^k$,
		\(
		(\boldsymbol{K} \cdot \boldsymbol{K}')_G(\boldsymbol{x}, \boldsymbol{z})
		=\sum_{\boldsymbol{y} \in V(G)^k} \boldsymbol{K}_G(\boldsymbol{x}, \boldsymbol{y}) \boldsymbol{K}'_G(\boldsymbol{y}, \boldsymbol{z})
		\eqqcolon (\boldsymbol{K}_G \cdot \boldsymbol{K}'_G)(\boldsymbol{x}, \boldsymbol{z}).
		\)
		One may similarly compose a graph in $\mathcal{D}(k,k)$ with a graph in $\mathcal{D}(k)$ obtaining one in $\mathcal{D}(k)$.
		This operation corresponds to the matrix-vector product.
	\end{description}

	\section{From homomorphism indistinguishability to multiplicity automata equivalence}

    In this section, we reduce homomorphism indistinguishability over recognisable graph classes of bounded treewidth to multiplicity automata equivalence.

    \begin{theorem}[label=thm:main1]
        For $k \in \mathbb{N}$,
        let $\mathcal{F}$ be a $k$-recognisable graph class.
        \begin{enumerate}
            \item If $\mathcal{F}$ has treewidth $< k$,
             $\HomInd(\mathcal{F})$ logspace many-one reduces to \MTA equivalence.
            \item If $\mathcal{F}$ has pathwidth $< k$,
             $\HomInd(\mathcal{F})$ logspace many-one reduces to \MWA equivalence.
        \end{enumerate} 
    \end{theorem}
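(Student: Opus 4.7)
The plan is to construct, for each input graph $G$, an \MTA $\mathcal{A}_G$ over a fixed finitary type $\Omega$ (depending only on $k$) such that $\sem{\mathcal{A}_G}(t) = \hom(\soe(F_t), G)$ whenever $F_t \in \mathcal{F}$ and $\sem{\mathcal{A}_G}(t) = 0$ otherwise, where $F_t$ is the distinctly $k$-labelled graph represented by the $\Omega$-tree $t$. The analogous construction produces $\mathcal{A}_H$ from $H$, and the reduction outputs the difference \MTA $\mathcal{A}_G \ominus \mathcal{A}_H$. Since every graph of treewidth $<k$ is of the form $\soe(F_t)$ for at least one $\Omega$-tree $t$, the identity $\sem{\mathcal{A}_G} = \sem{\mathcal{A}_H}$ is equivalent to $G \equiv_\mathcal{F} H$.

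The finitary type $\Omega$ is chosen as a standard collection of width-$k$ constructors generating all distinctly $k$-labelled graphs of treewidth $<k$: arity-$0$ symbols for atomic single-edge and single-vertex graphs, and unary or binary symbols implementing relabelling, adding or forgetting a label, and gluing of two distinctly $k$-labelled graphs. For the pathwidth case, all constructors can be taken unary, so $\Omega$-trees collapse into words over an alphabet $\Sigma$ and the \MTAs become \MWAs; the rest of the argument is \emph{verbatim} the same.

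The automaton $\mathcal{A}_G$ is constructed as a Kronecker product $\mathcal{A}_G^{\hom} \otimes \mathcal{B}$. The first factor has state space indexed by $V(G)^k$; the tensor it assigns to a tree $t$ is exactly the homomorphism tensor $(F_t)_G$. The transition matrix attached to each $\sigma \in \Omega$ implements the algebraic counterpart of the combinatorial operation coded by $\sigma$ via the correspondences recalled on page~\pageref{operations}: atomic constructors give explicit $\{0,1\}$-tensors, gluing gives Schur products, relabelling permutes the coordinates of $V(G)^k$, and forgetting a label sums over that coordinate. Picking the all-ones final vector guarantees $\sem{\mathcal{A}_G^{\hom}}(t) = \soe((F_t)_G) = \hom(\soe(F_t), G)$. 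The second factor $\mathcal{B}$ is a deterministic bottom-up tree automaton accepting exactly those $\Omega$-trees $t$ with $F_t \in \mathcal{F}$; such a $\mathcal{B}$ exists by $k$-recognisability of $\mathcal{F}$ (this is precisely the finite-index congruence definition) and depends only on $\mathcal{F}$ and $k$, not on $G$. Viewing $\mathcal{B}$ as a $\{0,1\}$-valued \MTA and Kroneckering with $\mathcal{A}_G^{\hom}$ yields $\mathcal{A}_G$ with the desired semantics.

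The main obstacle is verifying \emph{logspace uniformity} of the reduction: the transition tables of $\mathcal{B}$ are hard-wired, but $\mathcal{A}_G^{\hom}$ has transition matrices of size $|V(G)|^{2k}$, each of which must be written down from $G$ within $O(\log |V(G)|)$ workspace. Entry by entry, this reduces to checking whether some fixed, constant-size pattern of edges holds between a constant-size tuple of vertices of $G$, which is straightforward in logspace. Assembling the two factors, forming the Kronecker product, and outputting $\mathcal{A}_G \ominus \mathcal{A}_H$ are then all logspace-computable, completing the reduction.
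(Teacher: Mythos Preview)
Your proposal is correct and follows essentially the same approach as the paper: both build a homomorphism-counting automaton on state space $V(G)^k$ whose transition matrices are the homomorphism tensors of the generators, take its Kronecker product with a fixed deterministic automaton for $\mathcal{F}$ furnished by $k$-recognisability, and test equivalence of the resulting automata for $G$ and $H$. One small caveat you gloss over is that distinctly $k$-labelled graphs have at least $k$ vertices, so graphs in $\mathcal{F}$ on fewer than $k$ vertices are not of the form $\soe(F_t)$ and must be handled separately by a brute-force logspace check, as the paper does.
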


    In order to prove the theorem, we give a formal definition of recognisability, see also \cite{courcelle_monadic_1990}.
	
	\begin{definition}[\cite{bojanczyk_definability_2016}] \label{def:recog}
		Let $k \geq 1$.
		For a class of unlabelled graphs $\mathcal{F}$, define the equivalence relation~$\sim_{\mathcal{F}}^k$ on the class of distinctly $k$-labelled graphs~$\mathcal{D}(k)$ by
		letting $\boldsymbol{F}_1 \sim_{\mathcal{F}}^k \boldsymbol{F}_2$ if, and only if, for all $\boldsymbol{K} \in \mathcal{D}(k)$, it holds that
		\[
		\soe (\boldsymbol{K} \odot \boldsymbol{F}_1) \in \mathcal{F} \iff \soe (\boldsymbol{K} \odot \boldsymbol{F}_2) \in \mathcal{F}.
		\]
		The class $\mathcal{F}$ is \emph{$k$-recognisable} if $\sim_{\mathcal{F}}^k$
		has finitely many equivalence classes.
\end{definition}

    To parse \cref{def:recog}, first recall that $\boldsymbol{K} \odot \boldsymbol{F}_1$ is the $k$-labelled graph obtained by gluing $\boldsymbol{K}$ and $\boldsymbol{F}_1$ together at their labelled vertices. 
	The $\soe$-operator drops the labels yielding unlabelled graphs.
	Intuitively, $\boldsymbol{F}_1 \sim^k_{\mathcal{F}} \boldsymbol{F}_2$ 
	iff both or neither of their underlying unlabelled graphs are in $\mathcal{F}$ 
	and the positions of the labels in $\boldsymbol{F}_1$ and $\boldsymbol{F}_2$ 
	is equivalent with respect to membership in $\mathcal{F}$.
    See \cite{seppelt_algorithmic_2024} for examples.

    Courcelle~\cite{courcelle_monadic_1990} proved that every \CMSO-definable graph class is \emph{recognisable}, 
	i.e.\ it is $k$-recognisable for every $k \in \mathbb{N}$.
	Conversely, Boja\'nczyk and Pilipczuk \cite{bojanczyk_definability_2016} proved that if a recognisable class $\mathcal{F}$ has bounded treewidth, then it is \CMSO-definable.
	Furthermore, they conjecture that $k$-recognisability is a sufficient condition for a graph class of treewidth at most $k-1$ to be \CMSO-definable.
    We use recognisability via the following lemma.

     \begin{lemma}[{\cite[Lemma~16]{seppelt_algorithmic_2024}}] \label{lem:fefvau}
		For $\boldsymbol{F}, \boldsymbol{F}', \boldsymbol{F}_1, \boldsymbol{F}_2, \boldsymbol{F}'_1, \boldsymbol{F}'_2 \in \mathcal{D}(k)$, $\boldsymbol{L} \in \mathcal{D}(k,k)$,
		\begin{enumerate}
			\item if $\boldsymbol{F}_1 \sim^k_{\mathcal{F}} \boldsymbol{F}'_1$ and $\boldsymbol{F}_2 \sim^k_{\mathcal{F}} \boldsymbol{F}'_2$ 
			then $\boldsymbol{F}_1 \odot \boldsymbol{F}_2 \sim^k_{\mathcal{F}} \boldsymbol{F}'_1 \odot \boldsymbol{F}'_2$,
			\item if $\boldsymbol{F} \sim^k_{\mathcal{F}} \boldsymbol{F}'$ then $\boldsymbol{L} \cdot \boldsymbol{F} \sim^k_{\mathcal{F}} \boldsymbol{L} \cdot \boldsymbol{F}'$.
		\end{enumerate}
	\end{lemma}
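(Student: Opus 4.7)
The plan is to prove both statements by unfolding the definition of $\sim^k_{\mathcal{F}}$ and reducing to a single application of the hypothesis via an appropriately chosen test graph $\boldsymbol{K}$. The key algebraic facts that I will exploit are that gluing $\odot$ on $\mathcal{D}(k)$ is associative and commutative (up to isomorphism of labelled graphs), and that series composition $\cdot$ distributes over gluing in the sense that $\boldsymbol{K} \odot (\boldsymbol{L} \cdot \boldsymbol{F}) \cong (\boldsymbol{K} \odot_{\text{out}} \boldsymbol{L}) \odot \boldsymbol{F}$, where $\boldsymbol{K} \odot_{\text{out}} \boldsymbol{L}$ denotes the $k$-labelled graph obtained by gluing the labels of $\boldsymbol{K}$ to the in-labels of $\boldsymbol{L}$ and retaining the out-labels of $\boldsymbol{L}$. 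These correspondences follow directly from the combinatorial definitions of $\odot$ and $\cdot$ recalled on page~\pageref{operations}.

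For part (1), I will fix an arbitrary test graph $\boldsymbol{K} \in \mathcal{D}(k)$ and show $\soe(\boldsymbol{K} \odot \boldsymbol{F}_1 \odot \boldsymbol{F}_2) \in \mathcal{F} \iff \soe(\boldsymbol{K} \odot \boldsymbol{F}_1' \odot \boldsymbol{F}_2') \in \mathcal{F}$. By associativity of $\odot$, the graph $\boldsymbol{K} \odot \boldsymbol{F}_2$ is again an element of $\mathcal{D}(k)$, so applying the hypothesis $\boldsymbol{F}_1 \sim^k_{\mathcal{F}} \boldsymbol{F}_1'$ with this as context yields the first equivalence $\soe(\boldsymbol{K} \odot \boldsymbol{F}_2 \odot \boldsymbol{F}_1) \in \mathcal{F} \iff \soe(\boldsymbol{K} \odot \boldsymbol{F}_2 \odot \boldsymbol{F}_1') \in \mathcal{F}$. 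Analogously, using $\boldsymbol{K} \odot \boldsymbol{F}_1' \in \mathcal{D}(k)$ as context and $\boldsymbol{F}_2 \sim^k_{\mathcal{F}} \boldsymbol{F}_2'$ gives $\soe(\boldsymbol{K} \odot \boldsymbol{F}_1' \odot \boldsymbol{F}_2) \in \mathcal{F} \iff \soe(\boldsymbol{K} \odot \boldsymbol{F}_1' \odot \boldsymbol{F}_2') \in \mathcal{F}$. Chaining these two biconditionals and invoking commutativity of $\odot$ yields the desired conclusion.

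For part (2), I again fix $\boldsymbol{K} \in \mathcal{D}(k)$ and define an auxiliary test graph $\boldsymbol{K}' \in \mathcal{D}(k)$ obtained from $\boldsymbol{K}$ and $\boldsymbol{L} = (L, \boldsymbol{u}, \boldsymbol{v})$ by taking the disjoint union, identifying the $i$-th label of $\boldsymbol{K}$ with $u_i$, and placing the $i$-th label of $\boldsymbol{K}'$ on $v_i$. The labels of $\boldsymbol{K}'$ are distinct because the out-labels of $\boldsymbol{L}$ are, so $\boldsymbol{K}'$ is well-defined. The distributivity identity above then gives $\soe(\boldsymbol{K} \odot (\boldsymbol{L} \cdot \boldsymbol{F})) = \soe(\boldsymbol{K}' \odot \boldsymbol{F})$ and symmetrically for $\boldsymbol{F}'$. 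Applying $\boldsymbol{F} \sim^k_{\mathcal{F}} \boldsymbol{F}'$ with context $\boldsymbol{K}'$ concludes the argument.

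The technical step that will require some care is verifying the distributivity identity $\boldsymbol{K} \odot (\boldsymbol{L} \cdot \boldsymbol{F}) \cong \boldsymbol{K}' \odot \boldsymbol{F}$ on the nose, since both sides must be shown to be isomorphic as $k$-labelled graphs and not merely to induce the same homomorphism tensors. This reduces to chasing which vertex identifications are performed in which order, and the equivalence follows because gluing is defined purely via identification of labelled vertices in a disjoint union, an operation that is plainly associative regardless of whether one first merges $\boldsymbol{K}$ with $\boldsymbol{L}$ along in-labels or first merges $\boldsymbol{L}$ with $\boldsymbol{F}$ along out-labels. Once this is set up, the rest of the proof is a short chain of equivalences.
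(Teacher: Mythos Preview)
The paper does not actually prove this lemma; it is imported verbatim from \cite[Lemma~16]{seppelt_algorithmic_2024} and only used as a black box. So there is no proof here to compare against. That said, your argument is the natural one and is essentially correct.

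One minor point deserves tightening. In your final paragraph you insist that $\boldsymbol{K} \odot (\boldsymbol{L} \cdot \boldsymbol{F}) \cong \boldsymbol{K}' \odot \boldsymbol{F}$ must hold ``on the nose'' as $k$-labelled graphs. In general it does not: on the left the labels sit on the merged $u_i$-vertices (the in-labels of $\boldsymbol{L}$), whereas on the right the labels of $\boldsymbol{K}'$ are the out-labels $v_i$, so after gluing with $\boldsymbol{F}$ the labels sit on the merged $v_i$-vertices. Fortunately you only ever need the \emph{unlabelled} identity $\soe(\boldsymbol{K} \odot (\boldsymbol{L} \cdot \boldsymbol{F})) = \soe(\boldsymbol{K}' \odot \boldsymbol{F})$, which is what you correctly invoke in the body of the argument for part~(2), and this does hold since both sides describe the same disjoint union with the same two rounds of vertex identifications. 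So simply drop the stronger labelled claim and the proof goes through.
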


    The proof of \cref{thm:main1}, formally conducted in \cref{app:proof-main1},
    works by constructing, given a fixed graph class $\mathcal{F}$ and input graphs $G$ and $H$ to $\HomInd(\mathcal{F})$,
    three \MWAs in the bounded-pathwidth case and three \MTAs in the bounded-treewidth case called $\mathcal{A}_{\mathcal{F}}$, $\mathcal{A}_G$, $\mathcal{A}_H$.
    All three automata read words/trees over the alphabet comprising distinctly $(k,k)$-bilabelled graphs representing a single bag of a path/tree decomposition.
    
    The first automaton $\mathcal{A}_{\mathcal{F}}$
    depends only on the graph class $\mathcal{F}$ 
    and does not make use of multiplicities, i.e.\ it is a deterministic word/tree automaton.
    Its role is to recognise the graph class $\mathcal{F}$ among all graphs of pathwidth/treewidth less than $k$.
    The construction of this automaton dates back to Courcelle \cite{courcelle_monadic_1990}.
    Its states are the equivalence classes of $\sim^k_{\mathcal{F}}$.
    \Cref{lem:fefvau} ensures that series and parallel composition, 
    i.e.\ the operations used to compose labelled graphs, 
    respect these equivalence classes.

    The automata $\mathcal{A}_G$ and $\mathcal{A}_H$ depend only on $G$ and $H$, respectively, and are both constructed in the same way. 
    Their role is to compute the homomorphism tensors of labelled graphs of bounded pathwidth/treewidth and thus make full use of multiplicities.
    To that end, they assign to a letter of the input alphabet, i.e.\ a bilabelled graph $\boldsymbol{L} \in \mathcal{D}(k,k)$, the homomorphism matrix $\boldsymbol{L}_G$, respectively $\boldsymbol{L}_H$, as its weight matrix.
    The correspondence between operations on bilabelled graphs and homomorphism tensors, see \cref{operations},
    ensures that numbers computed by the automata are homomorphism counts from graphs of bounded pathwidth/treewidth into $G$ and $H$.
    Note that $\mathcal{A}_G$ and $\mathcal{A}_H$  are equivalent if, and only if, $G$ and $H$ are homomorphism indistinguishable over \emph{all} graphs of pathwidth/treewidth less than $k$.

    The reduction is completed by testing the equivalence of the product automata 
    $\mathcal{A}_{\mathcal{F}} \otimes \mathcal{A}_G$
    and $\mathcal{A}_{\mathcal{F}} \otimes \mathcal{A}_H$.
    For a labelled graph $\boldsymbol{F} \in \mathcal{D}(k)$
    of bounded pathwidth/treewidth with underlying unlabelled graph $F$,
    it is 
    \begin{align*}
        \sem{ \mathcal{A}_{\mathcal{F}} \otimes \mathcal{A}_G} (\boldsymbol{F}) - \sem{ \mathcal{A}_{\mathcal{F}} \otimes \mathcal{A}_H} (\boldsymbol{F}) 
        &= \sem{ \mathcal{A}_{\mathcal{F}}} (\boldsymbol{F}) \cdot \left( \sem{\mathcal{A}_G} (\boldsymbol{F}) - \sem{\mathcal{A}_H} (\boldsymbol{F}) \right) \\
        &= \begin{cases}
            \hom(F, G) - \hom(F, H), & \text{if } F \in \mathcal{F}, \\
            0, & \text{otherwise}.
        \end{cases}
    \end{align*}
    Hence, $G$ and $H$ are homomorphism indistinguishable over $\mathcal{F}$
    if, and only if, the automata $\mathcal{A}_{\mathcal{F}} \otimes \mathcal{A}_G$
    and $\mathcal{A}_{\mathcal{F}} \otimes \mathcal{A}_H$ are equivalent.

	   \section{Complexity of multiplicity word automata equivalence}

    Equipped with the reduction of homomorphism indistinguishability over recognisable bounded-pathwidth graph classes to \MTA equivalence from \cref{thm:main1},
    we now proceed to pinpoint the complexity of the latter problem.
Towards \cref{thm:main3}, we first show containment in \CL, thus improving on \DET as shown in \cite{tzeng_path_1996}, see also \cite{kiefer_notes_2020}.

    \begin{lemma}\label{lem:mwa-in-cl}
        \MWA equivalence is in \CL.    
    \end{lemma}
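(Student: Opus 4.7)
The plan is to reduce \MWA equivalence to testing whether a single \GapL-computable integer vanishes, which is the defining property of \CL. Given input automata $\Ac_1 = (S_1, \Sigma, M_1, \alpha_1, \eta_1)$ and $\Ac_2 = (S_2, \Sigma, M_2, \alpha_2, \eta_2)$, I first form the difference automaton $\Ac \coloneqq \Ac_1 \ominus \Ac_2$ described in the preliminaries. It has $n \coloneqq |S_1|+|S_2|$ states, and from its initial vector $\alpha$, transition map $M$, and final vector $\eta$ we have $\sem{\Ac}(w) = \sem{\Ac_1}(w) - \sem{\Ac_2}(w)$ for every $w \in \Sigma^*$. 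Equivalence of $\Ac_1$ and $\Ac_2$ is therefore equivalent to $\sem{\Ac} \equiv 0$.

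Next, I invoke the standard subspace-stabilisation argument going back to Tzeng. The ascending chain of subspaces $V_k \coloneqq \mathrm{span}\{\alpha^\top M(w) : |w| \le k\} \subseteq \Rat^{S}$ stabilises as soon as $V_k = V_{k+1}$, and since $\dim V_k \le n$, stabilisation occurs no later than $k = n-1$. Hence $\sem{\Ac}\equiv 0$ if, and only if, $\alpha^\top M(w)\eta = 0$ for every $w$ with $|w|\le n-1$. Because rational squares are non-negative, this is equivalent to the single equation
\[
q \coloneqq \sum_{k=0}^{n-1}\sum_{w \in \Sigma^k}\bigl(\alpha^\top M(w)\eta\bigr)^2 = 0.
\]
Using the identity $(u^\top X v)^2 = (u\otimes u)^\top(X\otimes X)(v\otimes v)$ together with the multiplicativity of the Kronecker product on matrix products, and setting $N' \coloneqq \sum_{a\in\Sigma} M(a)\otimes M(a)$, this rewrites as
\[
q = (\alpha\otimes\alpha)^\top \Bigl(\sum_{k=0}^{n-1}(N')^k\Bigr)(\eta\otimes\eta).
\]

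Finally, I express $q$ as an entry of an iterated product of integer matrices. After scaling through by a common denominator of the binary fractions supplied in the input, I may assume $\alpha$, $\eta$ and every $M(a)$ have integer entries, and $q$ becomes an integer whose vanishing is the same condition. The truncated geometric sum $\sum_{k=0}^{n-1}(N')^k$ equals the upper-right block of the $n$-th power of the $2n^2 \times 2n^2$ block matrix $\left(\begin{smallmatrix} N' & I \\ 0 & I\end{smallmatrix}\right)$, so $q$ is a single entry of a product of $n$ polynomial-sized integer matrices. Iterated integer matrix product lies in \GapL; thus $q \in \GapL$, and testing $q = 0$ falls in \CL by definition. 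The construction is uniformly logspace-computable since every matrix and vector involved has polynomial dimension and is described by simple index arithmetic with entries read directly from the input via tensor-index splitting. The conceptually crucial point, and the only place where new ideas beyond Tzeng's are needed, is the Kronecker-square plus block-matrix trick that collapses the exponentially many linear constraints $\alpha^\top M(w)\eta = 0$ into a single \GapL query, thereby refining the classical \DET upper bound to \CL.
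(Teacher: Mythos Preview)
Your argument is correct and matches the paper's proof almost step for step: form the difference automaton, bound the relevant word length by subspace stabilisation, sum the squares, and rewrite via the Kronecker square to the single scalar $q = (\alpha\otimes\alpha)^\top\bigl(\sum_{k<n}(N')^k\bigr)(\eta\otimes\eta)$. The only divergence is in the final packaging: the paper encodes $q=0$ as a rank-bound instance on an explicit block-bidiagonal matrix and cites the \CL-completeness of rank verification, whereas you extract $q$ directly as a \GapL quantity via the block-matrix geometric-sum trick and iterated integer matrix product---both routes are standard and equivalent in strength.
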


     \begin{proof}
        The class \CL is closed under logspace many-one reductions as shown in \cite[Theorem~16]{allender_relationships_1996}.
        Thus, it suffices to reduce \MWA equivalence to a problem in \CL.
        We first reduce the equivalence of two \MWAs $\Ac$ and $\Ac'$  over $\Sigma$ to the equivalence of \MWA $\Ac \ominus \Ac'$ and the zero automaton over $\Sigma$. The operation $\ominus$ is clearly computable in logspace \cite[Proposition~2]{marusic_complexity_2015}.
        In the following \cref{clm:linear-characterisation}, the last problem further reduces in logspace to a matrix rank bound verification.
        We use matrices over $\Rat$, however, by  {\cite[Section 2, Remark]{allender_complexity_1999}}, this reduces in logspace to matrices over $\mathbb Z$.
        By {\cite[Proposition 2.5]{allender_complexity_1999}}, the matrix rank bound verification is \CL-complete.
        Thus, it remains to prove the following \cref{clm:linear-characterisation}.
        \begin{claim}\label{clm:linear-characterisation}
            There are logspace-computable functions $N$ and $r$ such that,
            for each \MWA~$\Ac$, it holds that $\Ac$ is equivalent to the zero automaton if, and only if, the matrix $N(\Ac)$ has rank less than $r(\Ac)$.
        \end{claim}
        
        Fix an \MWA $\Ac = (S, \Sigma, M, \alpha, \eta)$ with $n \coloneqq |S|$ states, and denote its square $\Ac \otimes \Ac$ by $\Ac_2 = (S_2, \Sigma, M_2, \alpha_2, \eta_2)$.
        By \cite[Proposition 2.2]{kiefer_notes_2020},
        The \MWA $\Ac$ is equivalent to the zero automaton if, and only if, 
        \begin{equation}
            \sum_{w \in \Sigma^{\le {n-1}}} \sem{\Ac_2}(w) = \sum_{w \in \Sigma^{\le {n-1}}} \left(\sem{\Ac}(w)\right)^2 = 0,\label{eq:sqsum}
        \end{equation}
        Let us denote the sum of all transition matrices by
        \begin{align*}
            T_2 \coloneqq \sum_{a \in \Sigma} M_2(a) = \sum_{a \in \Sigma} M(a)\otimes M(a).
        \end{align*}
        Since we can compute $\Ac_2$ and sum a constant number of matrices in logspace \cite{chiu_division_2010}, the matrix $T_2$ is computable in logspace.
        We rewrite the expression in \cref{eq:sqsum} as follows
        \begin{align}
            \sum_{w}\sem{\Ac_2}(w)
            &= \sum_{k=0}^{n-1}\;\sum_{a_1\cdots a_k} \alpha_2\, M_2(a_1\cdots a_k)\, \eta_2 \notag\\
            &= \sum_{k=0}^{n-1} \alpha_2\Big(\sum_{a_1\cdots a_k} M(a_1\cdots a_k)\otimes M(a_1\cdots a_k)\Big)\eta_2 \notag\\
            &= \sum_{k=0}^{n-1} \alpha_2\Big(\sum_{a_1\cdots a_k}\left(M(a_1)\otimes M(a_1)\right)\cdots \left(M(a_k)\otimes M(a_k)\right)\Big)\eta_2 \notag\\
            &= \sum_{k=0}^{n-1} \alpha_2\, T_2^{\,k}\, \eta_2,
            \label{eq:ksqsum}
        \end{align}
        where $a_1\cdots a_k$ ranges over words in $\Sigma^k$.
        In the third equality, we used the distributivity of the Kronecker product over matrix multiplication.
Next, we define a~matrix $A \in \Rat^{(n^3 + 1) \times n^3}$ and a~vector $b\in \Rat^{n^3 + 1}$ as follows
        \begin{align*}
            A \coloneqq
            \left(
                \begin{array}{ccccc}
                I_{n^2} & 0 & \cdots & \cdots & 0 \\
                - T_2 & I_{n^2} & \ddots &  & \vdots \\
                0 & -T_2 & \ddots & \ddots & \vdots \\
                \vdots & \ddots & \ddots & I_{n^2} & 0 \\
                0 & \cdots & 0 & -T_2 & I_{n^2}\\
                \alpha_2 & \cdots & \alpha_2 &  \alpha_2 & \alpha_2
                \end{array}
            \right),
            \qquad
            b \coloneqq
            \left(
            \begin{array}{c}
                \eta_2 \vphantom{T_{n^2}}\\
                0 \vphantom{T_{n^2}} \\
                \vdots \vphantom{0\ddots}\\
                0 \vphantom{T_{n^2}}\\
                0 \vphantom{T_{n^2}}\\
                0 \vphantom{\eta_{n^2}
            }
            \end{array}
            \right).
        \end{align*}
        We argue that there exists a solution $x\in \Rat^{n^3}$ of $Ax=b$ if, and only if, the term in  \cref{eq:ksqsum} is equal to zero.
        To that end, group the $n^3$ variables into $n$ blocks $y_i$ of $n^2$ variables.
        The first $n$ equations are $y_1 = \eta_2$.
        The subsequent equations yield $y_{i+1} = T_2 y_i$ for all $1 \leq i \leq n-1$.
        The final $n$ equations simplify to $0 = \alpha_2 y_1 + \dots + \alpha_2 y_n = \sum_{k=0}^{n-1} \alpha_2  T_2^k\eta_2$, as desired.

        We now rephrase the existence of the solution $x$ using matrix rank.
        Note that feasibility of linear systems of equations is complete for the potentially larger complexity class $\LOGSPACE^{\CL}$ \cite{allender_complexity_1999}. It is therefore crucial that the rank of $A$ is controlled.
        Since $A$ is lower triangular with identity matrices on the main diagonal, it has rank $n^3$.
        Note that the rank of the augmented matrix $[A| b]$ is $n^3$ if the solution $x$ exists, and $n^3+1$ otherwise.
        Finally, it suffices to set the functions to $N(\Ac) \coloneqq [A|b]$ and the bound on the rank to be verified to $r(\Ac) \coloneqq n^3 + 1$. Both functions can be computed in logspace.
    \end{proof}

    \section{Hardness of homomorphism indistinguishability over recognisable graph classes}
    
    In this section, we show that homomorphism indistinguishability over recognisable graph classes of bounded pathwidth and of bounded treewidth can be as hard as $\CL$ and as \PIT, respectively.
    
    \subsection{Bounded treewidth and  polynomial identity testing}

    Let us formally introduce \PIT. 
    An \emph{arithmetic circuit} \cite{allender_arithmetic_2004} is a directed acyclic graph whose vertices of in-degree zero are labelled by $0$, $1$, or by variables $X_1, \dots, X_\ell$.
	The internal vertices are labelled by $+$, $-$, or $\times$.
	The \emph{Polynomial Identity Testing (\PIT) problem} asks, given a polynomial $f \in \mathbb{Z}[X_1, \dots, X_\ell]$ represented by an arithmetic circuit, whether it is the zero polynomial.
	By the Schwartz--Zippel lemma \cite{schwartz_fast_1980,goos_probabilistic_1979}, \PIT can be solved in randomised polynomial time, i.e.\ lies in $\coRP$. 
	The existence of a deterministic polynomial-time algorithm for \PIT would have far-reaching consequences for circuit complexity \cite{kabanets_derandomizing_2003}, see also \cite{allender_arithmetic_2004}.
    We show that the same holds for $\HomInd(\mathcal{F})$ for some recognisable graph class $\mathcal{F}$ of bounded treewidth.
     
    \thmRecogTree*

    Given \cite{marusic_complexity_2015} and \cref{thm:main1},
    it remains to reduce \PIT to homomorphism indistinguishability.
    To that end, we start with a class of directed vertex-coloured graphs and employ observations made in \textcite{marusic_complexity_2015,allender_complexity_2009}.

    \begin{lemma}\label{lem:pit-to-hom}
        There exists a $\mathsf{MSO}$-definable class of directed vertex-coloured graphs $\mathcal{F}$ of treewidth $\leq 2$ with finitely many colours such that \PIT logspace many-one reduces to $\HomInd(\mathcal{F})$.
    \end{lemma}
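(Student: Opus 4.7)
The plan is to chain two reductions. By \cite{marusic_complexity_2015}, PIT is logspace many-one reducible to MTA equivalence over a fixed finite type $\Omega$, so it suffices to reduce the latter to $\HomInd(\mathcal{F})$ for an appropriate $\mathcal{F}$. Given an MTA $\Ac$ over $\Omega$ with rational entries, a standard sign-splitting together with denominator-clearing yields, in logspace, two MTAs $\Bc^+, \Bc^-$ with non-negative integer entries such that $\sem{\Ac} \equiv 0$ if and only if $\sem{\Bc^+} \equiv \sem{\Bc^-}$.

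I would then define $\mathcal{F}$ using finitely many colours: one per $\sigma\in \Omega$, one per position $i \in \{1, \dots, \max\{|\sigma| : \sigma \in \Omega\}\}$, plus a dedicated \emph{state} colour and a \emph{root} colour. An $\Omega$-tree $t$ is encoded as the directed vertex-coloured graph in which every node $v$ of $t$ with symbol $\sigma \in \Omega_n$ is replaced by a state-coloured vertex $x_v$ together with a $\sigma$-coloured vertex $\hat v$ joined by a directed edge $x_v \to \hat v$; for each child at position $i$ a position-$i$-coloured vertex $p_{v,i}$ is added with edges $\hat v \to p_{v,i} \to x_{v_i}$; and one root-coloured leaf is attached to $x_\rho$ at the root $\rho$ of $t$. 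The encodings are trees, hence of treewidth $1 \leq 2$, and $\mathcal{F}$ is $\mathsf{MSO}$-definable by local colour- and orientation-alternation constraints combined with the standard $\mathsf{MSO}$-expressible ``is a tree'' predicate.

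Mirroring the construction of $\mathcal{A}_G$ in \cref{thm:main1}, from a non-negative integer MTA $\Bc = (S, \Omega, \mu, \eta)$ I would construct $G_\Bc$ with state-coloured vertices $\{v_s : s \in S\}$; for every $\sigma \in \Omega_n$ and every state tuple $(s_1, \dots, s_n, s) \in S^{n+1}$, include $\mu(\sigma)(s_1, \dots, s_n; s)$ copies of a transition gadget, each consisting of a $\sigma$-coloured centre $w$ with position-$i$-coloured ports $p_i$ wired as $v_s \to w$, $w \to p_i$ and $p_i \to v_{s_i}$ for $i \in [n]$; and finally attach $\eta(s)$ root-coloured leaves to each $v_s$. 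The disjoint colour stratification and strict edge orientations force every homomorphism $F_t \to G_\Bc$ to factor uniquely as an assignment of states to the nodes of $t$ together with a choice of transition gadget at each node and a root marker at the root; by induction on $t$ this gives $\hom(F_t, G_\Bc) = \sum_s \mu(t)(s)\,\eta(s) = \sem{\Bc}(t)$. Setting $G := G_{\Bc^+}$ and $H := G_{\Bc^-}$ finishes the reduction, which is logspace-computable throughout.

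The main obstacle is engineering the encoding and the gadgets so that homomorphisms factor in exactly the intended way --- any spurious homomorphism linking unrelated parts of the pattern to the host would break the equality $\hom(F_t, G_\Bc) = \sem{\Bc}(t)$. The colour palette (state, symbol, position, root) together with the strict in/out-orientations of every edge acts as the safeguard: each pattern vertex can only map to a host vertex of identical colour, and the orientations force consistency between the state chosen at each $x_v$ and the state-tuple associated with the gadget chosen at $\hat v$. Verifying $\mathsf{MSO}$-definability of $\mathcal{F}$, the treewidth bound, and the logspace bound on the construction is then routine.
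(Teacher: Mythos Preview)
Your approach is genuinely different from the paper's. The paper does \emph{not} route through MTA equivalence here: it reduces \PIT (via \cite{allender_complexity_2009,marusic_complexity_2015}) to equality of two balanced variable-free circuits $C_1,C_2$ with alternating $+/\times$ layers and $\{0,1\}$ leaves, then encodes each circuit $C$ as a coloured graph $G(C)$ and builds a single family of pattern graphs $F_h$ for which an inductive calculation shows $\hom(\widehat{F_h},\widehat{G}(C))=\alpha(h)\cdot\val(C)$. The triangle in the $\times$-gadget is what makes the paper's class have treewidth~$2$; your tree-encoding of $\Omega$-terms would give treewidth~$1$, and your factorisation ``homomorphism $=$ state assignment $\times$ choice of transition copy $\times$ choice of root leaf'' is exactly the right picture.

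There is, however, a real gap. Your host graph $G_{\Bc}$ contains $\mu(\sigma)(s_1,\dots,s_n;s)$ parallel copies of each transition gadget and $\eta(s)$ root leaves at each $v_s$. This is a polynomial-size object, let alone a logspace-computable one, only if those multiplicities are polynomially bounded in the input. But MTA entries in \cite{marusic_complexity_2015} are rationals in binary, and after your denominator-clearing the resulting integers can be exponentially large; you neither argue that the specific MTAs produced by Proposition~13 of \cite{marusic_complexity_2015} have small entries, nor insert a binary-weight gadget to absorb large multiplicities (compare the bit-gadgets in the paper's \cref{thm:cycles}). The assertion that the reduction in \cite{marusic_complexity_2015} lands in MTA equivalence over a \emph{fixed} finite type $\Omega$ likewise needs justification, since your colour palette depends on $\Omega$. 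Both issues are repairable --- either by revisiting the \PIT$\to$MTA step so as to control the type and the entry magnitudes, or by adding suitable gadgets --- but as written the reduction is not logspace.
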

    \begin{proof}
        Following \cite{marusic_complexity_2015,allender_complexity_2009}, we reduce the following variant of \PIT to $\HomInd(\mathcal{F})$.
        A \emph{variable-free arithmetic circuit} is a finite acyclic vertex-labelled directed multigraph whose vertices have in-degree $0$ or $2$.
        The vertices of in-degree $2$ are called \emph{internal gates} and are labelled with $+$, $-$, or $\times$.
        The vertices of in-degree $0$ are labelled with $0$ or $1$.
        The unique vertex of out-degree $0$ is the output gate of the circuit.

        Each such circuit computes an integer with the intuitive semantics.
        By \cite[Proposition~2.2]{allender_complexity_2009} and \cite[Proposition~13]{marusic_complexity_2015},
        \PIT logspace many-one reduces to the following problem:
        Given two variable-free arithmetic circuits $C_1$, $C_2$ satisfying the following conditions,
        decide whether $C_1$ and $C_2$ represent the same integer.
        \begin{enumerate}
            \item the internal gates are labelled with $+$ or $\times$, i.e.\ there are no substraction gates,
            \item each gate of height $i$ has precisely two children, which are of height $i-1$,
            \item $+$-gates have even height,
            \item $\times$-gates have odd height,
            \item the output gate has even height.
        \end{enumerate}

        Given a circuit $C$ as above,
        define a graph $G(C)$ with vertex colours $\{0, 1, +, \times, S\}$
        by subdividing each outgoing edge of an internal gate and colouring the resulting vertex with $S$.
        In case of $\times$-gates, the two $S$-vertices are connected by an edge.
        See \cref{fig:circuit-to-graph-transform}.

        \begin{figure}
            \centering
            \begin{subfigure}{.25\linewidth}
                \centering
                \begin{tikzpicture}
                    \node [circle, draw] (top) {+};
                    \node [circle, draw, below of=top, xshift=-1cm, yshift=-.5cm] (b1) {\phantom{+}};
                    \node [circle, draw, below of=top, xshift=1cm, yshift=-.5cm] (b2) {\phantom{+}};
                    \draw [->] (top) -- (b1);
                    \draw [->] (top) -- (b2);
                \end{tikzpicture}
                \vspace{.5cm}
                
                \begin{tikzpicture}
                    \node [circle, draw] (top) {+};
                    \node [circle, draw, below of=top, xshift=-1cm, yshift=-.5cm] (b1) {\phantom{+}};
                    \node [circle, draw, below of=top, xshift=1cm, yshift=-.5cm] (b2) {\phantom{+}};
                    \draw (top) edge [->] node [midway, fill, inner sep=2pt, circle] {} (b1);
                    \draw (top) edge [->] node [midway, fill, inner sep=2pt, circle] {} (b2);
                \end{tikzpicture}
            \end{subfigure}
            \begin{subfigure}{.2\linewidth}
                \centering
                \begin{tikzpicture}
                    \node [circle, draw] (top) {+};
                    \node [circle, draw, below of=top, yshift=-.5cm] (b1) {\phantom{+}};
                    \draw (top) [->] edge [bend left] (b1);
                    \draw (top) [->] edge [bend right] (b1);
                \end{tikzpicture}
                \vspace{.5cm}
                
                \begin{tikzpicture}
                    \node [circle, draw] (top) {+};
                    \node [circle, draw, below of=top, yshift=-.5cm] (b1) {\phantom{+}};
                    \draw (top) edge [bend left, ->] node [midway, fill, inner sep=2pt, circle] {} (b1);
                    \draw (top) edge [bend right, ->] node [midway, fill, inner sep=2pt, circle] {} (b1);
                \end{tikzpicture}
            \end{subfigure}
            \begin{subfigure}{.25\linewidth}
                \centering
                \begin{tikzpicture}
                    \node [circle, draw] (top) {$\times$};
                    \node [circle, draw, below of=top, xshift=-1cm, yshift=-.5cm] (b1) {\phantom{+}};
                    \node [circle, draw, below of=top, xshift=1cm, yshift=-.5cm] (b2) {\phantom{+}};
                    \draw [->] (top) -- (b1);
                    \draw [->] (top) -- (b2);
                \end{tikzpicture}
                \vspace{.5cm}
                
                \begin{tikzpicture}
                    \node [circle, draw] (top) {$\times$};
                    \node [circle, draw, below of=top, xshift=-1cm, yshift=-.5cm] (b1) {\phantom{+}};
                    \node [circle, draw, below of=top, xshift=1cm, yshift=-.5cm] (b2) {\phantom{+}};
                    \draw (top) edge [->] node [midway, fill, inner sep=2pt, circle] (m1) {} (b1);
                    \draw (top) edge [->] node [midway, fill, inner sep=2pt, circle] (m2) {} (b2);
                    \draw (m1) edge [bend right, <->] (m2);
                \end{tikzpicture}
            \end{subfigure}
            \begin{subfigure}{.2\linewidth}
             \centering
                \begin{tikzpicture}
                    \node [circle, draw] (top) {$\times$};
                    \node [circle, draw, below of=top, yshift=-.5cm] (b1) {\phantom{+}};
                    \draw (top) edge [bend left, ->] (b1);
                    \draw (top) edge [bend right, ->] (b1);
                \end{tikzpicture}
                \vspace{.5cm}
                
                \begin{tikzpicture}
                    \node [circle, draw] (top) {$\times$};
                    \node [circle, draw, below of=top, yshift=-.5cm] (b1) {\phantom{+}};
                    \draw (top) edge [bend left, -> ] node [midway, fill, inner sep=2pt, circle] (m1) {} (b1);
                    \draw (top) edge [bend right, ->] node [midway, fill, inner sep=2pt, circle] (m2) {} (b1);
                    \draw (m1) edge [<->] (m2);
                \end{tikzpicture}
            \end{subfigure}
            \caption{How to transform a circuit $C$ in the top row into a graph $G(C)$ in the bottom row.}
            \label{fig:circuit-to-graph-transform}
        \end{figure}


        Consider the family of graphs $F_h$ as defined in \cref{fig:graphclass-f}.
        We show the following claim.
        Here, $\val(g)$ for a gate $g$ denotes the integer computed by $g$.

        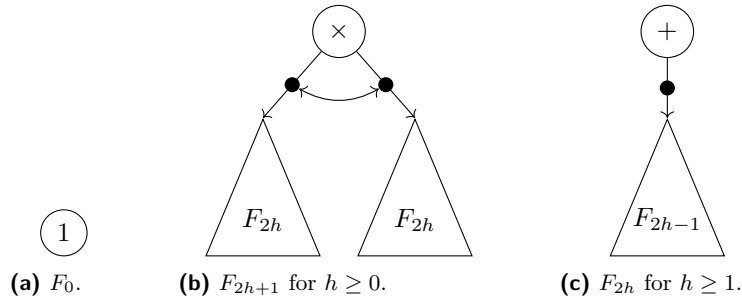
\begin{figure}
            \centering
            \begin{subfigure}[b]{.1\linewidth}
                \centering
                \begin{tikzpicture}
                    \node [circle, draw] {$1$};
                \end{tikzpicture}
                \caption{$F_0$.}
            \end{subfigure}
            \quad \quad
            \begin{subfigure}[b]{.3\linewidth}
            \centering
            \begin{tikzpicture}
                \node [circle, draw] (top) {$\times$};

                \node[isosceles triangle,
                        draw,   
                        rotate=90,
                        rotate=0,
                        minimum size =1.5cm, below of=top, yshift=2cm, xshift= -2.5cm] (T1) {};
            
                \node at (T1.center) {$F_{2h}$};
                \draw (top) edge[->]  node [midway, fill, inner sep=2pt, circle ] (m1) {} (T1.east);

                \node[isosceles triangle,
                        draw,   
                        rotate=90,
                        rotate=0,
                        minimum size =1.5cm, below of=top, yshift=0cm, xshift= -2.5cm] (T2) {};
            
                \node at (T2.center) {$F_{2h}$};
                \draw (top) edge [->]  node [midway, fill, inner sep=2pt, circle ] (m2) {} (T2.east);
                \draw (m1) edge [bend right, <->] (m2);
            \end{tikzpicture}
             \caption{$F_{2h+1}$ for $h \geq 0$.}
            \end{subfigure}
            \quad \quad
            \begin{subfigure}[b]{.2\linewidth}
            \centering
            \begin{tikzpicture}
                \node [circle, draw] (top) {$+$};

                \node[isosceles triangle,
                        draw,   
                        rotate=90,
                        rotate=0,
                        minimum size =1.5cm, below of=top, yshift=1cm, xshift= -2.5cm] (T1) {};
            
                \node at (T1.center) {$F_{2h-1}$};
                
                \draw (top) edge [->] node [midway, fill, inner sep=2pt, circle ] (m1) {} (T1.east);
            \end{tikzpicture}
            \caption{$F_{2h}$ for $h \geq 1$.}
            \end{subfigure}
            \caption{The graphs  $F_h$ for $h \geq 0$.}
            \label{fig:graphclass-f}
        \end{figure}

        \begin{claim}\label{claim:circuit-hom}
            Let $C$ be a circuit.
            Let $g \in V(C)$ be a gate at height $h \geq 0$.
            The number of homomorphisms $\hom(F_h, G(C); r \mapsto g)$ from $F_h$ to $G(C)$ which map the root $r$ of $F_h$ to $g$ is equal to $\alpha(h) \cdot \val(g) $ for $\alpha(h) \coloneqq 2^{2^{\lceil h/2 \rceil} - 1}$.
        \end{claim}
        \begin{claimproof}
            By induction on $h$.
            If $h = 0$,
            then $g$ is either labelled $0$ or $1$.
            Since we consider vertex-colour preserving homomorphisms, 
            the claim follows.

            For the inductive step, distinguish two cases:
            If $h$ is odd, then $g$ is a multiplication gate.
            Write $g_1, g_2$ for the two children of $g$.
            Note that it may be that $g_1 = g_2$.
            Write $r_1$ and $r_2$ for the roots of the two copies of $F_{h-1}$ in $F_h$.
            First suppose that $ g_1 = g_2 \eqqcolon g'$.
            Then, by the inductive hypothesis,
            \begin{align*}
                \hom(F_h, G(C); r \mapsto g)
                &=
                2 \hom(F_{h-1}, G(C); r_1 \mapsto g')
                \hom(F_{h-1}, G(C); r_2 \mapsto g') \\
                &= 2 \alpha(h-1)^2 \cdot\val(g')^2 \\
                &= \alpha(h) \cdot\val(g).
            \end{align*}
            Otherwise, i.e.\ if $g_1 \neq g_2$, by the inductive hypothesis,
            \begin{align*}
                \hom(F_h, G(C); r \mapsto g)
                &=
                \hom(F_{h-1}, G(C); r_1 \mapsto g_1)
                \hom(F_{h-1}, G(C); r_2 \mapsto g_2) \\
                &\quad + \hom(F_{h-1}, G(C); r_1 \mapsto g_2)
                \hom(F_{h-1}, G(C); r_2 \mapsto g_1) \\
                &= 2 \alpha(h-1)^2 \cdot\val(g_1) \val(g_2) \\
                &= \alpha(h) \cdot\val(g).
            \end{align*}

            It remains to consider the case when $h$ is even, i.e.\ when $g$ is an addition gate.
            Write $r'$ for the root of $F_{h-1}$ in $F_h$.
            Write $g_1, g_2$ for the two children of $g$.
            If $ g_1 = g_2 \eqqcolon g'$,
            \begin{align*}
                \hom(F_h, G(C); r \mapsto g)
                &= 
                2 \hom(F_{h-1}, G(C); r' \mapsto g')\\
                &= 2 \alpha(h-1) \cdot\val(g') \\
                &= \alpha(h-1) \cdot(\val(g') + \val(g'))\\
                &= \alpha(h) \cdot\val(g).
            \end{align*}
            where the factor $2$ is incurred by the two possible images of the subdivision vertex between $r$ and $r'$.
            Also note that $\alpha(h) = \alpha(h-1)$ when $h$ is even.
            Finally, if $g_1 \neq g_2$,
            \begin{align*}
                \hom(F_h, G(C); r \mapsto g)
                &= 
                \hom(F_{h-1}, G(C); r' \mapsto g_1) + \hom(F_{h-1}, G(C); r' \mapsto g_2) \\
                &= \alpha(h-1) \cdot (\val(g_1) + \val(g_2))\\
                &= \alpha(h) \cdot \val(g). \qedhere
            \end{align*}
        \end{claimproof}

        Define $\widehat{G}(C)$ by making the output gate vertex in $G(C)$ adjacent to a fresh vertex of colour $T$.
        Similarly, define $\widehat{F_h}$ by making the output vertex in $F_h$ adjacent to a fresh vertex of colour $T$.
        Let $\mathcal{F}' \coloneqq \{\widehat{F_h} \mid h \geq 0\}$.
        Finally,
        define $\mathcal{F}$ analogously but without imposing the restriction that all leaves of the graphs have the same depth.
        That is,
        the subgraphs $F_{2h}$ and $F_{2h-1}$ in \cref{fig:graphclass-f} are replaced with any $F_{h'}$ for even and odd $h'$, respectively.
        
        \begin{claim}
        	The graph class $\mathcal{F}$ can be defined in $\mathsf{MSO}$.
        \end{claim}
        \begin{claimproof}
        	The graphs in $\mathcal{F}$ are precisely those directed graphs with vertex colours $0$, $1$, $+$, $\times$, $S$, and $T$ that possess the following $\mathsf{MSO}$-definable properties:
        	\begin{enumerate}
        		\item The underlying undirected graphs are connected and removing any vertex whose overall degree is at least $2$ disconnects the directed graph.\label{prop1}
        		\item Every vertex has precisely one of the colours  $\{0,1,+,\times, S, T\}$.
        		\item There is a unique vertex coloured $T$.
        		\item The vertices of out-degree zero are of colours $0$ or $1$.
        		\item The vertices coloured $+$ have precisely one out-neighbour coloured $S$, which has precisely one out-neighbour coloured $\times$, $0$, or $1$.
        		\item The vertices coloured $\times$ have precisely two out-neighbours coloured $S$, which are mutually adjacent and each have precisely one out-neighbour coloured $+$, $0$, or $1$.
        	\end{enumerate}
            Note that all but the first property are actually first-order.
            Only acyclicity and connectedness in \cref{prop1} require second-order resources.
        \end{claimproof}
        
        Finally, we claim that $C_1$ and $C_2$ represent the same integer if, and only, if $\widehat{G}(C_1)$ and $\widehat{G}(C_2)$ are homomorphism indistinguishable over $\mathcal{F}$.
        Since $C_1$ and $C_2$ have the same height $h$ and the unique $T$-coloured vertex must be mapped to the unique $T$-coloured vertex, it follows from \cref{claim:circuit-hom} that, for $i \in \{1,2\}$,
        \[
            \hom(\widehat{F_h}, \widehat{G}(C_i))
            = \alpha(h) \cdot \val(C_i)
        \]
        Furthermore,
        $
            \hom(\widehat{F_{h'}}, \widehat{G}(C_i)) = 0
        $
        for $h' \neq h$.
        In particular, the graphs in $\mathcal{F} \setminus \mathcal{F}'$ do not admit any homomorphisms to $\widehat{G}(C_i)$.
    \end{proof}

	The following \cref{lem:uncolour} allows to reduce a homomorphism indistinguishability problem over a directed vertex-coloured graph class to one over a class of undirected graphs.

    \begin{lemma}\label{lem:uncolour}
        Let $\mathcal{C}$
        be a class of directed vertex-coloured graphs with finitely many colours.
        Then there exists a graph class $\mathcal{F}$
        such that 
        \begin{enumerate}
            \item $\HomInd(\mathcal{C})$ logspace many-one reduces to $\HomInd(\mathcal{F})$,
            \item if $\mathcal{C}$ has bounded treewidth, then so does $\mathcal{F}$, 
            \item if $\mathcal{C}$ has bounded pathwidth, then so does $\mathcal{F}$, and
            \item if $\mathcal{C}$ is $\mathsf{MSO}$-definable, 
            then  $\mathcal{F}$ is $\mathsf{MSO}_1$-definable.
        \end{enumerate}
    \end{lemma}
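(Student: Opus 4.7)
The plan is to define an explicit logspace-computable transformation $u$ that encodes both colours and edge directions by attaching small local gadgets to each vertex and each edge, yielding an undirected graph $u(G)$ from a directed vertex-coloured graph $G$. Fixing the finite colour set $D = \{c_1, \dots, c_s\}$, one attaches a ``colour gadget'' $T_i$ at each vertex of colour $c_i$, identifying one distinguished vertex of $T_i$ with the original vertex, and replaces each directed edge $(a,b)$ by an undirected ``direction gadget'' $\vec{E}$ with two distinguished attachment vertices playing the roles of $a$ and $b$. The gadgets will come from a small palette of constant size, chosen so that (i) each has bounded treewidth and pathwidth, and (ii) the rigidity properties described below hold. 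Take $\mathcal{F} \coloneqq \{u(F) \mid F \in \mathcal{C}\}$, and let the reduction send an input $(G,H)$ of $\HomInd(\mathcal{C})$ to $(u(G), u(H))$.

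Correctness reduces to a counting identity of the form $\hom(u(F), u(G)) = \mu(F) \cdot \hom(F, G)$ for every $F \in \mathcal{C}$ and every directed coloured $G$, where $\mu(F) > 0$ depends only on $F$. Once established, the equivalence $G \equiv_{\mathcal{C}} H \iff u(G) \equiv_{\mathcal{F}} u(H)$ is immediate, since $\mu(F) > 0$ makes the two indistinguishability conditions equivalent term-by-term over $F \in \mathcal{C}$. To prove the identity I would show that, by the rigidity of the gadgets, any homomorphism $\phi\colon u(F) \to u(G)$ must send every $T_i$-gadget of $u(F)$ into a $T_i$-gadget of $u(G)$ (matching colour $c_i$) and every $\vec{E}$-gadget into an $\vec{E}$-gadget with correctly-aligned attachment vertices; the restriction of $\phi$ to the ``core'' vertices of $u(F)$ is then exactly a colour- and direction-preserving homomorphism $F \to G$, and the remaining gadget-to-gadget freedom contributes a product of local factors whose total depends only on $F$.

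For the remaining properties, bounded treewidth and bounded pathwidth are preserved because each gadget has constant tree/pathwidth and is attached at one vertex (for $T_i$) or two vertices (for $\vec{E}$); a tree/path decomposition of $G$ therefore lifts to one of $u(G)$ with only a constant additive increase in width. The $\mathsf{MSO}_1$-definability of $\mathcal{F}$ follows because the transformation $u$ is first-order interpretable and invertible on its image: one can detect copies of $T_i$ and $\vec{E}$ via local first-order formulas, reconstruct the underlying directed coloured ``core'' structure, and then relativise the $\mathsf{MSO}$-sentence defining $\mathcal{C}$ to this reconstructed core, using that the gadget structure is recognisable by a fixed first-order formula.

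The main obstacle is arranging for the gadgets to enforce the rigidity required by the counting identity. Undirected trees and paths admit many ``folding'' homomorphisms, so naive pendant constructions will not force core-to-core mappings. To overcome this I would use more robust building blocks, for instance attached cliques $K_{k_i}$ of suitably separated sizes or subgraphs of carefully chosen odd girths, which admit no cross-colour homomorphisms, together with a direction gadget whose two attachment vertices are distinguished by a local degree or girth invariant. Once such gadgets are fixed, the counting identity becomes a routine case analysis that separates the core mapping from the gadget mappings, but verifying the required non-homomorphism relations between $T_i$ and $T_j$ for $i \neq j$, and the asymmetry of $\vec{E}$, is the technically delicate point.
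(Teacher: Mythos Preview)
Your high-level strategy matches the paper's exactly: encode colours and edge directions by local gadgets, establish a counting identity $\hom(u(F),u(G)) = \mu(F)\cdot\hom(F,G)$ with $\mu(F)>0$, and derive the reduction together with the width and definability claims. Your arguments for bounded treewidth/pathwidth and for $\mathsf{MSO}_1$-definability via first-order reconstruction of the core are correct and coincide with the paper's.

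The genuine gap is in the gadget choice, which is the crux of the proof. Neither of your concrete proposals gives homomorphically \emph{incomparable} gadgets. Cliques $K_{k_i}$ of different sizes satisfy $K_m \to K_n$ whenever $m \le n$, so a colour-$c_i$ gadget in $u(F)$ may map into a colour-$c_j$ gadget in $u(G)$ with $k_j > k_i$; the resulting overcount depends on $G$, not just on $F$, and the identity $\hom(u(F),u(G)) = \mu(F)\cdot\hom(F,G)$ fails. Controlling odd girth alone fares no better: odd cycles satisfy $C_{2k+1} \to C_{2\ell+1}$ iff $k \ge \ell$, so they form a chain rather than an antichain, and more generally large odd girth does not prevent homomorphisms into graphs of small odd girth (e.g.\ $C_5 \to K_3$). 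The paper resolves precisely this point by taking the gadgets from a family of pairwise homomorphically incomparable connected Kneser graphs, whose incomparability combines the Lov\'asz--Kneser chromatic-number theorem with their odd-girth behaviour; these Kneser markers (for each colour, plus two asymmetric markers $P$, $Q$ for direction, plus a vertex marker $V$) are furthermore attached to the core via paths of length comparable to the gadget size, which prevents a marker in $u(F)$ from folding onto anything in $u(G)$ other than the matching marker. With these ingredients the counting identity follows from the cited result of B\"oker, Chen, Grohe, and Rattan.
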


    The proof of \cref{lem:uncolour}, which is deferred to \cref{app:kneser}, is based on a construction of \textcite{boker_et_al:LIPIcs.MFCS.2019.54,boeker_complexity_2025}
    involving Kneser graphs.
    For integers $r$ and $s$ such that $1 \leq r \leq s/2$,
    the \emph{Kneser graph} $K(r,s)$ is the graph whose vertices are the $r$-subsets of $[s]$
    and whose edges connect two vertices if, and only if,
    the corresponding subsets are disjoint.
    The crucial property of Kneser graphs, which allows them to simulate colours and edge directions, is that there are homomorphically incomparable \cite{hahn_graph_1997}.

    \begin{proof}[Proof of \cref{thm:recognisable-trees}]
    	The logspace many-one interreducibility of \MTA equivalence and \PIT was established in \cite[Propositions 12 and 13]{marusic_complexity_2015}.
    	By \cref{lem:pit-to-hom,lem:uncolour},
    	\PIT logspace many-one reduces to $\HomInd(\mathcal{F})$ for some $\mathsf{MSO}_1$\nobreakdash-definable graph class of bounded treewidth.
    	By \cref{thm:main1}, $\HomInd(\mathcal{F})$ logspace many-one reduces to \MTA equivalence.
    \end{proof}

    \subsection{Bounded pathwidth and \texorpdfstring{\CL}{C=L}}

    In this section, we show \cref{thm:pw-hardness} and thereby complete the proof of \cref{thm:main3}.

    \thmRecogPath*

    Our reduction is from the problem \VCP of
    verifying the characteristic polynomial $\chi_A$ of an integer matrix $A$, that is the decision problem
    \begin{align*}
         \left\{(A, c_0, c_1, \dots, c_{n-1}) 
         \ \middle| \ n \in \mathbb{N}, A \in \mathbb{Z}^{n \times n}, c_0, \dots, c_{n-1} \in \mathbb{Z}, \,\,
         \chi_A(\lambda) = \lambda^n + \sum_{i=0}^{n-1} c_i \lambda^i 
         \right\}.
    \end{align*}
    \VCP was shown to be \CL-complete under logspace many-one reductions in \cite[Theorem~3.2]{hoang_complexity_2000}.
    We start with the following \cref{lem:pos-det} by which we treat negative entries.
    
\begin{lemma}\label{lem:pos-det}
    For every pair of matrices $A, B\in \mathbb{Z}^{n \times n}$, there exist logspace-computable matrices 
    $D, E\in \mathbb{N}^{3n \times 3n}$ such that
    \begin{enumerate}
        \item $\chi_A = \chi_B$ if, and only if, $\chi_D = \chi_E$,
        \item $A$ and $B$ are similar if, and only if, $D$ and $E$ are similar.
    \end{enumerate}
\end{lemma}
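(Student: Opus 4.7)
The plan is to eliminate negative entries using the standard sign-doubling construction $\phi(A) = \begin{pmatrix} A^+ & A^- \\ A^- & A^+ \end{pmatrix}$, where $A = A^+ - A^-$ is the entrywise positive/negative decomposition, and then to add a third $n \times n$ block to cancel the spurious spectrum that doubling introduces. The first step is to observe that $\phi(A) \in \mathbb{N}^{2n \times 2n}$ and that conjugation by the $\mathbb{Q}$-invertible matrix $T = \begin{pmatrix} I & I \\ I & -I \end{pmatrix}$ gives $T^{-1}\phi(A)T = |A| \oplus A$ where $|A| \coloneqq A^+ + A^-$, so that $\phi(A) \sim A \oplus |A|$ and $\chi_{\phi(A)} = \chi_{|A|}\,\chi_A$.

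The immediate obstruction is that $|A|$ is not a similarity invariant of $A$: simple Jordan-block examples produce $A \sim B$ with $\chi_{|A|} \neq \chi_{|B|}$, so neither iff holds for $\phi$ alone. To neutralise the $|A|$ factor I would extend to a $3n \times 3n$ non-negative matrix
\[
D \;=\; \begin{pmatrix} A^+ & A^- & X \\ A^- & A^+ & X \\ Y & Y & Z \end{pmatrix},
\]
with auxiliary non-negative blocks $X, Y, Z \in \mathbb{N}^{n \times n}$ to be determined. Conjugating by $T \oplus I$ decouples the $A$-summand from a residual $2n \times 2n$ block $M = M(A;X,Y,Z)$ involving $|A|$, so $D \sim A \oplus M$. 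The design goal is to choose $X, Y, Z$ (possibly after a uniform diagonal shift $cI$ that secures non-negativity and spectral separation) so that both $\chi_M$ and the similarity class of $M$ depend only on the parameters $n$ and $c$, not on $|A|$.

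Assuming this decoupling is achieved, both iffs are routine. For~(1), $\chi_D = \chi_{A + cI} \cdot \chi_{M_0}$ with $\chi_{M_0}$ common to $A$ and $B$, and cancelling gives $\chi_A = \chi_B \iff \chi_D = \chi_E$. For~(2), $D \sim (A + cI) \oplus M_0$ and $E \sim (B + cI) \oplus M_0$ for the fixed $M_0$; taking $c$ large enough to separate the spectra of $A + cI$, $B + cI$, and $M_0$ ensures the Jordan form of $D$ uniquely recovers that of $A$, giving $D \sim E \iff A \sim B$. Logspace computability of $D$ and $E$ is immediate since positive/negative splitting is entrywise and the auxiliary blocks have a closed-form expression.

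The main obstacle is engineering $X, Y, Z$ and the shift $c$ so that the $2n \times 2n$ residual $M$ genuinely has $A$-independent characteristic polynomial and similarity class while keeping every entry of $D$ non-negative. Since one cannot directly subtract $|A|$, this cancellation must come from an algebraic identity --- for instance, a Schur-complement computation in the $(|A|, Z)$ subsystem --- and it is exactly here that the factor $3$ in the ambient dimension becomes essential.
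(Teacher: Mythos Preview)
Your sign-doubling construction $\phi(A)$ and the conjugation showing $\phi(A) \sim A \oplus |A|$ are exactly what the paper does. The gap is in how you handle the third $n \times n$ block. You are trying to engineer auxiliary blocks $X,Y,Z$ and a shift so that the residual $2n \times 2n$ piece $M$ has \emph{$A$-independent} characteristic polynomial and similarity class. As you yourself note, this is the hard part, and you have not shown it can be done; Schur-complement cancellation of $|A|$ inside a non-negative matrix is not obviously achievable, and the spectral-separation argument for similarity would require a logspace-computable bound on eigenvalues together with a proof that the resulting $M_0$ really is fixed.

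The paper sidesteps all of this with a much simpler trick: allow $D$ to depend on \emph{both} $A$ and $B$. Concretely, take $X=Y=0$ and $Z=|B|$, so
\[
D=\begin{pmatrix}A^+&A^-&0\\A^-&A^+&0\\0&0&|B|\end{pmatrix},\qquad
E=\begin{pmatrix}B^+&B^-&0\\B^-&B^+&0\\0&0&|A|\end{pmatrix}.
\]
Then $D\sim A\oplus|A|\oplus|B|$ and $E\sim B\oplus|A|\oplus|B|$, so $\chi_D=\chi_A\cdot\chi_{|A|}\chi_{|B|}$ and $\chi_E=\chi_B\cdot\chi_{|A|}\chi_{|B|}$ share the same nonzero factor, giving (1) immediately. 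For (2), the common summand $|A|\oplus|B|$ cancels by the standard fact that for block-diagonal matrices $X\oplus C\sim Y\oplus C$ iff $X\sim Y$ (e.g.\ via invariant factors or Jordan form). No shift, no spectral separation, no Schur complement is needed; the key idea you missed is that the extra block need not be independent of the input---it just has to be the \emph{same} extra block in $D$ and in $E$, which you arrange by swapping $|A|$ and $|B|$ between the two constructions.
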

\begin{proof}
    If $n=0$, then the statement holds trivially. Assume $n>0$
    and denote the matrix of non-negative and negative elements of $A$ by $A^+$ and $A^-$, respectively, so that $A = A^+ - A^-$ and $|A| = A^+ + A^-$.
    Define $B^+$, $B^-$, and $|B|$ analogously.
    We define matrices $D$ and $E$ in $\mathbb{N}^{3n \times 3n}$ via
    \[
        D \coloneqq \begin{pmatrix}
        A^+ & A^- & 0 \\
        A^- & A^+ & 0 \\
        0   &   0 & |B|
        \end{pmatrix},
        \quad
        E \coloneqq \begin{pmatrix}
        B^+ & B^- & 0 \\
        B^- & B^+ & 0 \\
        0   &   0 & |A|
        \end{pmatrix},
        \quad \text{ and } \quad 
        T \coloneqq \begin{pmatrix}
        -I&  I & 0 \\
        I &  I & 0 \\
        0 &  0 & I
        \end{pmatrix}.
    \]
    We consider a similar matrix to $D$
    \[
        TD T^{-1} \coloneqq \begin{pmatrix}
            A &  0 & 0 \\
            0 & |A| & 0 \\
            0 &  0 & |B|
        \end{pmatrix},
        \quad \text{ where } \quad
        T^{-1} \coloneqq \begin{pmatrix}
        -\frac{1}{2}I &  \frac{1}{2}I & 0 \\
        \frac{1}{2}I & \frac{1}{2}I & 0 \\
        0 &  0 & I
        \end{pmatrix}.
    \]
    and analogously the matrix $T E T^{-1}$, which is similar to $E$.
    It follows that $\chi_D = \chi_{T D T^{-1}} = \chi_{A}\cdot \chi_{|A|} \cdot \chi_{|B|}$
    and $\chi_E = \chi_{T E T^{-1}} = \chi_{B}\cdot \chi_{|A|}\cdot  \chi_{|B|}$.
    Since $n>0$, all polynomials in the product are non-zero, and thus cancelling both polynomials by the term $\chi_{|A|} \cdot \chi_{|B|}$, yields the first statement.

    For the second statement, by \cite[Theorem~7.7.3]{roman_advanced_2008} courtesy to \cite{hattingh_show_2018},
    $TD T^{-1}$ and $TE T^{-1}$ are similar if, and only if,
    $A$ and $B$ are similar.
    Clearly, $D$ and $E$ are similar if, and only if, $TD T^{-1}$ and $TE T^{-1}$ are similar.
\end{proof}

\Cref{lem:pos-det} yields that checking whether two matrices with non-negative integral entries have the same characteristic polynomial is \CL-complete.

\begin{theorem}\label{thm:nonsym-char}
    The set $\{(A, B) \mid n\in \mathbb{N}, A, B \in \mathbb{N}^{n \times n}, \chi_A = \chi_B \}$ is $\CL$-complete under logspace many-one reductions.
\end{theorem}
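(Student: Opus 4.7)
The plan is to show both containment and hardness in turn. For containment in $\CL$, I would use that the coefficients of the characteristic polynomial of an integer matrix are $\GapL$-computable \cite{mahajan_determinant_1997}, together with the sum-of-squares closure of $\GapL$ that already powers \cref{obs:paths}. For hardness, I would reduce from $\VCP$ via the companion matrix and then strip negative entries using \cref{lem:pos-det}.

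For the upper bound, I would set $p_i(A,B) \coloneqq \mathrm{coeff}_i(\chi_A) - \mathrm{coeff}_i(\chi_B)$ for each $0 \leq i \leq n-1$. Each $p_i$ is the difference of two $\SharpL$-functions and therefore lies in $\GapL$ by \cite[Proposition~2]{allender_relationships_1996}. Invoking \cite[Theorem~9]{allender_relationships_1996}, the aggregated function
\[
    q(A,B) \coloneqq \sum_{i=0}^{n-1} p_i(A,B)^2
\]
is also in $\GapL$, and $q(A,B) = 0$ if and only if $\chi_A = \chi_B$. Containment in $\CL$ then follows directly from its definition, exactly as in \cref{obs:paths}.

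For the lower bound, I would start from an arbitrary instance $(A, c_0, \ldots, c_{n-1})$ of $\VCP$ with $A \in \mathbb{Z}^{n \times n}$ and form the companion matrix $B \in \mathbb{Z}^{n \times n}$ of the polynomial $\lambda^n + \sum_i c_i \lambda^i$. By construction $\chi_B$ equals this polynomial, so $(A, c_0,\dots,c_{n-1})$ is a yes-instance of $\VCP$ if and only if $\chi_A = \chi_B$. Since $B$ typically has negative entries, I would then invoke \cref{lem:pos-det} to produce non-negative matrices $D, E \in \mathbb{N}^{3n \times 3n}$ with $\chi_D = \chi_E$ iff $\chi_A = \chi_B$. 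All of these transformations are logspace-computable, so combined with the $\CL$-completeness of $\VCP$ \cite[Theorem~3.2]{hoang_complexity_2000} this yields the required $\CL$-hardness.

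I do not anticipate a serious obstacle: the only delicate step, namely removing negative entries while preserving the characteristic polynomial, has been isolated in \cref{lem:pos-det}. The remaining ingredients (companion matrices, $\GapL$-computability of characteristic-polynomial coefficients, and the sum-of-squares trick) are standard and already appear in the surrounding material, so I expect the proof to be quite short.
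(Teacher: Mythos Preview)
Your hardness argument is identical to the paper's: reduce from \VCP via the companion matrix and then apply \cref{lem:pos-det} to remove negative entries.

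For containment, however, you take a different route. The paper does \emph{not} argue directly via $\GapL$-computability of the characteristic-polynomial coefficients; instead it postpones containment to \cref{thm:cycles}, where the gadget construction reduces the problem in logspace to homomorphism indistinguishability over directed cycles, which lies in $\CL$ by \cref{obs:paths} (closure of $\CL$ under logspace reductions is \cite[Theorem~16]{allender_relationships_1996}). Your approach is more self-contained and avoids the gadget machinery entirely, at the cost of invoking the $\GapL$-computability of the coefficients of $\chi_A$ as a black box. The paper's route, by contrast, keeps everything phrased in terms of homomorphism counts and reuses infrastructure that is needed anyway for \cref{thm:cycles}. One small wording issue: you write that each $p_i$ is ``the difference of two $\SharpL$-functions'', but the individual coefficients $\mathrm{coeff}_i(\chi_A)$ are already signed and lie in $\GapL$, not $\SharpL$; the correct justification is that $\GapL$ is closed under subtraction, so $p_i \in \GapL$ follows directly. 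This does not affect the validity of your argument.
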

\begin{proof}
    We show $\CL$-hardness and postpone containment in virtue of \cite[Theorem~16]{allender_relationships_1996} to \cref{thm:cycles}.
    For hardness, given an instance $(A,$ $c_0, c_1, \dots, c_{n-1})$ of \VCP, 
    where $A \in \mathbb{Z}^{n \times n}$, $n \in \mathbb{N}$, we need to decide if $q(\lambda) = \lambda^n + \sum_{i=0}^{n-1} c_i \lambda^i$ is equal to the characteristic polynomial $\chi_A$.
    For that, take as $B \in \mathbb{Z}^{n \times n}$ the companion matrix  of $q(x)$, for which it holds that $\chi_B = q$ \cite[Theorem~7.12]{roman_advanced_2008}.
    We use \cref{lem:pos-det} to obtain a pair of non-negative matrices whose characteristic polynomials are equal if, and only if, $\chi_A = \chi_B = q$.
\end{proof}

It remains to show that
non-negative integral entries can be simulated by $\{0,1\}$-entries using suitable gadgets.
By the following \cref{thm:cycles}, the decision problem $\{(A, B) \mid n\in \mathbb{N}, \allowbreak A, B \in \{0,1\}^{n \times n}, \allowbreak  \chi_A = \chi_B \}$ is \CL-complete.
We remark that \cite[Figure~2.1]{toda_counting_1984} describes a similar gadget construction when reducing the problem of computing the determinant of an integer matrix to computing the powers of a $\{-1,0,1\}$-matrix.

\begin{theorem}\label{thm:cycles}
    Homomorphism indistinguishability over directed cycles is $\CL$-complete under logspace many-one reductions.
\end{theorem}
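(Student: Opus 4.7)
The plan is to first observe that containment in $\CL$ already follows from \cref{obs:paths}, leaving only $\CL$-hardness to prove. The reduction I have in mind starts from the problem of deciding whether two non-negative integer matrices $A, B \in \mathbb{N}^{n \times n}$ share the same characteristic polynomial, which is $\CL$-hard by \cref{thm:nonsym-char}. The bridge to homomorphism indistinguishability is the identity $\hom(\vec{C}_k, G) = \tr(A_G^k)$ for $k \geq 1$ together with $\hom(\vec{C}_0, G) = |V(G)|$; via Newton's identities, two directed graphs $G$ and $H$ are homomorphism indistinguishable over directed cycles iff $|V(G)| = |V(H)|$ and $\chi_{A_G} = \chi_{A_H}$. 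So I need to construct, in logspace, $\{0,1\}$-matrices $A', B'$ of equal size such that $\chi_A = \chi_B$ iff $\chi_{A'} = \chi_{B'}$.

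To do this, I would use the subdivision gadget alluded to in the text after \cref{thm:nonsym-char}: for every ordered pair $(i,j)$, introduce $a_{ij}$ fresh \emph{intermediate} vertices $w_{ij}^{(1)}, \dots, w_{ij}^{(a_{ij})}$, each equipped with exactly one in-edge from $i$ and one out-edge to $j$. The resulting adjacency matrix $A' \in \{0,1\}^{n_A \times n_A}$, with $n_A = n + \sum_{i,j} a_{ij}$, is the adjacency matrix of a bipartite digraph with parts $V$ (original vertices) and $W$ (intermediates). Applying the same construction to $B$, and then padding both graphs with isolated vertices to a common size $N$, completes the logspace reduction, since $\sum_{i,j} a_{ij}$ is bounded by $n^2 \cdot \max_{i,j} a_{ij}$ and the gadget itself can trivially be laid out in logspace.

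The core of the analysis will be the trace identities $\tr(A'^{\,2m+1}) = 0$ and $\tr(A'^{\,2m}) = 2\tr(A^m)$ for $m \geq 1$. The first is immediate from the $V$/$W$-bipartition. For the second, I would partition closed walks of length $2m$ by their base vertex: walks starting at $v \in V$ contribute $(A^m)_{vv}$, because each pair of consecutive steps $v_{t-1} \to w \to v_t$ offers $a_{v_{t-1} v_t}$ choices for the intermediate $w$, summing to $\tr(A^m)$; walks starting at $w_{ij}^{(c)} \in W$ are forced to proceed $w_{ij}^{(c)} \to j \to \dots \to i \to w_{ij}^{(c)}$, so each contributes $(A'^{\,2m-2})_{ji} = (A^{m-1})_{ji}$, and summing over all such $w_{ij}^{(c)}$ yields $\sum_{i,j} a_{ij} (A^{m-1})_{ji} = \tr(A^m)$ again.

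Combining, the padded graphs are homomorphism indistinguishable over directed cycles iff they share a vertex count (achieved by padding) and $\tr(A^m) = \tr(B^m)$ for every $m \geq 1$, which by Newton's identities is $\chi_A = \chi_B$. The main obstacle I anticipate is the bookkeeping for closed walks based at intermediate vertices: both endpoints of the walk are forced at that base, and the remaining $2m-2$ steps must stay in $A'$ rather than be mistakenly counted in $A$. This is also the step where it matters that edges are subdivided into length-$2$ paths rather than replaced by multi-edges or self-loops, since either alternative would break the bipartition and muddy the trace computation.
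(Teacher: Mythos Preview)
Your high-level strategy matches the paper's exactly: reduce from \cref{thm:nonsym-char}, replace weighted edges by unweighted gadgets so that closed-walk counts in the new graph recover the traces $\tr(A^k)$, and conclude via Newton's identities. Your trace analysis for the bipartite subdivision gadget is also correct.

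The gap is in the size of your reduction. The instances produced by \cref{thm:nonsym-char} are pairs $A,B \in \mathbb{N}^{n\times n}$ with entries encoded in \emph{binary}; the reduction there goes through the companion matrix of the polynomial $q$, whose entries are the input coefficients $c_0,\dots,c_{n-1}$ and can therefore be exponentially large relative to the input length. Your gadget introduces $a_{ij}$ intermediate vertices for each ordered pair $(i,j)$, so the output graph has $n + \sum_{i,j} a_{ij}$ vertices. The bound $\sum_{i,j} a_{ij} \le n^2 \cdot \max_{i,j} a_{ij}$ you invoke is exponential in the bit-length of the input, so this is not a logspace (or even polynomial-time) reduction.

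The paper fixes exactly this issue by encoding the weight $m$ of an edge $(u,v)$ in binary: choose a uniform length $b$ with $2^{b+1}-1$ bounding all entries, and for each nonzero bit $i$ of $m$ insert a gadget $B_{i,b}$ of length $b$ that contributes $2^i$ distinct $u\to v$ walks. Every closed walk in the resulting simple digraph has length divisible by $b$, and $\hom(\vec{C}_{kb}, G) = \tr(A^k)$. This keeps the output size polynomial (at most $b$ gadgets per edge, each of size $O(b)$) and is why the reduction is logspace. Your subdivision idea is morally a unary version of this; to salvage it you would need to replace the $a_{ij}$ parallel length-$2$ paths by a structure whose size is polylogarithmic in $a_{ij}$.
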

\begin{proof}
    Containment was shown in \cref{obs:paths}.
    For hardness, let $A, B \in \mathbb{N}^{n \times n}$ be an instance of problem that is shown \CL-hard in \cref{thm:nonsym-char}.
    We consider the following two directed weighted graphs $G_w$ and $H_w$ given as adjacency matrices $A$ and $B$, respectively.

           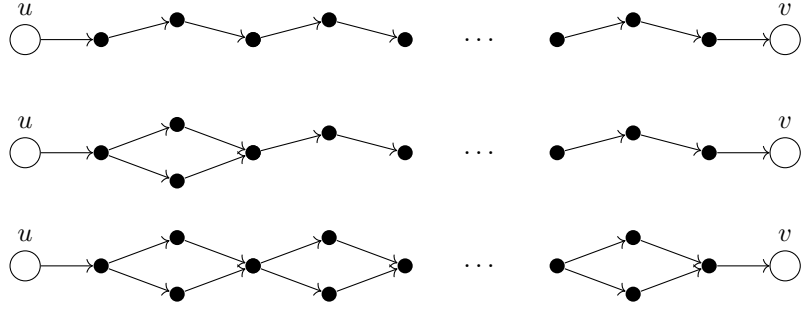
\begin{figure}
        \centering
        \begin{tikzpicture}[yscale=0.75]
            \node [circle, draw, inner sep=4pt] (ua) at (0,4) {};
            \node [circle, draw, inner sep=4pt] (va) at (10,4) {};
            \node [circle, draw, inner sep=4pt] (ub) at (0,2) {};
            \node [circle, draw, inner sep=4pt] (vb) at (10,2) {};
            \node [circle, draw, inner sep=4pt] (uc) at (0,0) {};
            \node [circle, draw, inner sep=4pt] (vc) at (10,0) {};

\foreach \n\px/\py in {
                b/1/2,
                c/1/0, c2/3/0, c3/7/0
            } {
                \node [circle, fill, inner sep=2pt] (d1\n) at (\px, \py) {};
                \node [circle, fill, inner sep=2pt] (d2\n) at (\px+1, \py+.5) {};
                \node [circle, fill, inner sep=2pt] (d3\n) at (\px+1, \py-.5) {};
                \node [circle, fill, inner sep=2pt] (d4\n) at (\px+2, \py) {};
                \draw[->] (d1\n) to (d2\n);
                \draw[->] (d1\n) to (d3\n);
                \draw[->] (d2\n) to (d4\n);
                \draw[->] (d3\n) to (d4\n);
            }

\foreach \n\px/\py in {
                a/1/4, a2/3/4, a3/7/4,
                b2/3/2, b3/7/2
            } {
                \node [circle, fill, inner sep=2pt] (d1\n) at (\px, \py) {};
                \node [circle, fill, inner sep=2pt] (d2\n) at (\px+1, \py+.35) {};
                \node [circle, fill, inner sep=2pt] (d4\n) at (\px+2, \py) {};
                \draw[->] (d1\n) to (d2\n);
                \draw[->] (d2\n) to (d4\n);
            }

            \foreach \n in {a,b,c} {
\node [anchor=south] at (u\n.north) {$u$};
\node [anchor=south] at (v\n.north) {$v$};
                \draw[draw=none] (d4\n2) -- node[midway] {$\dots$} (d1\n3);
                \draw[->] (u\n) to (d1\n);
                \draw[->] (d4\n3) to (v\n);
            }
        \end{tikzpicture}
        \caption{Gadgets $B_{1,b}$, $B_{2,b}$, and $B_{b,b}$ used for replacing the directed edge $uv$.}
        \label{fig:mgadget}
    \end{figure}

    Choose a bit-length $b$ such that $2^{b+1} - 1$ bounds entries of $A$ and $B$.
    We construct a directed simple graph $G$, by starting with vertices of $G_w$.
    Next, for every edge $(u,v)$ of $G_w$ with weight $m$,
    we connect $u$ to $v$ in $G$ with the gadgets given in \cref{fig:mgadget} as follows:
    for the $i$-th non-zero bit of $m$ we add gadget $B_{i,b}$ to $G$ between $u$ and $v$.
    We construct a directed graph $H$ from $H_w$ analogously.
    
    \begin{claim}
    For all $k$, $\hom(\vec{C}_{kb}, G) = \tr(A^k)$, and $\hom(\vec{C}_\ell, G) = 0$ if $b$ does not divide $\ell$.
    \end{claim}
    \begin{claimproof}
    Consider a walk between endpoints $u$ and $v$ of a gadget $B_{i,b}$.
    This walk is of length exactly $b$, because of the orientation.
    Each $B_{i,b}$ contributes $2^i$ distinct walks, so that the collection of gadgets between $u$ and $v$ represents the weight in $G_w$.
    On the other hand, every closed walk counted by $\hom(\vec{C}_{kb}, G)$ goes through at least one vertex of $G$ originated in $G_w$,
    and thus is also counted in $\tr(A^k)$.
    The second part follows since every closed walk is necessarily of length $kb$ for some $k\in \mathbb{N}$.
    \end{claimproof}
    Consequentially, homomorphisms over directed cycles  determine the traces of powers and vice versa.
    By Newton's identities, it holds $\tr(A^k) = \tr(B^k)$ for all $1 \leq k \leq n$ if, and only if, $\chi_A = \chi_B$.
\end{proof}

    Finally, we apply \cref{lem:uncolour} to reduce homomorphism indistinguishability over directed cycles to homomorphism indistinguishability over a $\CMSO$-definable class of undirected and uncoloured graphs of bounded pathwidth.

    \begin{proof}[Proof of \cref{thm:pw-hardness}]
        Containment follows from \cref{thm:main1}.
        Hardness follows from \cref{thm:cycles} and \cref{lem:uncolour} observing that the class of (disjoint unions) of directed cycles contains precisely all directed graphs whose vertices have out-degree $1$ and is thus definable in first-order logic.
    \end{proof}

    \begin{proof}[Proof of \cref{thm:main3}]
        \label{proof-path-hardness}
        Containment follows from \cref{lem:mwa-in-cl}.
        Hardness follows from \cref{thm:pw-hardness,thm:main1}.
    \end{proof}

    \subsection{Undirected graphs and symmetric matrices}
    \label{sec:undirected}
    
    Even though \cref{thm:pw-hardness} pinpoints the complexity of homomorphism indistinguishability over \CMSO-definable graph classes of bounded pathwidth,
    it is slightly unsatisfactory in the sense that the constructed graph class is based on a gadget construction.
    More concretely, in light of \cref{thm:cycles},
    it would be interesting to determine whether homomorphism indistinguishability over undirected cycles is \CL-complete.
    A related question\footnote{Concretely, Toda \cite{toda_counting_1984} asks whether counting the number of not necessarily simple length-$n$ paths between two vertices in an $n$-vertex undirected input graph is \GapL-complete. In the same paper, this is shown for counting directed paths in directed input graphs.} was already posed by Toda in 1984 \cite{toda_counting_1984} with little apparent progress since, see \cite{mahajan_complexity_2010}.

    In general, in the realm of logspace computation, discrepancies between problems on directed and undirected graphs
    are well known.
    For example, directed connectivity is NL\nobreakdash-complete while undirected connectivity is in \LOGSPACE \cite{reingold_undirected_2008}.
    To give another example closer related to homomorphism indistinguishability, 
    we observe invoking \cite{hoang_complexity_2000} that there is a complexity gap between similarity of symmetric and non-symmetric non-negative integer matrices unless the Exact Counting Logspace Hierarchy $\LOGSPACE^{\CL}$ collapses to \CL \cite{allender_complexity_1999,allender_relationships_1996}.
    \begin{corollary}
        \begin{enumerate}
            \item The set of pairs of similar non-negative integer matrices is $\LOGSPACE^{\CL}$\nobreakdash-complete under logspace many-one reductions.
            \item The set of pairs of similar symmetric non-negative integer matrices is $\CL$\nobreakdash-complete under logspace many-one reductions.
        \end{enumerate}
    \end{corollary}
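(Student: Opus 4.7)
The plan is to reduce from similarity of (possibly signed) integer matrices, which was analysed by Hoang and Thierauf~\cite{hoang_complexity_2000}. They show that deciding similarity of arbitrary integer matrices is $\LOGSPACE^{\CL}$-complete under logspace many-one reductions, while the restriction to symmetric integer matrices drops to $\CL$-complete. Containment in each case is immediate: non-negative integer matrices are a special instance of integer matrices (resp.\ symmetric non-negative of symmetric integer), so the upper bounds transfer at once.

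For hardness in part~(1), given an instance $(A,B)$ of similarity of integer matrices, apply \cref{lem:pos-det} to produce $D,E \in \mathbb{N}^{3n \times 3n}$ in logspace with $A \sim B$ if, and only if, $D \sim E$. This yields an $\LOGSPACE^{\CL}$-hardness reduction to similarity of non-negative integer matrices, establishing part~(1).

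For hardness in part~(2), the key observation is that the reduction of \cref{lem:pos-det} preserves symmetry. Indeed, whenever $A$ and $B$ are symmetric, both $A^+$ and $A^-$ (and thus $|B|$) are symmetric, so inspecting the block form
\[
    D = \begin{pmatrix}
        A^+ & A^- & 0 \\
        A^- & A^+ & 0 \\
        0 & 0 & |B|
    \end{pmatrix}
\]
one sees that $D$ is symmetric as a block matrix with symmetric diagonal blocks and identical (hence mutually transposed) off-diagonal blocks; the same holds for $E$. Therefore the reduction restricts to a logspace many-one reduction from similarity of symmetric integer matrices to similarity of symmetric non-negative integer matrices, inheriting $\CL$-hardness from \cite{hoang_complexity_2000}.

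The only step requiring any care is verifying symmetry preservation in \cref{lem:pos-det}; everything else is bookkeeping once the completeness results of \cite{hoang_complexity_2000} and the reduction of \cref{lem:pos-det} are in hand.
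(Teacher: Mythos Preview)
For part~(1) your argument matches the paper's exactly: $\LOGSPACE^{\CL}$-completeness of integer-matrix similarity from \cite[Theorem~4.1]{hoang_complexity_2000}, followed by \cref{lem:pos-det} to pass to non-negative entries.

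For part~(2) you take a different route. The paper appeals to its own \cref{thm:cycles} (and hence the $\CL$-completeness of deciding $\chi_A=\chi_B$ over non-negative integer matrices established en route) together with the standard fact that symmetric matrices are similar iff their characteristic polynomials coincide. You instead start from a claimed $\CL$-completeness of similarity for \emph{symmetric} integer matrices in \cite{hoang_complexity_2000}, and then---via the correct and pleasant observation that the block construction in \cref{lem:pos-det} sends symmetric pairs to symmetric pairs---transport hardness to the non-negative case.

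The concern is with that premise. \cite{hoang_complexity_2000} proves $\CL$-completeness of \VCP and $\LOGSPACE^{\CL}$-completeness of general similarity; as far as I can tell it does not establish $\CL$-\emph{hardness} of similarity when both inputs are required to be symmetric. Containment is unproblematic (similarity $\Leftrightarrow \chi_A=\chi_B$ for symmetric matrices, and the latter is in $\CL$), but the standard \VCP hardness reduction outputs a companion matrix, which is not symmetric, so hardness for symmetric inputs needs a separate argument. Unless you can point to the precise theorem in \cite{hoang_complexity_2000} (or \cite{hoang_complexity_2003}) or supply a direct reduction showing that similarity of symmetric integer matrices is $\CL$-hard, the hardness half of your part~(2) is not yet justified. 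The paper sidesteps this by leaning on its internal \cref{thm:cycles} rather than on a symmetric-case result from \cite{hoang_complexity_2000}.
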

    \begin{proof}
        The first claim follows from \cite[Theorem~4.1]{hoang_complexity_2000} and \cref{lem:pos-det}.
        The second claim follows from \cref{thm:cycles} noting that, for symmetric matrices $A$ and $B$, it holds that $\chi_A = \chi_B$ if, and only if, $A$ and $B$ are similar, see e.g.\ \cite[Theorem~3.1]{grohe_homomorphism_2025}.
    \end{proof}

    Towards illuminating the discrepancies between the directed and undirected, we show that homomorphism indistinguishability over undirected cycles and undirected cycles and paths is logspace many-one interreducible. 
    Two graphs $G$ and $H$ are homomorphism indistinguishable over cycles / cycles and paths iff there exists an orthogonal / an orthogonal pseudo-stochastic matrix $X$ such that $X A_G = A_H X$, see \cite{seppelt_homomorphism_2024}.
    The latter graph class is notable since it(s disjoint union closure) is closed under taking minors. Minor-closed graph classes play an important role in homomorphism indistinguishability \cite{roberson_oddomorphisms_2022,seppelt_logical_2024}.
    Both problems lie in $\CL$ by \cref{thm:main1}.
    The proof of \cref{thm:cycles-paths} is deferred to \cref{app:cycles-paths}.
    \begin{theorem}\label{thm:cycles-paths}
        Homomorphism indistinguishability over the class of cycles and paths and over the class of cycles are logspace many-one interreducible.
    \end{theorem}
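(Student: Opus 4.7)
\medskip\noindent\textbf{Proof proposal.}
The plan is to establish logspace many-one reductions in both directions. By \cref{thm:pw-rec}, both $\HomInd(\mathcal{C})$ and $\HomInd(\mathcal{C} \cup \mathcal{P})$ lie in $\CL$ since $\mathcal{C}$ and $\mathcal{C} \cup \mathcal{P}$ are $\CMSO$-definable classes of bounded pathwidth. The direction $\HomInd(\mathcal{C} \cup \mathcal{P}) \leq_L \HomInd(\mathcal{C})$ then follows by $\CL$-completeness of the target: as noted in the introduction to this subsection, $G \equiv_\mathcal{C} H$ iff the symmetric matrices $A_G, A_H$ are cospectral, so an undirected adaptation of the gadget construction in \cref{thm:cycles} combined with the $\CL$-completeness of cospectrality for symmetric non-negative integer matrices (see \cref{sec:undirected}) establishes $\CL$-hardness of $\HomInd(\mathcal{C})$. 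Completeness then gives the reduction.

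For the converse $\HomInd(\mathcal{C}) \leq_L \HomInd(\mathcal{C} \cup \mathcal{P})$, I would give a direct gadget reduction. Given an input pair $(G,H)$, my aim is to produce $(G',H')$ where both $G'$ and $H'$ are $d$-regular graphs on a common number $n'$ of vertices such that $G \equiv_\mathcal{C} H$ if and only if $G' \equiv_\mathcal{C} H'$. The payoff is immediate: for any $d$-regular graph on $n'$ vertices one has $\hom(P_k, G') = n' d^k$ for every $k \geq 0$, so path-indistinguishability holds trivially for $(G',H')$, making $G' \equiv_\mathcal{C} H'$ coincide with $G' \equiv_{\mathcal{C} \cup \mathcal{P}} H'$.

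The hard part will be the logspace-computable regularisation that preserves cospectrality. My approach is to attach at each vertex $v$ of $G$ a local gadget whose size depends only on $\deg_G(v)$ and a chosen target degree $\Delta \geq n$, engineered so that its contribution to the characteristic polynomial depends only on $\deg_G(v)$ and $\Delta$. Since cospectral graphs share every closed-walk count, in particular the degree-sum $\tr(A^2) = 2|E|$ and higher trace invariants, the cumulative gadget contributions agree between $G$ and $H$ whenever they are cospectral, so $\chi_{G'} = \chi_{H'}$ iff $\chi_G = \chi_H$. Verifying spectrum preservation reduces to iterated block-matrix Schur-complement identities analogous to the universal-vertex and pendant attachment computations, and the construction is logspace-computable by direct simulation.
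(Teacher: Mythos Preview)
Your proposal has genuine gaps in both directions.

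For the reduction $\HomInd(\mathcal{C} \cup \mathcal{P}) \leq_L \HomInd(\mathcal{C})$, you rely on $\CL$-hardness of $\HomInd(\mathcal{C})$ for \emph{undirected} cycles, claiming it follows from ``an undirected adaptation of the gadget construction in \cref{thm:cycles}''. But this adaptation is precisely the open problem flagged in \cref{sec:undirected}: the gadgets in \cref{thm:cycles} exploit edge direction to control walk lengths rigidly, and whether one can do the same for undirected graphs is tied to Toda's 1984 question about counting undirected paths. The paper does \emph{not} establish $\CL$-hardness of undirected cospectrality of $\{0,1\}$-matrices; Corollary in \cref{sec:undirected} handles symmetric non-negative \emph{integer} matrices, and bridging the gap is exactly what's missing. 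So your first reduction assumes something the paper treats as open.

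For the reduction $\HomInd(\mathcal{C}) \leq_L \HomInd(\mathcal{C} \cup \mathcal{P})$, your regularisation idea has a subtler but fatal flaw. You attach at each vertex $v$ a gadget depending on $\deg_G(v)$ and argue that ``the cumulative gadget contributions agree between $G$ and $H$ whenever they are cospectral'' because cospectral graphs share all closed-walk counts. But cospectral graphs need not share their degree \emph{sequence}: $K_{1,4}$ and $C_4 + K_1$ are cospectral with degree sequences $(4,1,1,1,1)$ and $(2,2,2,2,0)$. Degree-dependent gadgets will therefore alter the two spectra differently in general; the trace invariants $\tr(A^k)$ do not determine the multiset $\{\deg(v)\}$, and Schur-complement bookkeeping will not rescue this.

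The paper avoids both obstacles with entirely different constructions. For the first direction it uses that $G \equiv_{\mathcal{C}\cup\mathcal{P}} H$ iff both $G \equiv_{\mathcal{C}} H$ and $\overline{G} \equiv_{\mathcal{C}} \overline{H}$, then combines the two equality tests into a single instance via the algebraic identity $(a-b)^2 + (c-d)^2 = 0$ realised through categorical products and disjoint unions. For the second direction it takes categorical products with the CFI graphs $C_3^0, C_3^1$ and $C_4^0, C_4^1$: these pairs are distinguished by odd (respectively even) cycles but by no paths, so tensoring with them isolates the cycle homomorphism counts while killing the path contribution.
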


    \section{Conclusion}

    The objective of this paper was to identify properties of graph classes $\mathcal{F}$
    that characterise the complexity of the decision problem $\HomInd(\mathcal{F})$.
    We showed that, for recognisable graph classes $\mathcal{F}$ of bounded treewidth, 
    the problem $\HomInd(\mathcal{F})$ is logspace many-one reducible to \PIT and can be \PIT-complete.
    For recognisable graph classes $\mathcal{F}$ of bounded pathwidth,
    the problem $\HomInd(\mathcal{F})$ lies in \CL and can be \CL-complete.
    In the first case, this shows optimality of the algorithm from \cite{seppelt_algorithmic_2024}
    while, in the second case, this improves upon \cite{seppelt_algorithmic_2024}.
    In the process, we show that \MWA equivalence is \CL-complete improving upon \cite{tzeng_path_1996}.

    Given the role of minor-closed graph classes in homomorphism indistinguishability \cite{roberson_oddomorphisms_2022,seppelt_logical_2024},
    it would be interesting to obtain analogous results for minor-closed graph classes of bounded treewidth and pathwidth
    (the graph classes in \cref{thm:pw-hardness,thm:recognisable-trees} are not closed under taking minors).
    
    For the first case, it was conjectured in \cite{seppelt_algorithmic_2024} that homomorphism indistinguishability is in \PTIME, which would be optimal in light of \cite{grohe_equivalence_1999}.
    Proving this, however, seems to require a better understanding of tree automata recognising minor-closed graph classes.
    The perhaps most tangible minor-closed bounded-treewidth graph class $\mathcal{F}$ for which no deterministic polynomial algorithm for $\HomInd(\mathcal{F})$ is known is the class of outerplanar graphs \cite{seppelt_algorithmic_2024}.
    In this case, $\HomInd(\mathcal{F})$ amounts to deciding exact feasibility of the first level of the Lasserre \textsmaller{SDP} hierarchy for graph isomorphism \cite{roberson_lasserre_2024}.
    
    Via \cref{thm:cycles-paths}, we have reduced the latter case to determining whether there is a complexity gap between homomorphism indistinguishability over directed and undirected cycles, which is related to an old question about \GapL-computation \cite{tzeng_path_1996,mahajan_complexity_2010}.
    Finally, in light of \cite{rasmann_finite_2025} showing that homomorphism indistinguishability over all graphs of treedepth $\leq k$ is in \LOGSPACE, it is conceivable that homomorphism indistinguishability over smaller graph classes, e.g.\ of bounded treedepth, can be placed into even smaller complexity classes.

    \newpage 
    \bibliographystyle{plainurl}

\begin{thebibliography}{10}

\bibitem{adler_monotonicity_2024}
Isolde Adler and Eva Fluck.
\newblock Monotonicity of the {Cops} and {Robber} {Game} for {Bounded} {Depth}
  {Treewidth}.
\newblock In Rastislav Kr\'alovi\v{c} and Anton\'n Ku\v{c}era, editors, {\em
  49th {International} {Symposium} on {Mathematical} {Foundations} of
  {Computer} {Science} ({MFCS} 2024)}, volume 306 of {\em Leibniz
  {International} {Proceedings} in {Informatics} ({LIPIcs})}, pages 6:1--6:18,
  Dagstuhl, Germany, 2024. Schloss Dagstuhl -- Leibniz-Zentrum für Informatik.
\newblock \href {https://doi.org/10.4230/LIPIcs.MFCS.2024.6}
  {\path{doi:10.4230/LIPIcs.MFCS.2024.6}}.

\bibitem{allender_complexity_1999}
E.~Allender, R.~Beals, and M.~Ogihara.
\newblock The complexity of matrix rank and feasible systems of linear
  equations.
\newblock {\em Computational Complexity}, 8(2):99--126, November 1999.
\newblock \href {https://doi.org/10.1007/s000370050023}
  {\path{doi:10.1007/s000370050023}}.

\bibitem{allender_arithmetic_2004}
Eric Allender.
\newblock Arithmetic circuits and counting complexity classes.
\newblock In {\em Complexity of computations and proofs}, pages 33--72. Aracne;
  Rome, 2004.
\newblock \href {https://doi.org/10.7282/00000363}
  {\path{doi:10.7282/00000363}}.

\bibitem{allender_complexity_2009}
Eric Allender, Peter Bürgisser, Johan Kjeldgaard-Pedersen, and Peter~Bro
  Miltersen.
\newblock On the {Complexity} of {Numerical} {Analysis}.
\newblock {\em SIAM Journal on Computing}, 38(5):1987--2006, January 2009.
\newblock \href {https://doi.org/10.1137/070697926}
  {\path{doi:10.1137/070697926}}.

\bibitem{allender_relationships_1996}
Eric Allender and Mitsunori Ogihara.
\newblock Relationships among {PL}, \#{L}, and the determinant.
\newblock {\em RAIRO - Theoretical Informatics and Applications - Informatique
  Théorique et Applications}, 30(1):1--21, 1996.
\newblock URL: \url{https://www.numdam.org/item/ITA_1996__30_1_1_0/}.

\bibitem{alvarez_very_1993}
Carme \'Alvarez and Birgit Jenner.
\newblock A very hard log-space counting class.
\newblock {\em Theoretical Computer Science}, 107(1):3--30, January 1993.
\newblock \href {https://doi.org/10.1016/0304-3975(93)90252-o}
  {\path{doi:10.1016/0304-3975(93)90252-o}}.

\bibitem{arora_computational_2009}
Sanjeev Arora and Boaz Barak.
\newblock {\em Computational complexity: a modern approach}.
\newblock Cambridge University Press, Cambridge; New York, 2009.

\bibitem{atserias_sheraliadams_2013}
Albert Atserias and Elitza Maneva.
\newblock Sherali--{Adams} {Relaxations} and {Indistinguishability} in
  {Counting} {Logics}.
\newblock {\em SIAM Journal on Computing}, 42(1):112--137, 2013.
\newblock \href {https://doi.org/10.1137/120867834}
  {\path{doi:10.1137/120867834}}.

\bibitem{atserias_quantum_2019}
Albert Atserias, Laura Man\v{c}inska, David~E. Roberson, Robert \v{S}\'amal,
  Simone Severini, and Antonios Varvitsiotis.
\newblock Quantum and non-signalling graph isomorphisms.
\newblock {\em J. Comb. Theory, Ser. B}, 136:289--328, 2019.
\newblock \href {https://doi.org/10.1016/j.jctb.2018.11.002}
  {\path{doi:10.1016/j.jctb.2018.11.002}}.

\bibitem{babai_graph_2016}
László Babai.
\newblock Graph {Isomorphism} in {Quasipolynomial} {Time} [{Extended}
  {Abstract}].
\newblock In {\em Proceedings of the {Forty}-{Eighth} {Annual} {ACM}
  {Symposium} on {Theory} of {Computing}}, {STOC} '16, pages 684--697, New
  York, NY, USA, 2016. Association for Computing Machinery.
\newblock \href {https://doi.org/10.1145/2897518.2897542}
  {\path{doi:10.1145/2897518.2897542}}.

\bibitem{berstel_recognizable_1982}
J.~Berstel and C.~Reutenauer.
\newblock Recognizable formal power series on trees.
\newblock {\em Theoretical Computer Science}, 18(2):115--148, May 1982.
\newblock \href {https://doi.org/10.1016/0304-3975(82)90019-6}
  {\path{doi:10.1016/0304-3975(82)90019-6}}.

\bibitem{bojanczyk_definability_2016}
Miko{\l}aj Boja\'nczyk and Micha{\l} Pilipczuk.
\newblock Definability equals recognizability for graphs of bounded treewidth.
\newblock In {\em Proceedings of the 31st {Annual} {ACM}/{IEEE} {Symposium} on
  {Logic} in {Computer} {Science}}, pages 407--416, New York NY USA, July 2016.
  ACM.
\newblock \href {https://doi.org/10.1145/2933575.2934508}
  {\path{doi:10.1145/2933575.2934508}}.

\bibitem{boker_et_al:LIPIcs.MFCS.2019.54}
Jan B\"{o}ker, Yijia Chen, Martin Grohe, and Gaurav Rattan.
\newblock {The Complexity of Homomorphism Indistinguishability}.
\newblock In Peter Rossmanith, Pinar Heggernes, and Joost-Pieter Katoen,
  editors, {\em 44th International Symposium on Mathematical Foundations of
  Computer Science (MFCS 2019)}, volume 138 of {\em Leibniz International
  Proceedings in Informatics (LIPIcs)}, pages 54:1--54:13, Dagstuhl, Germany,
  2019. Schloss Dagstuhl -- Leibniz-Zentrum f{\"u}r Informatik.
\newblock URL:
  \url{https://drops.dagstuhl.de/entities/document/10.4230/LIPIcs.MFCS.2019.54},
  \href {https://doi.org/10.4230/LIPIcs.MFCS.2019.54}
  {\path{doi:10.4230/LIPIcs.MFCS.2019.54}}.

\bibitem{boeker_complexity_2025}
Jan B\"{o}ker, Louis H\"{a}rtel, Nina Runde, Tim Seppelt, and Christoph
  Standke.
\newblock {The Complexity of Homomorphism Reconstructibility}.
\newblock {\em ACM Trans. Comput. Theory}, August 2025.
\newblock \href {https://doi.org/10.1145/3762196} {\path{doi:10.1145/3762196}}.

\bibitem{cai_optimal_1992}
Jin-Yi Cai, Martin Fürer, and Neil Immerman.
\newblock An optimal lower bound on the number of variables for graph
  identification.
\newblock {\em Combinatorica}, 12(4):389--410, 1992.
\newblock \href {https://doi.org/10.1007/BF01305232}
  {\path{doi:10.1007/BF01305232}}.

\bibitem{chen_fine_2015}
Hubie Chen and Moritz Müller.
\newblock The {Fine} {Classification} of {Conjunctive} {Queries} and
  {Parameterized} {Logarithmic} {Space}.
\newblock {\em ACM Transactions on Computation Theory}, 7(2):1--27, May 2015.
\newblock \href {https://doi.org/10.1145/2751316} {\path{doi:10.1145/2751316}}.

\bibitem{chiu_division_2010}
Andrew Chiu, George Davida, and Bruce Litow.
\newblock Division in logspace-uniform {$\NCone$}.
\newblock {\em RAIRO - Theoretical Informatics and Applications},
  35(3):259--275, March 2010.
\newblock URL: \url{http://eudml.org/doc/222086}.

\bibitem{cook_taxonomy_1985}
Stephen~A. Cook.
\newblock A taxonomy of problems with fast parallel algorithms.
\newblock {\em Information and Control}, 64(1-3):2--22, January 1985.
\newblock \href {https://doi.org/10.1016/s0019-9958(85)80041-3}
  {\path{doi:10.1016/s0019-9958(85)80041-3}}.

\bibitem{courcelle_monadic_1990}
Bruno Courcelle.
\newblock The monadic second-order logic of graphs. {I}. {Recognizable} sets of
  finite graphs.
\newblock {\em Information and Computation}, 85(1):12--75, March 1990.
\newblock \href {https://doi.org/10.1016/0890-5401(90)90043-H}
  {\path{doi:10.1016/0890-5401(90)90043-H}}.

\bibitem{courcelle_graph_2012}
Bruno Courcelle and Joost Engelfriet.
\newblock {\em Graph {Structure} and {Monadic} {Second}-{Order} {Logic}: {A}
  {Language}-{Theoretic} {Approach}}.
\newblock Cambridge University Press, USA, 1st edition, 2012.

\bibitem{damm_det_1991}
C.~Damm.
\newblock {$\DET = \LOGSPACE^{\SharpL}$}.
\newblock {\em Fachbereich Informatik der Humboldt-Universitat zu, Berlin},
  Informatik-Preprint 8, 1991.

\bibitem{dawar_lovasz-type_2021}
Anuj Dawar, Tomá\v{s} Jakl, and Luca Reggio.
\newblock Lovász-{Type} {Theorems} and {Game} {Comonads}.
\newblock In {\em 36th {Annual} {ACM}/{IEEE} {Symposium} on {Logic} in
  {Computer} {Science}, {LICS} 2021, {Rome}, {Italy}, {June} 29 - {July} 2,
  2021}, pages 1--13. IEEE, 2021.
\newblock \href {https://doi.org/10.1109/LICS52264.2021.9470609}
  {\path{doi:10.1109/LICS52264.2021.9470609}}.

\bibitem{dell_lovasz_2018}
Holger Dell, Martin Grohe, and Gaurav Rattan.
\newblock Lovász {Meets} {Weisfeiler} and {Leman}.
\newblock In Ioannis Chatzigiannakis, Christos Kaklamanis, Dániel Marx, and
  Donald Sannella, editors, {\em 45th {International} {Colloquium} on
  {Automata}, {Languages}, and {Programming} ({ICALP} 2018)}, volume 107 of
  {\em Leibniz {International} {Proceedings} in {Informatics} ({LIPIcs})},
  pages 40:1--40:14, Dagstuhl, Germany, 2018. Schloss Dagstuhl --
  Leibniz-Zentrum für Informatik.
\newblock \href {https://doi.org/10.4230/LIPIcs.ICALP.2018.40}
  {\path{doi:10.4230/LIPIcs.ICALP.2018.40}}.

\bibitem{dvorak_recognizing_2010}
Zden\v{e}k Dvo\v{r}\'ak.
\newblock On recognizing graphs by numbers of homomorphisms.
\newblock {\em Journal of Graph Theory}, 64(4):330--342, August 2010.
\newblock \href {https://doi.org/10.1002/jgt.20461}
  {\path{doi:10.1002/jgt.20461}}.

\bibitem{fluck_going_2024}
Eva Fluck, Tim Seppelt, and Gian~Luca Spitzer.
\newblock Going {Deep} and {Going} {Wide}: {Counting} {Logic} and
  {Homomorphism} {Indistinguishability} over {Graphs} of {Bounded} {Treedepth}
  and {Treewidth}.
\newblock In Aniello Murano and Alexandra Silva, editors, {\em 32nd {EACSL}
  {Annual} {Conference} on {Computer} {Science} {Logic} ({CSL} 2024)}, volume
  288 of {\em Leibniz {International} {Proceedings} in {Informatics}
  ({LIPIcs})}, pages 27:1--27:17, Dagstuhl, Germany, 2024. Schloss Dagstuhl --
  Leibniz-Zentrum für Informatik.
\newblock \href {https://doi.org/10.4230/LIPIcs.CSL.2024.27}
  {\path{doi:10.4230/LIPIcs.CSL.2024.27}}.

\bibitem{gai_homomorphism_2025}
Jingchu Gai, Yiheng Du, Bohang Zhang, Haggai Maron, and Liwei Wang.
\newblock Homomorphism {Expressivity} of {Spectral} {Invariant} {Graph}
  {Neural} {Networks}.
\newblock In {\em The Thirteenth International Conference on Learning
  Representations (ICLR 2025)}, January 2025.
\newblock URL: \url{https://openreview.net/forum?id=rdv6yeMFpn}.

\bibitem{grohe_equivalence_1999}
Martin Grohe.
\newblock Equivalence in {Finite}-{Variable} {Logics} is {Complete} for
  {Polynomial} {Time}.
\newblock {\em Combinatorica}, 19(4):507--532, October 1999.
\newblock \href {https://doi.org/10.1007/s004939970004}
  {\path{doi:10.1007/s004939970004}}.

\bibitem{grohe_counting_2020}
Martin Grohe.
\newblock Counting {Bounded} {Tree} {Depth} {Homomorphisms}.
\newblock In {\em Proceedings of the 35th {Annual} {ACM}/{IEEE} {Symposium} on
  {Logic} in {Computer} {Science}}, {LICS} '20, pages 507--520, New York, NY,
  USA, 2020. Association for Computing Machinery.
\newblock \href {https://doi.org/10.1145/3373718.3394739}
  {\path{doi:10.1145/3373718.3394739}}.

\bibitem{grohe_logic_2021}
Martin Grohe.
\newblock The {Logic} of {Graph} {Neural} {Networks}.
\newblock In {\em 36th {Annual} {ACM}/{IEEE} {Symposium} on {Logic} in
  {Computer} {Science}, {LICS}}, pages 1--17, Rome, Italy, 2021. IEEE.
\newblock \href {https://doi.org/10.1109/LICS52264.2021.9470677}
  {\path{doi:10.1109/LICS52264.2021.9470677}}.

\bibitem{grohe_pebble_2015}
Martin Grohe and Martin Otto.
\newblock Pebble {Games} and {Linear} {Equations}.
\newblock {\em The Journal of Symbolic Logic}, 80(3):797--844, 2015.
\newblock \href {https://doi.org/10.1017/jsl.2015.28}
  {\path{doi:10.1017/jsl.2015.28}}.

\bibitem{grohe_homomorphism_2025}
Martin Grohe, Gaurav Rattan, and Tim Seppelt.
\newblock Homomorphism {Tensors} and {Linear} {Equations}.
\newblock {\em Advances in Combinatorics}, April 2025.
\newblock \href {https://doi.org/10.19086/aic.2025.4}
  {\path{doi:10.19086/aic.2025.4}}.

\bibitem{grohe_graph_2020}
Martin Grohe and Pascal Schweitzer.
\newblock The {Graph} {Isomorphism} {Problem}.
\newblock {\em Commun. ACM}, 63(11):128--134, 2020.
\newblock Place: New York, NY, USA Publisher: Association for Computing
  Machinery.
\newblock \href {https://doi.org/10.1145/3372123} {\path{doi:10.1145/3372123}}.

\bibitem{haak_parameterised_2023}
Anselm Haak, Arne Meier, Om~Prakash, and B.~V.~Raghavendra Rao.
\newblock Parameterised {Counting} in {Logspace}.
\newblock {\em Algorithmica}, 85(10):2923--2961, October 2023.
\newblock \href {https://doi.org/10.1007/s00453-023-01114-2}
  {\path{doi:10.1007/s00453-023-01114-2}}.

\bibitem{hahn_graph_1997}
Ge\v{n}a Hahn and Claude Tardif.
\newblock Graph homomorphisms: structure and symmetry.
\newblock In Ge\v{n}a Hahn and Gert Sabidussi, editors, {\em Graph {Symmetry}:
  {Algebraic} {Methods} and {Applications}}, pages 107--166. Springer
  Netherlands, Dordrecht, 1997.
\newblock \href {https://doi.org/10.1007/978-94-015-8937-6_4}
  {\path{doi:10.1007/978-94-015-8937-6_4}}.

\bibitem{hattingh_show_2018}
Christiaan Hattingh.
\newblock Show that a block matrix is similar to another block matrix if and
  only if their blocks are similar.
\newblock Mathematics Stack Exchange, 2018.
\newblock URL: \url{https://math.stackexchange.com/q/2845842}.

\bibitem{hoang_complexity_2003}
Thanh~Minh Hoang.
\newblock {\em On the complexity of some problems in linear algebra}.
\newblock PhD thesis, Universit\"at Ulm, 2003.
\newblock URL:
  \url{https://www.uni-ulm.de/fileadmin/website_uni_ulm/iui.inst.190/Mitarbeiter/hoang/dissertation.pdf}.

\bibitem{hoang_complexity_2000}
T.M. Hoang and T.~Thierauf.
\newblock The complexity of verifying the characteristic polynomial and testing
  similarity.
\newblock In {\em Proceedings 15th {Annual} {IEEE} {Conference} on
  {Computational} {Complexity}}, pages 87--95, Florence, Italy, 2000. IEEE
  Comput. Soc.
\newblock \href {https://doi.org/10.1109/ccc.2000.856738}
  {\path{doi:10.1109/ccc.2000.856738}}.

\bibitem{horn_matrix_2010}
Roger~A. Horn and Charles~R. Johnson.
\newblock {\em Matrix analysis}.
\newblock Cambridge Univ. Press, Cambridge, 23. print edition, 2010.

\bibitem{immerman_descriptive_1999}
Neil Immerman.
\newblock {\em Descriptive {Complexity}}.
\newblock Springer New York, New York, NY, 1999.
\newblock \href {https://doi.org/10.1007/978-1-4612-0539-5}
  {\path{doi:10.1007/978-1-4612-0539-5}}.

\bibitem{kabanets_derandomizing_2003}
Valentine Kabanets and Russell Impagliazzo.
\newblock Derandomizing polynomial identity tests means proving circuit lower
  bounds.
\newblock In {\em Proceedings of the thirty-fifth annual {ACM} symposium on
  {Theory} of computing}, pages 355--364, San Diego CA USA, June 2003. ACM.
\newblock \href {https://doi.org/10.1145/780542.780595}
  {\path{doi:10.1145/780542.780595}}.

\bibitem{kar_npa_2025}
Prem~Nigam Kar, David~E. Roberson, Tim Seppelt, and Peter Zeman.
\newblock {NPA} {Hierarchy} for {Quantum} {Isomorphism} and {Homomorphism}
  {Indistinguishability}.
\newblock In Keren Censor-Hillel, Fabrizio Grandoni, Joël Ouaknine, and
  Gabriele Puppis, editors, {\em 52nd {International} {Colloquium} on
  {Automata}, {Languages}, and {Programming} ({ICALP} 2025)}, volume 334 of
  {\em Leibniz {International} {Proceedings} in {Informatics} ({LIPIcs})},
  pages 105:1--105:19, Dagstuhl, Germany, 2025. Schloss Dagstuhl --
  Leibniz-Zentrum für Informatik.
\newblock \href {https://doi.org/10.4230/LIPIcs.ICALP.2025.105}
  {\path{doi:10.4230/LIPIcs.ICALP.2025.105}}.

\bibitem{kiefer_notes_2020}
Stefan Kiefer.
\newblock Notes on {Equivalence} and {Minimization} of {Weighted} {Automata},
  September 2020.
\newblock \href {https://doi.org/10.48550/arXiv.2009.01217}
  {\path{doi:10.48550/arXiv.2009.01217}}.

\bibitem{lovasz_operations_1967}
László Lovász.
\newblock Operations with structures.
\newblock {\em Acta Mathematica Academiae Scientiarum Hungarica},
  18(3):321--328, September 1967.
\newblock \href {https://doi.org/10.1007/BF02280291}
  {\path{doi:10.1007/BF02280291}}.

\bibitem{lovasz_large_2012}
László Lovász.
\newblock {\em Large networks and graph limits}.
\newblock Number volume 60 in American {Mathematical} {Society} colloquium
  publications. American Mathematical Society, Providence, Rhode Island, 2012.

\bibitem{mahajan_complexity_2010}
Meena Mahajan and Jayalal M.~N. Sarma.
\newblock On the {Complexity} of {Matrix} {Rank} and {Rigidity}.
\newblock {\em Theory of Computing Systems}, 46(1):9--26, January 2010.
\newblock \href {https://doi.org/10.1007/s00224-008-9136-8}
  {\path{doi:10.1007/s00224-008-9136-8}}.

\bibitem{mahajan_determinant_1997}
Meena Mahajan and V~Vinay.
\newblock Determinant: {Combinatorics}, {Algorithms}, and {Complexity}.
\newblock {\em Chicago Journal of Theoretical Computer Science}, 1997.
\newblock URL: \url{https://repository.ias.ac.in/127977/}.

\bibitem{malkin_sheraliadams_2014}
Peter~N. Malkin.
\newblock Sherali--{Adams} relaxations of graph isomorphism polytopes.
\newblock {\em Discrete Optimization}, 12:73--97, May 2014.
\newblock \href {https://doi.org/10.1016/j.disopt.2014.01.004}
  {\path{doi:10.1016/j.disopt.2014.01.004}}.

\bibitem{mancinska_quantum_2020}
Laura Man\v{c}inska and David~E. Roberson.
\newblock Quantum isomorphism is equivalent to equality of homomorphism counts
  from planar graphs.
\newblock In {\em {IEEE} 61st {Annual} {Symposium} on {Foundations} of
  {Computer} {Science} ({FOCS})}, pages 661--672, 2020.
\newblock \href {https://doi.org/10.1109/FOCS46700.2020.00067}
  {\path{doi:10.1109/FOCS46700.2020.00067}}.

\bibitem{marusic_complexity_2015}
Ines Maru\v{s}i\'c and James Worrell.
\newblock Complexity of equivalence and learning for multiplicity tree
  automata.
\newblock {\em J. Mach. Learn. Res.}, 16:2465--2500, 2015.
\newblock \href {https://doi.org/10.5555/2789272.2912078}
  {\path{doi:10.5555/2789272.2912078}}.

\bibitem{montacute_pebble-relation_2024}
Yoàv Montacute and Nihil Shah.
\newblock The {Pebble}-{Relation} {Comonad} in {Finite} {Model} {Theory}.
\newblock {\em Logical Methods in Computer Science}, Volume 20, Issue 2, May
  2024.
\newblock \href {https://doi.org/10.46298/lmcs-20(2:9)2024}
  {\path{doi:10.46298/lmcs-20(2:9)2024}}.

\bibitem{morris_weisfeiler_2019}
Christopher Morris, Martin Ritzert, Matthias Fey, William~L. Hamilton, Jan~Eric
  Lenssen, Gaurav Rattan, and Martin Grohe.
\newblock Weisfeiler and {Leman} {Go} {Neural}: {Higher}-{Order} {Graph}
  {Neural} {Networks}.
\newblock {\em Proceedings of the AAAI Conference on Artificial Intelligence},
  33:4602--4609, July 2019.
\newblock \href {https://doi.org/10.1609/aaai.v33i01.33014602}
  {\path{doi:10.1609/aaai.v33i01.33014602}}.

\bibitem{nesetril_tree-depth_2006}
Jaroslav Ne\v{s}et\v{r}il and Patrice Ossona De~Mendez.
\newblock Tree-depth, subgraph coloring and homomorphism bounds.
\newblock {\em European Journal of Combinatorics}, 27(6):1022--1041, August
  2006.
\newblock \href {https://doi.org/10.1016/j.ejc.2005.01.010}
  {\path{doi:10.1016/j.ejc.2005.01.010}}.

\bibitem{rasmann_finite_2025}
Simon Ra{\ss}mann, Georg Schindling, and Pascal Schweitzer.
\newblock Finite {Variable} {Counting} {Logics} with {Restricted}
  {Requantification}.
\newblock In Jörg Endrullis and Sylvain Schmitz, editors, {\em 33rd {EACSL}
  {Annual} {Conference} on {Computer} {Science} {Logic} ({CSL} 2025)}, volume
  326 of {\em Leibniz {International} {Proceedings} in {Informatics}
  ({LIPIcs})}, pages 14:1--14:23, Dagstuhl, Germany, 2025. Schloss Dagstuhl --
  Leibniz-Zentrum für Informatik.
\newblock \href {https://doi.org/10.4230/LIPIcs.CSL.2025.14}
  {\path{doi:10.4230/LIPIcs.CSL.2025.14}}.

\bibitem{rattan_weisfeiler-leman_2023}
Gaurav Rattan and Tim Seppelt.
\newblock Weisfeiler-{Leman} and {Graph} {Spectra}.
\newblock In {\em Proceedings of the 2023 {Annual} {ACM}-{SIAM} {Symposium} on
  {Discrete} {Algorithms} ({SODA})}, pages 2268--2285, 2023.
\newblock \href {https://doi.org/10.1137/1.9781611977554.ch87}
  {\path{doi:10.1137/1.9781611977554.ch87}}.

\bibitem{reingold_undirected_2008}
Omer Reingold.
\newblock Undirected connectivity in log-space.
\newblock {\em Journal of the ACM}, 55(4):1--24, September 2008.
\newblock \href {https://doi.org/10.1145/1391289.1391291}
  {\path{doi:10.1145/1391289.1391291}}.

\bibitem{roberson_oddomorphisms_2022}
David~E. Roberson.
\newblock Oddomorphisms and homomorphism indistinguishability over graphs of
  bounded degree, June 2022.
\newblock URL: \url{http://arxiv.org/abs/2206.10321}.

\bibitem{roberson_lasserre_2024}
David~E. Roberson and Tim Seppelt.
\newblock Lasserre {Hierarchy} for {Graph} {Isomorphism} and {Homomorphism}
  {Indistinguishability}.
\newblock {\em TheoretiCS}, Volume 3:12321, September 2024.
\newblock \href {https://doi.org/10.46298/theoretics.24.20}
  {\path{doi:10.46298/theoretics.24.20}}.

\bibitem{robertson_graph_2004}
Neil Robertson and P.D. Seymour.
\newblock Graph {Minors}. {XX}. {Wagner}'s conjecture.
\newblock {\em Special Issue Dedicated to Professor W.T. Tutte},
  92(2):325--357, November 2004.
\newblock \href {https://doi.org/10.1016/j.jctb.2004.08.001}
  {\path{doi:10.1016/j.jctb.2004.08.001}}.

\bibitem{roman_advanced_2008}
Steven Roman.
\newblock {\em Advanced {Linear} {Algebra}}, volume 135 of {\em Graduate
  {Texts} in {Mathematics}}.
\newblock Springer New York, New York, NY, 2008.
\newblock \href {https://doi.org/10.1007/978-0-387-72831-5}
  {\path{doi:10.1007/978-0-387-72831-5}}.

\bibitem{schindling_homomorphism_2025}
Georg Schindling.
\newblock Homomorphism {Indistinguishability} and {Game} {Comonads} for
  {Restricted} {Conjunction} and {Requantification}.
\newblock In Pawel Gawrychowski, Filip Mazowiecki, and Michal Skrzypczak,
  editors, {\em 50th {International} {Symposium} on {Mathematical}
  {Foundations} of {Computer} {Science}, {MFCS} 2025, {August} 25-29, 2025,
  {Warsaw}, {Poland}}, volume 345 of {\em {LIPIcs}}, pages 89:1--89:19. Schloss
  Dagstuhl - Leibniz-Zentrum für Informatik, 2025.
\newblock \href {https://doi.org/10.4230/LIPICS.MFCS.2025.89}
  {\path{doi:10.4230/LIPICS.MFCS.2025.89}}.

\bibitem{schwartz_fast_1980}
J.~T. Schwartz.
\newblock Fast {Probabilistic} {Algorithms} for {Verification} of {Polynomial}
  {Identities}.
\newblock {\em Journal of the ACM}, 27(4):701--717, October 1980.
\newblock \href {https://doi.org/10.1145/322217.322225}
  {\path{doi:10.1145/322217.322225}}.

\bibitem{schutzenberger_definition_1961}
M.P. Schützenberger.
\newblock On the definition of a family of automata.
\newblock {\em Information and Control}, 4(2-3):245--270, September 1961.
\newblock \href {https://doi.org/10.1016/s0019-9958(61)80020-x}
  {\path{doi:10.1016/s0019-9958(61)80020-x}}.

\bibitem{seppelt_algorithmic_2024}
Tim Seppelt.
\newblock An {Algorithmic} {Meta} {Theorem} for {Homomorphism}
  {Indistinguishability}.
\newblock In Rastislav Kr\'alovi\v{c} and Anton\'in Ku\v{c}era, editors, {\em
  49th {International} {Symposium} on {Mathematical} {Foundations} of
  {Computer} {Science} ({MFCS} 2024)}, volume 306 of {\em Leibniz
  {International} {Proceedings} in {Informatics} ({LIPIcs})}, pages
  82:1--82:19, Dagstuhl, Germany, 2024. Schloss Dagstuhl -- Leibniz-Zentrum
  für Informatik.
\newblock \href {https://doi.org/10.4230/LIPIcs.MFCS.2024.82}
  {\path{doi:10.4230/LIPIcs.MFCS.2024.82}}.

\bibitem{seppelt_homomorphism_2024}
Tim Seppelt.
\newblock {\em Homomorphism {Indistinguishability}}.
\newblock Dissertation, RWTH Aachen University, Aachen, 2024.
\newblock \href {https://doi.org/10.18154/RWTH-2024-11629}
  {\path{doi:10.18154/RWTH-2024-11629}}.

\bibitem{seppelt_logical_2024}
Tim Seppelt.
\newblock Logical equivalences, homomorphism indistinguishability, and
  forbidden minors.
\newblock {\em Information and Computation}, 301:105224, December 2024.
\newblock \href {https://doi.org/10.1016/j.ic.2024.105224}
  {\path{doi:10.1016/j.ic.2024.105224}}.

\bibitem{slofstra_set_2019}
William Slofstra.
\newblock The {Set} of {Quantum} {Correlations} is not {Closed}.
\newblock {\em Forum of Mathematics, Pi}, 7:e1, 2019.
\newblock Edition: 2019/01/14 Publisher: Cambridge University Press.
\newblock \href {https://doi.org/10.1017/fmp.2018.3}
  {\path{doi:10.1017/fmp.2018.3}}.

\bibitem{toda_counting_1984}
Seinosuke Toda.
\newblock Counting {Problems} {Computationally} {Equivalent} to {Computing} the
  {Determinant}, 1991.
\newblock URL: \url{https://people.cs.rutgers.edu/~allender/papers/toda4.pdf}.

\bibitem{tzeng_path_1996}
Wen-Guey Tzeng.
\newblock On path equivalence of nondeterministic finite automata.
\newblock {\em Information Processing Letters}, 58(1):43--46, April 1996.
\newblock \href {https://doi.org/10.1016/0020-0190(96)00039-7}
  {\path{doi:10.1016/0020-0190(96)00039-7}}.

\bibitem{vinay_counting_1991}
V.~Vinay.
\newblock Counting auxiliary pushdown automata and semi-unbounded arithmetic
  circuits.
\newblock In {\em {Proceedings} of the {Sixth} {Annual} {Structure} in
  {Complexity} {Theory} {Conference}}, pages 270--284, Chicago, IL, USA, 1991.
  IEEE Comput Soc. Press.
\newblock \href {https://doi.org/10.1109/SCT.1991.160269}
  {\path{doi:10.1109/SCT.1991.160269}}.

\bibitem{xu_how_2019}
Keyulu Xu, Weihua Hu, Jure Leskovec, and Stefanie Jegelka.
\newblock How {Powerful} are {Graph} {Neural} {Networks}?
\newblock In {\em International {Conference} on {Learning} {Representations}},
  2019.
\newblock URL: \url{https://openreview.net/forum?id=ryGs6iA5Km}.

\bibitem{zhang_beyond_2024}
Bohang Zhang, Jingchu Gai, Yiheng Du, Qiwei Ye, Di~He, and Liwei Wang.
\newblock Beyond {Weisfeiler}--{Lehman}: {A} {Quantitative} {Framework} for
  {GNN} {Expressiveness}.
\newblock In {\em The {Twelfth} {International} {Conference} on {Learning}
  {Representations}}, 2024.
\newblock URL: \url{https://openreview.net/forum?id=HSKaGOi7Ar}.

\bibitem{goos_probabilistic_1979}
Richard Zippel.
\newblock Probabilistic algorithms for sparse polynomials.
\newblock In G.~Goos, J.~Hartmanis, P.~Brinch~Hansen, D.~Gries, C.~Moler,
  G.~Seegmüller, J.~Stoer, N.~Wirth, and Edward~W. Ng, editors, {\em Symbolic
  and {Algebraic} {Computation}}, volume~72, pages 216--226. Springer Berlin
  Heidelberg, Berlin, Heidelberg, 1979.
\newblock \href {https://doi.org/10.1007/3-540-09519-5_73}
  {\path{doi:10.1007/3-540-09519-5_73}}.

\end{thebibliography}

    \newpage
    \appendix

    \section{Extended preliminaries: Labelled graphs of bounded treewidth and pathwidth}
	\label{sec:labelled}
	
	Let $F$ be a graph. A \emph{tree decomposition} of $F$ is a pair $(T, \beta)$ where $T$ is a tree and $\beta \colon V(T) \to 2^{V(F)}$ is a map such that
	\begin{enumerate}
		\item the union of the $\beta(t)$ for $t \in V(T)$ is equal to $V(F)$,
		\item for every edge $uv \in E(F)$ there exists $t \in V(T)$ such that $\{u,v\} \subseteq \beta(t)$,
		\item for every vertex $u \in V(F)$ the set of vertices $t \in V(T)$ such that $u \in \beta(t)$ is connected in~$T$.
	\end{enumerate}
	The \emph{width} of $(T, \beta)$ is the maximum over all $\lvert \beta(t) \rvert - 1$ for $t \in V(T)$.
	The \emph{treewidth} $\tw(F)$ of $F$ is the minimum width of a tree decomposition of $F$.
	A \emph{path decomposition} is a tree decomposition $(T, \beta)$ where $T$ is a path. 
    The \emph{pathwidth} $\pw(F)$ of $F$ is the minimum width of a path decomposition of $F$.
    
    We recall the following family of distinctly labelled graphs from \cite[Definition~7]{seppelt_algorithmic_2024}.
    We will not rely on the concrete technical conditions and use them only via \cref{lem:twk-gen}.
	
	\begin{definition}\label{def:twk}
		Let $k \geq 1$.
		Define $\mathcal{TW}(k)$ as the set of all $\boldsymbol{F} = (F, \boldsymbol{u}) \in \mathcal{D}(k)$ such that
		$F$ admits a tree decomposition $(T, \beta)$ of width $\leq k-1$ 
		satisfying the assertions of \cite[Lemma~6]{seppelt_algorithmic_2024}.
	\end{definition}
	
	It follows that the underlying unlabelled graphs of the labelled graphs in $\mathcal{TW}(k)$ are exactly the graphs of treewidth $\leq k-1$ on $\geq k$ vertices.
	The family $\mathcal{TW}(k)$ is generated by certain small building blocks under series composition and gluing as follows:
	Let $\boldsymbol{1} \in \mathcal{TW}(k)$ be the distinctly $k$-labelled graph on $k$ vertices without any edges.
	For $i \in [k]$, let $\boldsymbol{J}^i = (J^i, (1, \dots, k), \allowbreak (1, \dots, i-1, \widehat{i}, i+1, \dots, k))$ the distinctly $(k,k)$-bilabelled graph with $V(J^i) \coloneqq [k] \cup \{\widehat{i}\}$ and $E(J^i) \coloneqq \emptyset$.
	Writing $\binom{[k]}{2}$ for the set of pairs of distinct elements in $[k]$,
	let $\boldsymbol{A}^{ij} = (A^{ij}, (1, \dots, k), (1, \dots, k))$ for $ij \in \binom{[k]}{2}$ be the distinctly $(k,k)$-bilabelled graph with $V(A^{ij}) \coloneqq [k]$ and $E(A^{ij}) \coloneqq \{ij\}$.
	These graphs are depicted in \cref{fig:basal}.
	Let $\mathcal{B}(k) \coloneqq \{\boldsymbol{J}^i \mid i \in [k]\} \allowbreak \cup \{\boldsymbol{A}^{ij} \mid ij \in \binom{[k]}{2}\}$.

	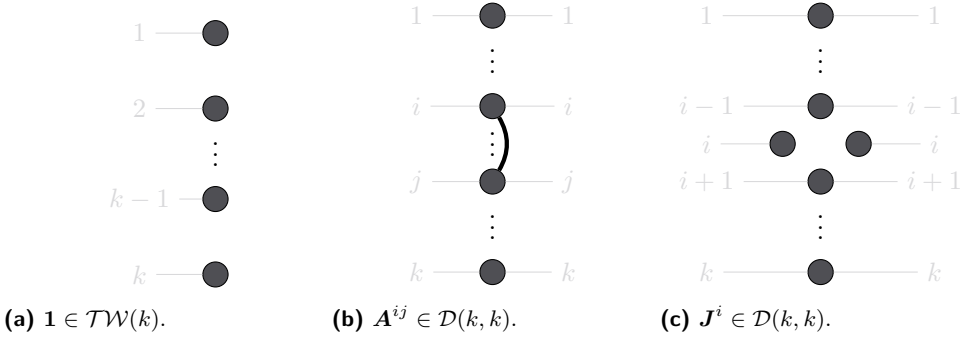
\begin{figure}
		\begin{subfigure}{.3\linewidth}
			\centering
			\begin{tikzpicture}
				\node (a) [vertex] {};
				\node (ab) [vertex,below of=a] {};
				\node (b) [below of=ab, yshift=.5cm] {\vdots};
				\node (f) [below of=b,vertex, yshift=.3cm] {};
				\node (fb) [below of=f,vertex] {};
				
				\node (a1) [lbl,left of=a] {$1$};
\draw [color=lipicsLightGray] (a) -- (a1);

				\node (ab1) [lbl,left of=ab] {$2$};
\draw [color=lipicsLightGray] (ab) -- (ab1);
\node (f1) [lbl,left of=f] {$k-1$};
\draw [color=lipicsLightGray] (f) -- (f1);
\node (fb1) [lbl,left of=fb] {$k$};
\draw [color=lipicsLightGray] (fb) -- (fb1);
\end{tikzpicture}
			\caption{$\boldsymbol{1} \in \mathcal{TW}(k)$.}
			\label{fig:one}
		\end{subfigure}
		\begin{subfigure}{.3\linewidth}
			\centering
			\begin{tikzpicture}
				\node (a) [vertex] {};
				\node (b) [below of=a, yshift=.5cm] {\vdots};
				\node (c) [below of=b,vertex, yshift=.3cm] {};
				\node (d) [below of=c,vertex] {};
				\node (e) [below of=d, yshift=.5cm] {\vdots};
				\node (f) [below of=e,vertex, yshift=.3cm] {};
				\draw [ultra thick] (c) edge [bend left] (d); 
				\node (e) [below of=c, yshift=.6cm] {\vdots};
				
				\node (a1) [lbl,left of=a] {$1$};
				\node (a2) [lbl,right of=a] {$1$};
				\draw [color=lipicsLightGray] (a) -- (a1);
				\draw [color=lipicsLightGray] (a) -- (a2);
				
				\node (c1) [lbl,left of=c] {$i$};
				\node (c2) [lbl,right of=c] {$i$};
				\draw [color=lipicsLightGray] (c) -- (c1);
				\draw [color=lipicsLightGray] (c) -- (c2);
				\node (d1) [lbl,left of=d] {$j$};
				\node (d2) [lbl,right of=d] {$j$};
				\draw [color=lipicsLightGray] (d) -- (d1);
				\draw [color=lipicsLightGray] (d) -- (d2);
				\node (f1) [lbl,left of=f] {$k$};
				\node (f2) [lbl,right of=f] {$k$};
				\draw [color=lipicsLightGray] (f) -- (f1);
				\draw [color=lipicsLightGray] (f) -- (f2);
			\end{tikzpicture}
			\caption{$\boldsymbol{A}^{ij} \in \mathcal{D}(k,k)$.}
		\end{subfigure}
		\begin{subfigure}{.3\linewidth}
			\centering
			\begin{tikzpicture}
				\node (a) [vertex] {};
				\node (b) [below of=a, yshift=.5cm] {\vdots};
				\node (c) [below of=b,vertex, yshift=.3cm] {};
				\node (dx) [below of=c,vertex, xshift=-.5cm, yshift=.5cm] {};
				\node (dy) [below of=c,vertex, xshift=.5cm, yshift=.5cm] {};
				\node (dz) [below of=dx, xshift=.5cm,vertex, yshift=.5cm] {};
				\node (e) [below of=dz, yshift=.5cm] {\vdots};
				\node (f) [below of=e,vertex, yshift=.3cm] {};

				\node (a1) [lbl,left of=a,xshift=-.5cm] {$1$};
				\node (a2) [lbl,right of=a,xshift=.5cm] {$1$};
				\draw [color=lipicsLightGray] (a) -- (a1);
				\draw [color=lipicsLightGray] (a) -- (a2);
				
				\node (c1) [lbl,left of=c,xshift=-.5cm] {$i-1$};
				\node (c2) [lbl,right of=c,xshift=.5cm] {$i-1$};
				\draw [color=lipicsLightGray] (c) -- (c1);
				\draw [color=lipicsLightGray] (c) -- (c2);
				\node (d1) [lbl,left of=dx] {$i$};
				\node (d2) [lbl,right of=dy] {$i$};
				\draw [color=lipicsLightGray] (d1) to (dx);
				\draw [color=lipicsLightGray] (d2) to (dy);
				
				\node (d3) [lbl,left of=dz,xshift=-.5cm] {$i+1$};
				\node (d4) [lbl,right of=dz,xshift=.5cm] {$i+1$};
				\draw [color=lipicsLightGray] (dz) -- (d3);
				\draw [color=lipicsLightGray] (dz) -- (d4);
				
				\node (f1) [lbl,left of=f,xshift=-.5cm] {$k$};
				\node (f2) [lbl,right of=f,xshift=.5cm] {$k$};
				\draw [color=lipicsLightGray] (f) -- (f1);
				\draw [color=lipicsLightGray] (f) -- (f2);
			\end{tikzpicture}
			\caption{$\boldsymbol{J}^{i} \in \mathcal{D}(k,k)$.}
		\end{subfigure}
		\caption{The (bi)labelled generators of $\mathcal{TW}(k)$ in wire notion of \cite{mancinska_quantum_2020}. A vertex carries in-label (out-label) $i$ if it is connected to the index~$i$ on the left (right) by a wire. Actual edges and vertices of the graph are depicted in black.}
		\label{fig:basal}
	\end{figure}
	
	\begin{lemma}[{\cite[Lemma 9]{seppelt_algorithmic_2024}}] \label{lem:twk-gen}
		Let $k \geq 1$.
		For every $\boldsymbol{F} \in \mathcal{TW}(k)$, one of the following holds:
		\begin{enumerate}
			\item $\boldsymbol{F} = \boldsymbol{1}$,
			\item $\boldsymbol{F} = \prod_{ij \in A} \boldsymbol{A}^{ij} \cdot \boldsymbol{F}'$ for some $A \subseteq \binom{[k]}{2}$ and $\boldsymbol{F}' \in \mathcal{TW}(k)$ with less edges than $\boldsymbol{F}$,
			\item $\boldsymbol{F} = \boldsymbol{J}^i \cdot \boldsymbol{F}'$ for some $i \in [k]$ and $\boldsymbol{F}' \in \mathcal{TW}(k)$ with less vertices than $\boldsymbol{F}$,
			\item $\boldsymbol{F} = \boldsymbol{F}_1 \odot \boldsymbol{F}_2$ for $\boldsymbol{F}_1, \boldsymbol{F}_2 \in \mathcal{TW}(k)$ on less vertices than~$\boldsymbol{F}$.
		\end{enumerate}
	\end{lemma}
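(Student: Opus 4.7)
The plan is to proceed by induction on the pair $(|V(F)|, |E(F)|)$ ordered lexicographically. Fix a tree decomposition $(T, \beta)$ of $F$ of width at most $k-1$ satisfying the technical conditions of \cite[Lemma~6]{seppelt_algorithmic_2024} that witness $\boldsymbol{F} \in \mathcal{TW}(k)$. These conditions are designed to produce a distinguished root $r \in V(T)$ whose bag controls where the labels $u_1, \dots, u_k$ sit, and this local information is exactly what lets us read off the four structural alternatives. The base case is $|V(F)| = k$ with $E(F) = \emptyset$, forcing $\boldsymbol{F} = \boldsymbol{1}$ and giving case 1.

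For the inductive step I would first check whether $F$ contains an edge between two labelled vertices. If so, let $A := \{ij \in \binom{[k]}{2} : u_i u_j \in E(F)\}$ and form $\boldsymbol{F}'$ by deleting all these edges from $\boldsymbol{F}$. Edge deletion preserves the tree decomposition, so $\boldsymbol{F}' \in \mathcal{TW}(k)$, and the combinatorial definition of $\boldsymbol{A}^{ij}$ together with series composition yields $\boldsymbol{F} = \prod_{ij \in A} \boldsymbol{A}^{ij} \cdot \boldsymbol{F}'$; this is case 2 with $|E(\boldsymbol{F}')| < |E(\boldsymbol{F})|$.

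Assume no such labelled edge exists. Inspect the children of $r$ in $T$. If two or more subtrees contribute vertex-disjoint portions of $V(F) \setminus \{u_1, \dots, u_k\}$, then $F$ decomposes at the labels as $F_1 \cup F_2$, and labelling both parts by $\boldsymbol{u}$ gives $\boldsymbol{F}_1, \boldsymbol{F}_2 \in \mathcal{TW}(k)$ on strictly fewer vertices with $\boldsymbol{F} = \boldsymbol{F}_1 \odot \boldsymbol{F}_2$; this is case 4. Otherwise, the local structure just below $r$ should force the existence of an index $i \in [k]$ such that $u_i$ is isolated in $F$, which is exactly what is required to invert the $\boldsymbol{J}^i$ operation: since the unmerged vertex of $\boldsymbol{J}^i$ contributes no edges, the $i$-th labelled vertex of any $\boldsymbol{J}^i \cdot \boldsymbol{F}'$ is necessarily isolated. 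Picking a suitable vertex in a bag adjacent to $r$ to receive label $i$ and deleting $u_i$ then yields $\boldsymbol{F}' \in \mathcal{TW}(k)$ on fewer vertices satisfying $\boldsymbol{F} = \boldsymbol{J}^i \cdot \boldsymbol{F}'$, giving case 3.

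The main obstacle will be the exhaustiveness of this trichotomy in the inductive step: namely, that whenever no labelled edge exists and no split-at-labels is available, the tree decomposition genuinely exposes a labelled vertex that is isolated in $F$ together with a valid substitute in a neighbouring bag so that $\boldsymbol{F}'$ still admits a decomposition of the required form. Establishing this reduces to a careful unpacking of the local structure near $r$ enforced by \cite[Lemma~6]{seppelt_algorithmic_2024}, and would constitute the bulk of the work; the remaining verifications that $\boldsymbol{A}^{ij}\cdot\boldsymbol{F}'$, $\boldsymbol{J}^i\cdot\boldsymbol{F}'$, and $\boldsymbol{F}_1\odot\boldsymbol{F}_2$ produce the claimed graphs are then routine from the definitions of the generators in \cref{fig:basal}.
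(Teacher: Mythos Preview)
The paper does not contain a proof of this lemma; it is quoted verbatim from \cite[Lemma~9]{seppelt_algorithmic_2024} and used as a black box. There is therefore no in-paper argument to compare your proposal against.

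That said, your outline is the natural one and almost certainly matches the original proof in \cite{seppelt_algorithmic_2024}: peel off labelled edges at the root bag (case~2), split at the root if it has several children (case~4), and otherwise descend along the unique child, forgetting the labelled vertex that drops out of the bag (case~3). You have also correctly located the one nontrivial point, namely that in the single-child situation the forgotten label $u_i$ is genuinely isolated. This follows from the tree-decomposition connectivity axiom once the normalisation of \cite[Lemma~6]{seppelt_algorithmic_2024} guarantees that adjacent bags differ in exactly one vertex: then $u_i$ appears only in $\beta(r)$, so any edge at $u_i$ would have to lie entirely in $\beta(r)$ and hence be a labelled edge, contradicting the assumption that case~2 has been exhausted. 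With that observation the remaining verifications are, as you say, routine.
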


 	For bounded-pathwidth graph classes, we perform an analogous construction:
    \begin{definition} \label{def:pwk}
		For $k\geq 1$,
		write $\mathcal{PW}(k) \subseteq \mathcal{D}(k)$ for the class of all distinctly $k$-labelled graphs $\boldsymbol{F} = (F, \boldsymbol{u})$ such that there exists a path decomposition $(P, \beta)$ of $F$ where
		\begin{enumerate}
			\item $|\beta(t)| = k$ for all $t \in V(P)$,
			\item $|\beta(s) \cap \beta(t)| = k-1$ for all $st \in E(P)$,
			\item there exists a vertex $r \in V(P)$ of degree at most $1$ such that $\beta(r) = \{u_1, \dots, u_k\}$.
		\end{enumerate}
	\end{definition}
 
	It follows that the underlying unlabelled graphs of the labelled graphs in $\mathcal{PW}(k)$ are exactly the graphs of pathwidth $\leq k-1$ on $\geq k$ vertices.
    The following \cref{lem:pwk-gen} is analogous to \cref{lem:twk-gen}.
	
	\begin{lemma}[\cite{seppelt_algorithmic_2024}] \label{lem:pwk-gen}
		Let $k \geq 1$.
		For every $\boldsymbol{F} \in \mathcal{PW}(k)$, one of the following holds:
		\begin{enumerate}
			\item $\boldsymbol{F} = \boldsymbol{1}$,
			\item $\boldsymbol{F} = \prod_{ij \in A} \boldsymbol{A}^{ij} \cdot \boldsymbol{F}'$ for some $A \subseteq \binom{[k]}{2}$ and $\boldsymbol{F}' \in \mathcal{PW}(k)$ with less edges than $\boldsymbol{F}$,
			\item $\boldsymbol{F} = \boldsymbol{J}^i \cdot \boldsymbol{F}'$ for some $i \in [k]$ and $\boldsymbol{F}' \in \mathcal{PW}(k)$ with less vertices than $\boldsymbol{F}$.
		\end{enumerate}
	\end{lemma}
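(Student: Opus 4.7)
The plan is to fix a path decomposition $(P, \beta)$ witnessing $\boldsymbol{F} = (F, \boldsymbol{u}) \in \mathcal{PW}(k)$ per \cref{def:pwk}, with distinguished endpoint $r$ satisfying $\beta(r) = \{u_1, \dots, u_k\}$, and to peel off one operation starting from the bag $\beta(r)$. The argument parallels the treewidth proof of \cref{lem:twk-gen} from \cite{seppelt_algorithmic_2024}, but is noticeably simpler because path decompositions are linear, so no $\odot$-step is ever needed.

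The case analysis will split as follows. If $V(F) = \{u_1, \dots, u_k\}$ and $E(F) = \emptyset$, then $\boldsymbol{F} = \boldsymbol{1}$, giving case~1. Otherwise, let $A \coloneqq \{ij \in \binom{[k]}{2} \mid u_i u_j \in E(F)\}$. If $A \neq \emptyset$, set $\boldsymbol{F}' \coloneqq (F - A, \boldsymbol{u})$; the same path decomposition $(P, \beta)$ witnesses $\boldsymbol{F}' \in \mathcal{PW}(k)$, and unwinding the definitions of $\boldsymbol{A}^{ij}$ and series composition yields $\boldsymbol{F} = \prod_{ij \in A} \boldsymbol{A}^{ij} \cdot \boldsymbol{F}'$ with strictly fewer edges, which is case~2. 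The remaining situation is $A = \emptyset$ together with $V(F) \supsetneq \{u_1, \dots, u_k\}$. Here $r$ has a unique neighbour $t$ in $P$; conditions~1 and~2 of \cref{def:pwk} force $\beta(r) \setminus \beta(t) = \{u_i\}$ for a unique $i \in [k]$, and the connectedness axiom of path decompositions ensures $u_i$ appears in no other bag. Hence every neighbour of $u_i$ in $F$ lies in $\{u_j : j \neq i\}$, and since $A = \emptyset$, the vertex $u_i$ is in fact isolated in $F$. Writing $\{w\} = \beta(t) \setminus \beta(r)$, I will define $\boldsymbol{F}' \coloneqq (F - u_i,\,(u_1, \dots, u_{i-1}, w, u_{i+1}, \dots, u_k))$; the restriction of $(P, \beta)$ to $P - r$, with $t$ playing the role of the new distinguished endpoint, witnesses $\boldsymbol{F}' \in \mathcal{PW}(k)$.

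The principal step is then to verify $\boldsymbol{F} = \boldsymbol{J}^i \cdot \boldsymbol{F}'$. In the series composition, the out-label vertex $\widehat{i}$ of $J^i$ is glued to $w$, the out-label vertex $j$ of $J^i$ is glued to $u_j$ for $j \neq i$, and the in-label vertex $i$ of $J^i$ is left as a fresh, unmerged vertex carrying the resulting $i$-th label---playing exactly the role of the isolated vertex $u_i$ in $\boldsymbol{F}$. Since $J^i$ contributes no edges, the edge set of the composition equals $E(F - u_i) = E(F)$. I expect this last identification to be the one requiring real care; once the isolation of $u_i$ is in hand, however, it amounts to a direct unpacking of the definitions of $\boldsymbol{J}^i$ and series composition.
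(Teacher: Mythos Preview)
Your argument is correct and is precisely the natural proof. Note, however, that the paper does not actually give its own proof of \cref{lem:pwk-gen}; it merely states the lemma with a citation to \cite{seppelt_algorithmic_2024}, so there is no in-paper argument to compare against. Your case analysis---peel off the edges in the root bag via the $\boldsymbol{A}^{ij}$'s, and if none remain peel off the unique label vertex that drops out of the neighbouring bag via $\boldsymbol{J}^i$---is exactly the expected adaptation of the treewidth argument in \cref{lem:twk-gen}, with the linearity of the path decomposition obviating the $\odot$-case. The one place where a reader might pause is the claim that $u_i$ is isolated in the third case; you handle this correctly by combining the connectedness axiom (so every edge incident to $u_i$ lies inside $\beta(r)$) with $A = \emptyset$. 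The verification that $(P - r, \beta|_{P-r})$ witnesses $\boldsymbol{F}' \in \mathcal{PW}(k)$, with $t$ as the new degree-$\leq 1$ endpoint, is routine and your sketch covers it.
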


    \section{Proof of \cref{thm:main1}}
    \label{app:proof-main1}

    \begin{proof}[Proof of \cref{thm:main1}]
        First consider the bounded pathwidth case.
        We define three \MWAs: one depending only on the graph class $\mathcal{F}$ and two each depending on one of the input graphs $G$ and $H$, which we are to test for homomorphism indistinguishability over $\mathcal{F}$.
        The final automata will be the product automata of these.
        
        Let $k \in \mathbb{N}$.
        Let $\Sigma \coloneqq \mathcal{B}(k)$ as defined above \cref{lem:twk-gen}.
        Given a $k$-recognisable graph class $\mathcal{F}$
        of pathwidth less than $k$,
        write $S$ for the equivalence classes of $\sim^k_{\mathcal{F}}$.
        By \cref{lem:fefvau},
        for every $\boldsymbol{L} \in \mathcal{D}(k,k)$,
        there exists a map $f_{\boldsymbol{L}} \colon S \to S$
        such that, if $\boldsymbol{F}$ is in the $\sim^k_{\mathcal{F}}$-equivalence class corresponding to $s \in S$, then $ \boldsymbol{F} \cdot \boldsymbol{L}$ is in the equivalence class corresponding to $f_{\boldsymbol{L}}(s)$.
        Write $M_{\boldsymbol{L}} \in \mathbb{Q}^{S \times S}$ for the matrix $\sum_{s \in S} e_s e_{f_{\boldsymbol{L}}(s)}^\top $.
        That is, $M_{\boldsymbol{L}}$ encodes how $\boldsymbol{L}$ acts on the equivalence classes of $\sim^k_{\mathcal{F}}$.
        The map $M \colon \Sigma \to \mathbb{Q}^{S \times S}$ is now given via $\boldsymbol{L} \mapsto M_{\boldsymbol{L}}$.
        Finally, let $\alpha$ be the indicator vector on the equivalence class containing the labelled graph $\boldsymbol{1} \in \mathcal{D}(k)$ and $\eta$ be the indicator vector on the $s \in S$ corresponding to equivalence classes containing graphs from $\mathcal{F}$.
        In the following claim, we slightly abuse notation by identifying $\Sigma^*$ with a subset of $\mathcal{D}(k,k)$ as justified by \cref{lem:pwk-gen}.

        \begin{claim}
            The \MWA $\mathcal{A}_{\mathcal{F}} = (S, \Sigma, M, \alpha, \eta)$
            maps $\boldsymbol{F} \in \Sigma^*$ to a non-zero number if, and only if, $\soe(\boldsymbol{F}) \in \mathcal{F}$.
        \end{claim}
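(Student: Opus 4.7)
The plan is to carry out a direct induction on the length $t$ of the input word $\boldsymbol{F} = \boldsymbol{L}_1 \cdots \boldsymbol{L}_t \in \Sigma^*$. Under the identification of $\Sigma^*$ with a subset of $\mathcal{PW}(k) \subseteq \mathcal{D}(k)$ supplied by \cref{lem:pwk-gen}, the word $\boldsymbol{L}_1\cdots\boldsymbol{L}_t$ corresponds to the distinctly $k$-labelled graph $\boldsymbol{1}\cdot\boldsymbol{L}_1\cdots\boldsymbol{L}_t$, built by repeatedly composing generators of $\mathcal{B}(k)$ against $\boldsymbol{1}$. The target of the induction is to show that $\alpha^\top M(\boldsymbol{L}_1 \cdots \boldsymbol{L}_t)$ equals the indicator row vector $e_{[\boldsymbol{F}]}^\top$ of the $\sim^k_{\mathcal{F}}$-equivalence class of the corresponding labelled graph; the claim then follows immediately from the definition of $\eta$.

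The base case $t=0$ is immediate, since $\alpha$ is by definition the indicator of $[\boldsymbol{1}]$. For the inductive step, the crucial algebraic observation is that the transition matrix $M_{\boldsymbol{L}} = \sum_{s \in S} e_s e_{f_{\boldsymbol{L}}(s)}^\top$ satisfies $e_s^\top M_{\boldsymbol{L}} = e_{f_{\boldsymbol{L}}(s)}^\top$. Assuming the inductive hypothesis $\alpha^\top M(\boldsymbol{L}_1 \cdots \boldsymbol{L}_{t-1}) = e_{[\boldsymbol{1} \cdot \boldsymbol{L}_1 \cdots \boldsymbol{L}_{t-1}]}^\top$ and right-multiplying by $M_{\boldsymbol{L}_t}$ produces $e_{f_{\boldsymbol{L}_t}([\boldsymbol{1} \cdot \boldsymbol{L}_1 \cdots \boldsymbol{L}_{t-1}])}^\top = e_{[\boldsymbol{1} \cdot \boldsymbol{L}_1 \cdots \boldsymbol{L}_t]}^\top$, where the final equality is precisely the defining property of $f_{\boldsymbol{L}_t}$; its well-definedness on equivalence classes is exactly the content of \cref{lem:fefvau}.

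Combining the induction with the definition of $\eta$, one computes $\sem{\mathcal{A}_{\mathcal{F}}}(\boldsymbol{F}) = \alpha^\top M(\boldsymbol{F}) \eta = e_{[\boldsymbol{F}]}^\top \eta = \eta_{[\boldsymbol{F}]}$, which equals $1$ if $[\boldsymbol{F}]$ contains some labelled graph whose underlying unlabelled graph lies in $\mathcal{F}$, and $0$ otherwise. Since $\boldsymbol{1}$ is a two-sided unit for the gluing operation $\odot$, instantiating \cref{def:recog} with $\boldsymbol{K} = \boldsymbol{1}$ yields that membership of $\soe(\cdot)$ in $\mathcal{F}$ is invariant within each equivalence class of $\sim^k_{\mathcal{F}}$; hence $\eta_{[\boldsymbol{F}]} = 1$ if and only if $\soe(\boldsymbol{F}) \in \mathcal{F}$. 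In particular, $\sem{\mathcal{A}_{\mathcal{F}}}$ takes values in $\{0,1\}$ and is non-zero exactly on those words whose associated labelled graph has $\soe$ in $\mathcal{F}$.

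The only real subtlety, rather than a genuine obstacle, is bookkeeping: one must fix the identification of $\Sigma^*$ with $\mathcal{PW}(k)$ (via repeated application of \cref{lem:pwk-gen}) in such a way that the left-to-right reading order of the word matches the order in which $M(w) = M(a_1)\cdots M(a_t)$ multiplies transition matrices, and in turn matches the direction in which $f_{\boldsymbol{L}}$ acts on equivalence classes. Once this convention is fixed consistently, the induction above is mechanical, its only substantive ingredient being \cref{lem:fefvau}.
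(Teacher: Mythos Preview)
Your proof is correct and follows essentially the same approach as the paper: both recognise that $\mathcal{A}_{\mathcal{F}}$ is a deterministic automaton whose state after reading $\boldsymbol{L}_1\cdots\boldsymbol{L}_t$ is the $\sim^k_{\mathcal{F}}$-class of the corresponding labelled graph, and then check whether the final state is accepting. Your version is more explicit---you carry out the induction formally and justify, via \cref{def:recog} with $\boldsymbol{K}=\boldsymbol{1}$, why the accepting set $\eta$ is well-defined on equivalence classes---whereas the paper compresses this into a one-sentence description of the state sequence $q_{i+1}=f_{\boldsymbol{D}^{i+1}}(q_i)$.
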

        \begin{claimproof}
        		Let $\boldsymbol{F} = \boldsymbol{D}^1 \cdots \boldsymbol{D}^\ell$ for some $\boldsymbol{D}^1, \dots, \boldsymbol{D}^\ell \in \mathcal{B}(k)$.
        		The value $\sem{\mathcal{A}_{\mathcal{F}}}(\boldsymbol{F})$ is one if the sequence of states $q_0$, $q_{i+1} \coloneqq f_{\boldsymbol{D}^{i+1}}(q_i)$ for $0 \leq i \leq \ell -1$ terminates in an accepting state,
        		and zero otherwise.
        \end{claimproof}

        The second \MWA $\mathcal{A}_G$ is given by states $S_G \coloneqq V(G)^k$,
        transition matrices are given by $M_G \colon \boldsymbol{L} \mapsto \boldsymbol{L}_G$, and vectors $\alpha'^\top \coloneqq \eta' \coloneqq \boldsymbol{1}_G \in \mathbb{Q}^{V(G)^k}$.
        Define $\mathcal{A}_H$ analogously.

        \begin{claim}
            For every $\boldsymbol{F} \in \Sigma^*$,
            $\sem{\mathcal{A}_G}(\boldsymbol{F}) = \hom(\soe(\boldsymbol{F}), G)$.
        \end{claim}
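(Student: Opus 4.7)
The plan is to evaluate $\sem{\mathcal{A}_G}(\boldsymbol{F})$ by unfolding the \MWA semantics and then translating each algebraic operation back to its combinatorial counterpart via the dictionary on labelled graphs and homomorphism tensors recalled earlier.

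First, writing $\boldsymbol{F} = \boldsymbol{D}^1 \boldsymbol{D}^2 \cdots \boldsymbol{D}^\ell$ with each $\boldsymbol{D}^i \in \mathcal{B}(k)$, I would unfold the definition of $\mathcal{A}_G$ to obtain
\[
\sem{\mathcal{A}_G}(\boldsymbol{F}) \;=\; \alpha'^\top\, M_G(\boldsymbol{D}^1) \cdots M_G(\boldsymbol{D}^\ell)\, \eta' \;=\; \boldsymbol{1}_G^\top\, \boldsymbol{D}^1_G \cdots \boldsymbol{D}^\ell_G\, \boldsymbol{1}_G.
\]
Because $\boldsymbol{1}$ is the distinctly $k$-labelled graph on $k$ isolated vertices, its homomorphism tensor $\boldsymbol{1}_G$ assigns value $1$ to every tuple in $V(G)^k$, so $\boldsymbol{1}_G^\top$ and $\boldsymbol{1}_G$ act as the row and column all-ones vectors.

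Second, iterating the ``series composition corresponds to matrix product'' rule from \cref{operations}, the matrix product folds into a single homomorphism tensor: $\boldsymbol{D}^1_G \cdots \boldsymbol{D}^\ell_G = (\boldsymbol{D}^1 \cdot \boldsymbol{D}^2 \cdots \boldsymbol{D}^\ell)_G$, where the composed bilabelled graph is precisely $\boldsymbol{F}$ under the identification of the word with its series composition in $\mathcal{D}(k,k)$ (the empty word corresponding to the identity bilabelled graph on $k$ isolated vertices). Combining these observations yields
\[
\sem{\mathcal{A}_G}(\boldsymbol{F}) \;=\; \sum_{\boldsymbol{x}, \boldsymbol{y}\, \in\, V(G)^k} \boldsymbol{F}_G(\boldsymbol{x}, \boldsymbol{y}).
\]

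Finally, I would argue that this double sum equals $\hom(\soe(\boldsymbol{F}), G)$. Writing $\boldsymbol{F} = (F, \boldsymbol{u}, \boldsymbol{v})$, each summand $\boldsymbol{F}_G(\boldsymbol{x}, \boldsymbol{y})$ counts homomorphisms $h \colon F \to G$ with $h(u_i) = x_i$ and $h(v_j) = y_j$ for all $i, j$. Distinctness of the in-labels $\boldsymbol{u}$ (and of the out-labels $\boldsymbol{v}$) ensures that each homomorphism $h \colon F \to G$ contributes to exactly one term, namely the one with $(\boldsymbol{x}, \boldsymbol{y}) = (h(\boldsymbol{u}), h(\boldsymbol{v}))$. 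The main subtlety is that composing generators may create inner vertices of $F$ that carry no label at all; I expect this to be the only point requiring care in the write-up. However, it presents no actual obstacle, since the definition of $\boldsymbol{F}_G(\boldsymbol{x}, \boldsymbol{y})$ already sums over all extensions of the boundary assignment to those inner vertices, so every homomorphism is counted exactly once and the claim follows.
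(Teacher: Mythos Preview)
Your proposal is correct and follows the same approach as the paper: unfold the \MWA semantics to get $\boldsymbol{1}_G^\top \boldsymbol{D}^1_G \cdots \boldsymbol{D}^\ell_G \boldsymbol{1}_G$, then invoke the correspondence between series composition and matrix product and between unlabelling and sum-of-entries. The paper's proof is a two-line version of exactly this; your added detail about distinct labels and inner vertices is fine but not needed, since the $\soe$ correspondence on p.~\pageref{operations} already packages it.
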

        \begin{claimproof}
        	Let $\boldsymbol{F} = \boldsymbol{D}^1 \cdots \boldsymbol{D}^\ell$ for some $\boldsymbol{D}^1, \dots, \boldsymbol{D}^\ell \in \mathcal{B}(k)$.
        	The value $\sem{\mathcal{A}_G}(\boldsymbol{F})$ is equal to $\boldsymbol{1}_G^\top  \boldsymbol{D}^1_G \cdots \boldsymbol{D}^\ell_G \boldsymbol{1}_G = \hom(\soe(\boldsymbol{F}), G)$ by the correspondence of series composition to matrix products and sum-of-entries to unlabelling.
        \end{claimproof}

        Clearly, the automata $\mathcal{A}_G$ and $\mathcal{A}_H$ can be computed in logspace.
        By the claims, the product automata \cite[p.\ 2473]{marusic_complexity_2015}
        $\mathcal{A}_{\mathcal{F}} \otimes \mathcal{A}_G $
        and $\mathcal{A}_{\mathcal{F}} \otimes \mathcal{A}_H $
        are equivalent if, and only if, $G$ and $H$ are homomorphism indistinguishable over the graphs $F \in \mathcal{F}$ on at least $k$ vertices.
        The automata can be computed in logspace \cite[Proposition~2]{marusic_complexity_2015}.
        Whether $G$ and $H$ are homomorphism indistinguishable over the graphs $F \in \mathcal{F}$ on at most $k$ vertices can be done in logspace with brute force.

		In the bounded treewidth case, we proceed analogously.
		We first define an \MTA $\mathcal{A}_{\mathcal{F}}$.
		For $k \in \mathbb{N}$
		and a $k$-recognisable graph class $\mathcal{F}$,
		let $S$ denote the set of equivalence classes of~$\sim^k_{\mathcal{F}}$.
		Let $\Omega$ be given by $\Omega_0\coloneqq\{\mathbf{1}\}$,  $\Omega_1 \coloneqq \mathcal{B}(k)$
		and $\Omega_2 \coloneqq \{{\odot}\}$.
		The tree representation $\mu$ is given by the union of 
        $\Omega_0 \to \mathbb{Q}^{1 \times S}$, $\boldsymbol{1} \mapsto \alpha$ and
        $\Omega_1 \to \mathbb{Q}^{S \times S}$, $\boldsymbol{L} \mapsto M_{\boldsymbol{L}}$ as defined above, 
		and
		$\Omega_2 \to \mathbb{Q}^{S^2 \times S}$,
		${\odot} \mapsto \sum_{s, s' \in S} e_{(s,s')} e_{g(s,s')}^\top$ where $g \colon S\times S \to S$ is the map such that, if $\boldsymbol{F}$ and $\boldsymbol{F}'$ belong to $s,s'$ respectively, then $\boldsymbol{F} \odot \boldsymbol{F}'$ belongs to $g(s,s')$.
		Let $\eta$ be the indicator vector on the $s \in S$ corresponding to equivalence classes containing graphs from $\mathcal{F}$.
        \begin{claim}
            The \MTA $\mathcal{A}_{\mathcal{F}}$
            maps $\boldsymbol{F} \in T_\Omega$ to a non-zero number if, and only if, $\soe(\boldsymbol{F}) \in \mathcal{F}$.
        \end{claim}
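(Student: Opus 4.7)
The plan is to reduce the claim to an induction on the structure of the input tree $\boldsymbol{F} \in T_\Omega$, mirroring the proof of the analogous \MWA claim in the pathwidth case. First I would make explicit the identification (justified by \cref{lem:twk-gen}) of trees $\boldsymbol{F} \in T_\Omega$ with distinctly $k$-labelled graphs in $\mathcal{TW}(k)$: the constant $\boldsymbol{1}$ evaluates to the base graph $\boldsymbol{1} \in \mathcal{D}(k)$, a unary symbol $\boldsymbol{L} \in \mathcal{B}(k)$ applied to a subtree $t'$ evaluates to $\boldsymbol{L} \cdot \llbracket t' \rrbracket$, and the binary symbol $\odot$ applied to $t_1, t_2$ evaluates to $\llbracket t_1 \rrbracket \odot \llbracket t_2 \rrbracket$. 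Under this identification, $\soe(\boldsymbol{F})$ means $\soe(\llbracket \boldsymbol{F} \rrbracket)$.

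Next I would prove by induction on the height of $t \in T_\Omega$ that $\mu(t) \in \mathbb{Q}^{1\times S}$ equals $e_{[\llbracket t\rrbracket]}^\top$, the indicator row-vector of the $\sim^k_{\mathcal{F}}$-equivalence class of $\llbracket t \rrbracket$. The base case $t = \boldsymbol{1}$ is immediate from the definition $\mu(\boldsymbol{1}) = \alpha$. For $t = \boldsymbol{L}(t')$ with $\boldsymbol{L} \in \mathcal{B}(k)$, by the MTA recursion and the inductive hypothesis, $\mu(t) = \mu(t') \cdot M_{\boldsymbol{L}} = e_{[\llbracket t'\rrbracket]}^\top M_{\boldsymbol{L}} = e_{f_{\boldsymbol{L}}([\llbracket t'\rrbracket])}^\top$, and by \cref{lem:fefvau}(2) the class $f_{\boldsymbol{L}}([\llbracket t'\rrbracket])$ is exactly $[\boldsymbol{L} \cdot \llbracket t'\rrbracket] = [\llbracket t \rrbracket]$. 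For $t = {\odot}(t_1, t_2)$, the inductive hypothesis gives $\mu(t_1)\otimes \mu(t_2) = e_{([\llbracket t_1\rrbracket],[\llbracket t_2\rrbracket])}^\top$, and multiplying by $\mu(\odot) = \sum_{s,s' \in S} e_{(s,s')} e_{g(s,s')}^\top$ selects $e_{g([\llbracket t_1\rrbracket],[\llbracket t_2\rrbracket])}^\top$, which equals $e_{[\llbracket t_1 \rrbracket \odot \llbracket t_2\rrbracket]}^\top$ by \cref{lem:fefvau}(1).

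Finally, $\sem{\mathcal{A}_{\mathcal{F}}}(\boldsymbol{F}) = \mu(\boldsymbol{F}) \cdot \eta = \eta_{[\llbracket \boldsymbol{F} \rrbracket]}$, which by definition of $\eta$ is $1$ if the equivalence class of $\llbracket \boldsymbol{F} \rrbracket$ contains a graph in $\mathcal{F}$ and $0$ otherwise. Since $\boldsymbol{F}_1 \sim^k_{\mathcal{F}} \boldsymbol{F}_2$ implies (taking $\boldsymbol{K} = \boldsymbol{1}$ in \cref{def:recog}) that $\soe(\boldsymbol{F}_1) \in \mathcal{F}$ iff $\soe(\boldsymbol{F}_2) \in \mathcal{F}$, the non-zeroness of $\sem{\mathcal{A}_{\mathcal{F}}}(\boldsymbol{F})$ is equivalent to $\soe(\boldsymbol{F}) \in \mathcal{F}$, as desired.

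The only real subtlety (rather than a serious obstacle) is the well-definedness of the maps $f_{\boldsymbol{L}}$ and $g$ on equivalence classes, which is precisely the content of \cref{lem:fefvau}; once that is in hand, the induction is mechanical. Note also that this argument deliberately ignores graphs on fewer than $k$ vertices, which (as in the pathwidth case) must be handled separately by brute force when assembling the overall reduction.
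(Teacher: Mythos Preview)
Your proposal is correct and follows exactly the approach the paper intends: the paper in fact states this claim without proof, relying on the analogy with the \MWA claim in the pathwidth case, and your structural induction on $T_\Omega$ (showing that $\mu(t)$ is the indicator row-vector of the $\sim^k_{\mathcal{F}}$-class of $\llbracket t\rrbracket$, then pairing with $\eta$) is precisely that analogy made explicit. Your invocation of \cref{lem:fefvau} for the well-definedness of $f_{\boldsymbol{L}}$ and $g$, and of \cref{def:recog} with $\boldsymbol{K}=\boldsymbol{1}$ for the final step, are exactly the right ingredients.
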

        
        The second \MTA $\mathcal{A}_G$ is given by states $S_G \coloneqq V(G)^k$,
        mapping of the constant $\Omega_0 \to \mathbb{Q}^{1 \times S}$, $\boldsymbol{1} \mapsto \boldsymbol{1}_G^\top \in \mathbb{Q}^{1\times V(G)^k}$,
        and
        arity-$1$ transition matrices  $\Omega_1 \to \mathbb{Q}^{V(G)^k \times V(G)^k}$, $\boldsymbol{L} \mapsto \boldsymbol{L}_G$. 
        The transition matrix of $\odot$ is $M \in \mathbb{Q}^{S_G^2 \times S_G}$ given by 
        $M((\boldsymbol{x}, \boldsymbol{y}), \boldsymbol{z}) = 1$ if $\boldsymbol{x} = \boldsymbol{y} = \boldsymbol{z}$ and zero otherwise for $\boldsymbol{x}, \boldsymbol{y}, \boldsymbol{z} \in V(G)^k$.
        Thereby, $(\boldsymbol{F}_G \otimes \boldsymbol{F}'_G) M = \boldsymbol{F}_G \odot \boldsymbol{F}'_G$.
        The final vector is  $\eta' \coloneqq \boldsymbol{1}_G \in \mathbb{Q}^{V(G)^k}$.
        Define $\mathcal{A}_H$ analogously.
        \begin{claim}
            For every $\boldsymbol{F} \in T_\Omega$,
            $\sem{\mathcal{A}_G}(\boldsymbol{F}) = \hom(\soe(\boldsymbol{F}), G)$.
        \end{claim}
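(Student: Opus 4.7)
The plan is to proceed by structural induction on $t \in T_\Omega$. For each tree $t$, let $\boldsymbol{F}(t) \in \mathcal{D}(k)$ denote the $k$-labelled graph obtained by interpreting $t$ as a term in the algebra of labelled graphs: the constant $\boldsymbol{1}$ evaluates to the $k$-labelled edgeless graph; each unary symbol $\boldsymbol{L} \in \mathcal{B}(k) = \Omega_1$ acts by series composition; and the binary symbol $\odot$ acts by gluing. In particular, $\soe(\boldsymbol{F}(t))$ is the underlying unlabelled graph of $t$. The core of the induction will be the stronger statement that $\mu(t)^\top = \boldsymbol{F}(t)_G$ as elements of $\mathbb{Q}^{V(G)^k}$, from which the claim follows immediately by
\[
\sem{\mathcal{A}_G}(t) = \mu(t)\cdot \eta' = \boldsymbol{1}_G^\top \boldsymbol{F}(t)_G = \soe(\boldsymbol{F}(t)_G) = \hom(\soe(\boldsymbol{F}(t)), G),
\]
invoking the correspondence between unlabelling and sum-of-entries recalled in the preliminaries.

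For the base case $t = \boldsymbol{1}$, the definition yields $\mu(\boldsymbol{1}) = \boldsymbol{1}_G^\top$, so $\mu(t)^\top = \boldsymbol{1}_G$, which is indeed the homomorphism tensor of the $k$-labelled graph on $k$ isolated labelled vertices. For the series-composition case $t = \boldsymbol{L}(t')$ with $\boldsymbol{L} \in \Omega_1$, the MTA recursion gives $\mu(t) = \mu(t')\cdot \boldsymbol{L}_G$, and so $\mu(t)^\top = \boldsymbol{L}_G^\top\cdot \boldsymbol{F}(t')_G$ by the induction hypothesis. Here one uses the key observation that for every $\boldsymbol{L} \in \mathcal{B}(k)$ the homomorphism matrix $\boldsymbol{L}_G$ is symmetric: both $\boldsymbol{A}^{ij}_G(\boldsymbol{u},\boldsymbol{v}) = [\boldsymbol{u} = \boldsymbol{v}][u_i u_j \in E(G)]$ and $\boldsymbol{J}^i_G(\boldsymbol{u},\boldsymbol{v}) = [u_\ell = v_\ell \text{ for all } \ell \neq i]$ are symmetric under swapping $\boldsymbol{u}$ and $\boldsymbol{v}$. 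Hence $\mu(t)^\top = \boldsymbol{L}_G\cdot \boldsymbol{F}(t')_G = (\boldsymbol{L}\cdot \boldsymbol{F}(t'))_G = \boldsymbol{F}(t)_G$ by the series-composition correspondence.

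For the gluing case $t = \odot(t_1, t_2)$, the recursion gives $\mu(t) = (\mu(t_1) \otimes \mu(t_2))\cdot \mu(\odot)$. Transposing and applying the mixed-product property of the Kronecker product together with the induction hypotheses yields $\mu(t)^\top = \mu(\odot)^\top \cdot (\boldsymbol{F}(t_1)_G \otimes \boldsymbol{F}(t_2)_G)$. Evaluating this vector at $\boldsymbol{z} \in V(G)^k$ and unfolding the defining formula $M((\boldsymbol{x},\boldsymbol{y}),\boldsymbol{z}) = [\boldsymbol{x} = \boldsymbol{y} = \boldsymbol{z}]$ of $\mu(\odot)$ collapses the sum to $\boldsymbol{F}(t_1)_G(\boldsymbol{z})\boldsymbol{F}(t_2)_G(\boldsymbol{z})$, which is $(\boldsymbol{F}(t_1) \odot \boldsymbol{F}(t_2))_G(\boldsymbol{z}) = \boldsymbol{F}(t)_G(\boldsymbol{z})$ by the gluing-to-Schur correspondence.

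The only subtlety worth flagging is the bookkeeping around the transpose: $\mu(t)$ is a row vector built by right-multiplication while the correspondences recalled on page~\pageref{operations} are stated for column-vector/matrix conventions. The symmetry of the matrices $\boldsymbol{L}_G$ with $\boldsymbol{L} \in \mathcal{B}(k)$ is exactly what makes these two conventions align in the series-composition step, so this observation is the main obstacle and the only place where the specific shape of the generators $\mathcal{B}(k)$ really enters the argument; everything else is formal manipulation of the MTA semantics.
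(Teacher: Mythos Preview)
Your proof is correct and is precisely the structural-induction argument the paper leaves implicit: the paper states this claim without proof, relying on the analogous \MWA case (proved by unfolding the semantics and invoking the series-composition/matrix-product correspondence) together with the remark $(\boldsymbol{F}_G \otimes \boldsymbol{F}'_G)M = \boldsymbol{F}_G \odot \boldsymbol{F}'_G$ made just before the claim. Your stronger invariant $\mu(t)^\top = \boldsymbol{F}(t)_G$ and your attention to the transpose bookkeeping via the symmetry of $\boldsymbol{L}_G$ for $\boldsymbol{L}\in\mathcal{B}(k)$ make explicit a detail the paper passes over in silence.
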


        As above, the product automata $\mathcal{A}_{\mathcal{F}} \otimes \mathcal{A}_G $
        and $\mathcal{A}_{\mathcal{F}} \otimes \mathcal{A}_H $
        are equivalent if, and only if, $G$ and $H$ are homomorphism indistinguishable over the graphs $F \in \mathcal{F}$ on at least $k$ vertices.
    \end{proof}

    \section{Proof of \cref{lem:uncolour}}
    \label{app:kneser}

    \begin{proof}[Proof of \cref{lem:uncolour}]
        Let $C$ denote the finite set of vertex colours of graphs occurring in $\mathcal{C}$.
        By \cite{hahn_graph_1997,boker_et_al:LIPIcs.MFCS.2019.54,boeker_complexity_2025},
        there exists a collection of homomorphically incomparable connected Kneser graphs $P$, $Q$, $V$, and $K_c$, for $c \in C$.
        Write $\ell$ for the maximum number of vertices among these graphs.
        For each of these Kneser graphs, pick some vertex which will subsequently be called the \emph{tip} of the Kneser graph.
        We consider the gadgets depicted in \cref{fig:direction-gadget,fig:indicator-gadget}.
        The circles denote the Kneser graphs, the incident filled circles their tips.

        \begin{figure}
            \centering
        	\begin{subfigure}{.45\linewidth}
        		\centering
                \resizebox{!}{.9\height}{
        		\begin{tikzpicture}
        			\node [circle, fill, inner sep=2pt] (u) {};
        			\node [anchor=south] at (u.north) {$u$};
        			
        			\node [circle, fill, inner sep=2pt] at (2,0) (p1) {};
        			\node [circle, fill, inner sep=2pt] at (4,0) (p2) {};
        			\node [circle, fill, inner sep=2pt] at (6,0) (v) {};
        			\node [anchor=south] at (v.north) {$v$};
        			
        			\draw (u) edge node [midway,
        			above] {$P_{10\ell}$} (p1);
        			
        			\draw (p1) edge node [midway,
        			above] {$P_{2\ell}$} (p2);
        			
        			\draw (p2) edge node [midway,
        			above] {$P_{10\ell}$} (v);
        			
        			\node [draw, circle, inner sep=6pt] at (2, 2) (g1) {$P$};
        			\node [circle, fill, inner sep=2pt] at (g1.south) (c1) {}; 
        			\node [draw, circle, inner sep=6pt] at (4, 2) (g2) {$Q$};
        			\node [circle, fill, inner sep=2pt] at (g2.south) (c2) {}; 
        			
        			\draw (p1) edge node [midway,
        			left] {$P_{2\ell}$} (c1);
        			\draw (p2) edge node [midway,
        			right] {$P_{2\ell}$} (c2);
        		\end{tikzpicture}}
        		\caption{Direction gadget $L_{u \to v}$.}
        		\label{fig:direction-gadget}
        	\end{subfigure}
            \begin{subfigure}{.45\linewidth}
            	\centering
                \resizebox{!}{.9\height}{
            	\begin{tikzpicture}
            		\node [circle, fill, inner sep=2pt] (u) {};
            		\node [anchor=south] at (u.north) {$u$};
            		
            		\node [draw, circle, inner sep=6pt] at (-2, 2) (g1) {$K_c$};
            		\node [circle, fill, inner sep=2pt] at (g1.south) (c1) {}; 
            		\node [draw, circle, inner sep=6pt] at (2, 2) (g2) {$V$};
            		\node [circle, fill, inner sep=2pt] at (g2.south) (c2) {}; 
            		
            		\draw (u) edge [bend left] node [midway,
            		right, anchor=south west] {$P_{2\ell}$} (c1);
            		\draw (u) edge [bend right]  node [midway,
            		left, anchor=south east] {$P_{2\ell}$} (c2);
            	\end{tikzpicture}}
            	\caption{Indicator gadget for vertex $u$ of colour $c \in C$.}
            	\label{fig:indicator-gadget}
            \end{subfigure}
            \caption{Gadgets constructed in the proof of \cref{lem:uncolour}.}
        \end{figure}

        For a directed vertex-coloured graph $F$ with colours in $C$,
        let $\widetilde{F}$ denote the uncoloured uncoloured graph obtained by applying these gadgets to $F$.
        Let $\mathcal{F} \coloneqq \{\widetilde{F} \mid F \in \mathcal{C}\}$.
        The following claims imply the theorem.

        \begin{claim}
            The map $G \mapsto \widetilde{G}$ can be computed in logspace.
        \end{claim}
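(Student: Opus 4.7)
The plan is to exhibit an explicit logspace transducer that, on input a directed vertex-coloured graph $G$, outputs an adjacency list (or adjacency matrix) for $\widetilde{G}$. Since the colour set $C$ is \emph{fixed} (it depends only on the class $\mathcal{C}$, not on the input), the Kneser graphs $P, Q, V$ and $K_c$ for $c \in C$ are of constant size, and hence $\ell$ is a constant. Consequently every copy of a gadget inserted by the construction has a constant number $O(1)$ of vertices, and $\widetilde{G}$ has $O(|V(G)| + |E(G)|)$ vertices in total.

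First I would fix a uniform naming scheme for $V(\widetilde{G})$: each vertex is encoded by a tuple $(\mathtt{kind}, x, p)$ where $\mathtt{kind}$ is one of finitely many tags indicating which type of gadget (direction gadget along an edge, indicator gadget at a vertex, or a vertex of $G$ itself), $x$ is either a vertex-identifier from $V(G)$ or an edge-identifier from $E(G)$ (the latter encoded as an ordered pair of vertex-identifiers), and $p$ is a local position inside the gadget drawn from a fixed constant-size set (a vertex of one of $P, Q, V, K_c$, or an internal subdivision vertex of one of the paths $P_{10\ell}$ or $P_{2\ell}$). Each tuple occupies $O(\log|V(G)|)$ bits, so a single vertex-name fits in the logspace work tape.

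Next, I would enumerate $V(\widetilde{G})$: two nested loops, one over $V(G)$ and one over $V(G) \times V(G)$ checking membership in $E(G)$, together with a constant-size loop over admissible local positions $p$, produce every vertex-name exactly once. Enumeration of $E(\widetilde{G})$ is handled by a similar doubly nested loop over pairs $((\mathtt{kind}_1, x_1, p_1), (\mathtt{kind}_2, x_2, p_2))$: for each pair the transducer decides adjacency by a fixed case analysis on the tags and local positions, consulting $G$ at a constant number of places (to check whether $x_1, x_2$ are the same vertex or whether a particular directed edge exists, and to read the colour of a vertex in the indicator-gadget case). All such tests run in logspace, so writing the list of edges of $\widetilde{G}$ is a logspace task.

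The construction is entirely routine; there is no real obstacle, only the bookkeeping of the naming scheme. The one point worth emphasising is that the crucial parameter $\ell$ and the collection of Kneser graphs $\{P, Q, V\} \cup \{K_c : c \in C\}$ depend only on the fixed class $\mathcal{C}$, so they are hard-wired constants of the transducer rather than data that must be computed from the input; this is what keeps every local gadget of constant size and hence keeps both the vertex names and the adjacency checks within $O(\log|V(G)|)$ space.
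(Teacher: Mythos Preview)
Your proposal is correct and follows the same idea as the paper: since the gadgets are hardcoded constants depending only on the fixed class $\mathcal{C}$, replacing each vertex and edge by its gadget is a local operation implementable in logspace. The paper's own proof is a single sentence to this effect; you have simply spelled out the transducer in more detail than necessary.
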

        \begin{claimproof}
        	The replacement of vertex colours and edge directions by hardcoded gadgets can be implemented in logspace. 
        \end{claimproof}

        \begin{claim}
            For every $F \in \mathcal{C}$,
            $\tw(\widetilde{F}) \leq \ell \cdot \tw(F)$
            and $\pw(\widetilde{F}) \leq \ell \cdot \pw(F)$.
        \end{claim}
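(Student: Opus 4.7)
The plan is to convert a tree (or path) decomposition of $F$ of optimal width into a decomposition of $\widetilde{F}$ by grafting on, for each gadget inserted by the construction, a small sub-decomposition at an appropriate attachment node. Concretely, I would first fix a tree decomposition $(T,\beta)$ of $F$ of width $\tw(F)$. By the standard property of tree decompositions, for every edge $uv\in E(F)$ there exists a node $t_{uv}\in V(T)$ with $\{u,v\}\subseteq\beta(t_{uv})$, and for every coloured vertex $u\in V(F)$ there exists a node $t_u\in V(T)$ with $u\in\beta(t_u)$.

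Next, I would construct once per gadget a sub-decomposition in which the attachment vertices ($\{u,v\}$ for the direction gadget $L_{u\to v}$, $\{u\}$ for the indicator gadget) lie together in some single bag. The backbone of each gadget is essentially a path with at most two Kneser subgraphs on $\le\ell$ vertices branching off via short paths; since a Kneser graph on $\le\ell$ vertices trivially admits a one-bag decomposition of width $\le\ell-1$, one obtains for each gadget a \emph{path} decomposition of width $O(\ell)$ in which the attachment vertices appear together in a common bag. This step is mechanical once one picks a bag order that walks the backbone and temporarily enlarges the bag to swallow each Kneser subgraph as it is reached.

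Third, I would graft each gadget sub-decomposition onto $(T,\beta)$: the decomposition of $L_{u\to v}$ is attached at $t_{uv}$, identifying the gadget's $\{u,v\}$-containing bag with $\beta(t_{uv})$, and the indicator gadget is attached analogously at $t_u$. Since each gadget shares with $F$ (and with other gadgets) only its attachment vertices, which are already bag-connected in $(T,\beta)$, the assembly is a valid tree decomposition of $\widetilde F$ of width at most $\max(\tw(F)+1,\ O(\ell))$. The stated bound $\ell\cdot\tw(F)$ then follows (possibly after absorbing the constants hidden in the $O(\ell)$ term into $\ell$, or by trivially enlarging bags to meet the claimed product form).

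For the pathwidth statement, the same construction applies, but now $(T,\beta)$ is a path decomposition and, crucially, each gadget sub-decomposition built in Step~2 is itself a path decomposition. I would insert these sub-paths sequentially between consecutive bags of $(T,\beta)$ in a linear order chosen so that each attachment vertex $u$ still meets all of its gadgets within its contiguous interval of bags. The main obstacle I anticipate is precisely this scheduling: multiple direction gadgets attached at the same original vertex must be inserted linearly without breaking the contiguity of $u$'s bag-interval in the path decomposition; this is handled by inserting them consecutively at (say) the first bag containing $u$, so that the contiguity of the $u$-interval is preserved. The resulting width matches the tree case, yielding $\pw(\widetilde F)\le\ell\cdot\pw(F)$.
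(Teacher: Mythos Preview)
Your treewidth argument is correct and matches the paper's approach: graft each gadget's path decomposition as a pendant subtree at a node whose bag already contains the attachment vertices.

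For pathwidth, however, there is a real gap. When you splice a gadget's path decomposition inline between consecutive original bags $\beta(t)$ and $\beta(t+1)$, you must preserve the bag-interval of \emph{every} vertex in $\beta(t)\cap\beta(t+1)$, not only of the attachment vertices $u,v$. Your scheduling discussion addresses only the latter; any other vertex $w\in\beta(t)\cap\beta(t+1)$ now has its interval cut by gadget bags that do not contain $w$. No ordering of the insertions avoids this, because every direction gadget must be inserted at some interior position of the path, and that position will generically have non-attachment vertices surviving across it.

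The paper's remedy is to add the whole bag $\beta(t)$ to every bag of the gadget's path decomposition before splicing. This is precisely why the paper arranges for the attachment vertices to lie in \emph{both end bags} of the gadget decomposition rather than merely in one common bag: after augmenting with $\beta(t)$, the gadget slots between $\beta(t)$ and $\beta(t+1)$ as a genuine sub-path with all intervals intact. The resulting width is additive, about $\pw(F)+\ell$, still within the stated $\ell\cdot\pw(F)$; your bound $\max(\pw(F)+1,O(\ell))$ is attainable in the tree case (where grafting creates branches) but not for path decompositions.
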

        \begin{claimproof}
            Observe that the indicator gadget for a vertex $u$ admits a path decomposition of width $\ell$ such that $u$ lies in both end bags.
            Furthermore,
            the direction gadget for $u\to v$ admits a path decomposition of width $\ell$
            such that $u, v$ lie in both end bags.

            Given a tree decomposition of $F$ of width $\tw(F)$,
            a tree decomposition of $\widetilde{F}$ can be constructed by inserting the tree decompositions of the gadgets appropriately.
            The width of this decomposition is $\ell \cdot \tw(F)$.
            For path decompositions, the argument is analogous.
        \end{claimproof}
        
        \begin{claim}
        	$\mathcal{F}$ can be defined in $\mathsf{MSO}_1$.
        \end{claim}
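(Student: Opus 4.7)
\begin{claimproof}
The plan is to translate the $\mathsf{MSO}$-sentence $\phi$ defining $\mathcal{C}$ over directed vertex-coloured graphs into an $\mathsf{MSO}_1$-sentence $\psi$ over undirected uncoloured graphs such that, for every undirected graph $G$, $G \models \psi$ if and only if $G \cong \widetilde{F}$ for some $F \in \mathcal{C}$. The key leverage is that the Kneser graphs $P$, $Q$, $V$, and $(K_c)_{c \in C}$ used in the gadgets are fixed, connected, and pairwise homomorphically incomparable, so every induced subgraph of $\widetilde{F}$ isomorphic to one of them must sit inside a single designated gadget position.

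First, I would write, for each fixed finite graph $X \in \{P, Q, V\} \cup \{K_c \mid c \in C\}$, a first-order formula $\mathrm{cpy}_X(Y)$ asserting that the subgraph induced on the vertex set $Y$ is isomorphic to $X$; such a formula exists because $X$ has a fixed number of vertices and one may quantify existentially over an enumeration and spell out the adjacency pattern. Together with first-order formulas describing subdivided paths of the fixed lengths $2\ell$ and $10\ell$, these give rise to $\mathsf{MSO}_1$-predicates on $\widetilde{F}$: $\mathrm{core}(x)$ marking the vertices originating from $V(F)$, $\mathrm{col}_c(x)$ asserting that the indicator gadget attached at the core vertex $x$ contains a copy of $K_c$, and $\mathrm{edge}(x, y)$ asserting that a direction gadget $L_{x \to y}$ connects $x$ to $y$. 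Directedness is recoverable because $P$ and $Q$ occupy asymmetric positions inside $L_{u \to v}$, so swapping $x$ and $y$ yields an inequivalent configuration.

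Next, I would write a \emph{well-formedness} sentence $\chi$ asserting that $G$ decomposes into a set of core vertices plus one indicator gadget attached to each core vertex and at most one direction gadget per ordered pair of core vertices, with every vertex of $G$ accounted for exactly once in this decomposition. The desired sentence $\psi$ is then obtained from $\phi$ by conjoining $\chi$, relativizing all first- and second-order quantifiers to $\mathrm{core}$, and replacing the colour atoms $C_c(x)$ and the directed-edge atoms $E(x, y)$ by $\mathrm{col}_c(x)$ and $\mathrm{edge}(x, y)$, respectively.

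The main obstacle is verifying well-formedness: one must show that whenever $G \models \chi$ the decoded tuple of core vertices, colour assignment, and edge relation is the unique directed vertex-coloured graph whose image under $F \mapsto \widetilde{F}$ is $G$, so in particular no two putative gadgets may overlap and no stray isomorphic copy of a Kneser graph may appear outside a gadget. This will follow from the homomorphic incomparability of $P$, $Q$, $V$, $(K_c)_{c \in C}$ (which prevents one gadget from hosting a spurious copy of another's Kneser graph) together with the generous subdivision lengths $2\ell$ and $10\ell$, which exceed the size of any single Kneser graph and thereby forbid pathological gadget alignments.
\end{claimproof}
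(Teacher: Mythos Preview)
Your proposal is correct and follows the same three-step strategy as the paper: first-order define the class of well-formed gadget graphs $\widetilde{G}$, first-order interpret the underlying directed $C$-coloured graph inside $\widetilde{G}$, and then obtain the $\mathsf{MSO}_1$-sentence for $\mathcal{F}$ by relativising and substituting these interpretations into the $\mathsf{MSO}$-sentence defining $\mathcal{C}$. The paper's sketch is terser---it appeals only to the constant size of each gadget for first-order detectability rather than to homomorphic incomparability and the path lengths---but the content is the same.
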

        \begin{claimproof}
        	First, we argue that the class of graphs $\widetilde{G}$ for arbitrary directed $C$-coloured graphs $G$ can be defined in first-order logic. 
            To that end, note that there is, for every type of gadget, a first-order formula $\phi(x)$ which is true precisely for the vertices that are part of a gadget of this type. 
        	This is because the size of each gadget is constant.
        	Combining these formulas allows to define the graphs of the form $\widetilde{G}$.
        	
        	Secondly, we may define in a graph $\widetilde{G}$ the original vertex colours and edge-directions of the original graph $G$ using first-order logic.
        	The defining $\mathsf{MSO}_1$-formula for $\mathcal{F}$ is then obtained by substitution from the $\mathsf{MSO}$-formula defining $\mathcal{C}$.
        \end{claimproof}

        The following claim follows from \cite[Lemma 17]{boker_et_al:LIPIcs.MFCS.2019.54}.
        See also \cite[Lemma~5.1]{boeker_complexity_2025}.
        
        \begin{claim}\label{cl:hom-tilde}
            For every $F \in \mathcal{C}$,
            there exists a constant $\alpha_F \neq 0$
            such that $\hom(F, G) = \alpha_F \hom(\widetilde{F}, \widetilde{G})$
            for every $G$.
        \end{claim}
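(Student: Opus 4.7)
The plan is to establish a multiplicative bijection between homomorphisms $F \to G$ and homomorphisms $\widetilde{F} \to \widetilde{G}$ by exploiting rigidity properties of the Kneser gadgets, in the style of \cite{boker_et_al:LIPIcs.MFCS.2019.54,boeker_complexity_2025}.

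First I would analyse the structure of an arbitrary homomorphism $\phi \colon \widetilde{F} \to \widetilde{G}$. The crucial property of the chosen Kneser graphs $P, Q, V, K_c$ (for $c \in C$) is that they are connected and pairwise homomorphically incomparable, and that each admits only a constant number $\kappa$ of endomorphisms. Since the auxiliary paths $P_{2\ell}$ and $P_{10\ell}$ appearing in the gadgets are much longer than any of the Kneser graphs, any attempt to map a $P$-gadget attached to $\widetilde{F}$ onto a $Q$-gadget of $\widetilde{G}$ (or similarly to collapse it into a path segment) would require a homomorphism between graphs that do not admit one. Consequently, each Kneser subgraph of $\widetilde{F}$ must map to a Kneser subgraph of $\widetilde{G}$ of the same type, and the attached paths must map onto the corresponding paths of matching length.

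Second I would exploit the asymmetry of the direction gadget $L_{u \to v}$ in \cref{fig:direction-gadget}: the distinguishing $P$ and $Q$ gadgets are attached at different distances ($P_{10\ell}$ from $u$ versus $P_{10\ell}$ to $v$, with $P$ adjacent to the side closer to $u$ and $Q$ closer to $v$), so the orientation of the underlying directed edge $uv \in E(F)$ is recovered from $\phi$. Similarly, the indicator gadget of \cref{fig:indicator-gadget} pins down the colour of each vertex via the distinct Kneser graph $K_c$. Combining these two observations shows that $\phi$ restricted to the vertex set $V(F) \subseteq V(\widetilde F)$ is a colour-preserving and direction-preserving homomorphism $F \to G$; conversely, every such homomorphism $F \to G$ extends to a homomorphism $\widetilde{F} \to \widetilde{G}$.

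Third I would count. Given a homomorphism $h \colon F \to G$, the number of homomorphisms $\widetilde{F} \to \widetilde{G}$ extending $h$ is the product over the gadgets in $\widetilde{F}$ of the number of homomorphisms of each gadget into the corresponding gadget of $\widetilde{G}$, with the tips fixed. Because the Kneser graphs and the path lengths are constant and the tips are fixed in both source and target, this product depends only on $F$ (not on $G$ or $h$), and we may define $\alpha_F$ as its reciprocal. Non-vanishing of $\alpha_F$ follows because each gadget admits at least the identity-like extension (every Kneser graph has at least one endomorphism fixing its tip, and each path has the canonical length-preserving embedding). The main obstacle is the rigidity step: verifying that no non-canonical mappings between gadgets exist, which is exactly where the choice of path lengths $2\ell$ and $10\ell$ relative to $\ell \geq |V(P)|, |V(Q)|, |V(V)|, |V(K_c)|$ becomes essential, and which is carried out in detail in \cite[Lemma~17]{boker_et_al:LIPIcs.MFCS.2019.54}.
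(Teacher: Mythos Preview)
Your proposal is correct and follows the same approach as the paper, which simply defers to \cite[Lemma~17]{boker_et_al:LIPIcs.MFCS.2019.54} (and \cite[Lemma~5.1]{boeker_complexity_2025}) for this claim; you have essentially unpacked what that lemma asserts and how its proof goes, ultimately citing the same reference for the rigidity verification.
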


       Hence, $G$ and $H$ are homomorphism indistinguishable over $\mathcal{C}$ if, and only if, $\widetilde{G}$ and $\widetilde{H}$ are homomorphism indistinguishable over $\mathcal{F}$.
    \end{proof}
    \section{Proof of \cref{thm:cycles-paths}}
    \label{app:cycles-paths}
    For the proof of \cref{thm:cycles-paths}, we write $G + H$ and $G \times H$ for the disjoint union and the categorical product of graphs $G$ and $H$, respectively.
    By \cite[(5.28)--(5.30)]{lovasz_large_2012},
	for all graphs $F_1, F_2, G_1, G_2$, and all connected graphs $K$,
	\begin{align}
		\hom(F_1 + F_2, G) &= \hom(F_1, G) \hom(F_2, G), \label{eq:coproduct} \\
		\hom(F, G_1 \times G_2) &= \hom(F, G_1) \hom(F, G_2), \text{ and }\label{eq:product} \\
		\hom(K, G_1 + G_2) &= \hom(K, G_1) + \hom(K, G_2). \label{eq:disjoint}
	\end{align}

    \begin{proof}[Proof of \cref{thm:cycles-paths}]
        For the reduction from cycles and paths to cycles, observe that $G$ and $H$ are homomorphism indistinguishable over cycles and paths if, and only if, both $G$ and $H$ and their complements $\overline{G}$ and $\overline{H}$ are homomorphism indistinguishable over all cycles \cite[Lemma~9.4.3]{seppelt_homomorphism_2024}.
        In order to make this reduction many-one, we consider the following claim:
        \begin{claim}\label{claim:and}
            Let $\mathcal{F}$ be a graph class that is closed under taking summands.
            For graphs $G$, $G'$, $H$, and $H'$, 
            $G \equiv_{\mathcal{F}} H$ and  $G' \equiv_{\mathcal{F}} H'$
            if, and only if, $G^2 + H^2 + G'^2 + H'^2 \equiv_{\mathcal{F}} 2G \times H + 2G' \times H'$.
        \end{claim}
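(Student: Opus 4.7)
The plan is to verify the claim by establishing a sum-of-squares identity in the connected case and then bootstrapping to arbitrary $F \in \mathcal{F}$ using closure under summands.

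First, for any connected graph $F$, iterating \eqref{eq:disjoint} and then applying \eqref{eq:product} yields
\begin{align*}
    \hom(F, G^2 + H^2 + G'^2 + H'^2)
    &= \hom(F,G)^2 + \hom(F,H)^2 + \hom(F,G')^2 + \hom(F,H')^2, \\
    \hom(F, 2 G \times H + 2 G' \times H')
    &= 2\hom(F,G)\hom(F,H) + 2\hom(F,G')\hom(F,H').
\end{align*}
Their difference equals $(\hom(F,G) - \hom(F,H))^2 + (\hom(F,G') - \hom(F,H'))^2$, a sum of two squares of integers, which vanishes if and only if $\hom(F,G) = \hom(F,H)$ and $\hom(F,G') = \hom(F,H')$.

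Next, I would handle arbitrary $F \in \mathcal{F}$ by decomposing $F = F_1 + \cdots + F_k$ into connected components; by closure under summands each $F_i$ lies in $\mathcal{F}$. Applying \eqref{eq:coproduct}, both $\hom(F,\cdot)$ quantities factor as $\prod_i \hom(F_i, \cdot)$. In the forward direction, $G \equiv_{\mathcal{F}} H$ and $G' \equiv_{\mathcal{F}} H'$ together with the connected case applied to each $F_i$ make the two full products coincide. In the backward direction, the assumed equivalence over $\mathcal{F}$ specialised to any connected $F \in \mathcal{F}$, combined with the sum-of-squares identity, forces $\hom(F, G) = \hom(F, H)$ and $\hom(F, G') = \hom(F, H')$; since closure under summands means connected test graphs already determine homomorphism indistinguishability, we conclude $G \equiv_{\mathcal{F}} H$ and $G' \equiv_{\mathcal{F}} H'$.

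The only subtlety is that integer squares are non-negative, so the vanishing of a sum of two squares forces each summand to vanish; no substantial obstacle is expected, since the argument is essentially the algebraic identity $(a-b)^2 + (c-d)^2 = a^2 + b^2 + c^2 + d^2 - 2ab - 2cd$ together with the closure hypothesis that reduces testing to connected $F$ where \eqref{eq:disjoint} applies.
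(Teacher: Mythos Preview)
Your proposal is correct and follows essentially the same approach as the paper: reduce to connected $F \in \mathcal{F}$ via closure under summands and \eqref{eq:coproduct}, then use \eqref{eq:disjoint} and \eqref{eq:product} to rewrite the difference as a sum of two integer squares. The paper's proof is slightly more compact in that it states the reduction to connected test graphs once at the start (citing \cite{seppelt_logical_2024}) and then handles both directions simultaneously, whereas you treat the forward and backward directions separately; but the mathematical content is identical.
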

        \begin{claimproof}
            By \cref{eq:coproduct}, it suffices to consider homomorphism indistinguishability over connected graphs in $\mathcal{F}$, see \cite{seppelt_logical_2024}.
            For a connected graph $F$, by \cref{eq:disjoint}, it holds that $\hom(F, G + H ) = \hom(F, G) + \hom(F, H)$.
            Hence, by \cref{eq:product}, $\hom(F, G^2 + H^2 + G'^2 + H'^2) = \hom(F, 2G\times H + 2G'\times H')$
            if, and only if,
            $(\hom(F, G) - \hom(F, H))^2 + (\hom(F, G') - \hom(F, H'))^2 = 0$,
            which holds if, and only if, $\hom(F, G) = \hom(F, H)$ and $\hom(F, G') = \hom(F, H')$.
        \end{claimproof}
        Hence, the map $(G, H) \mapsto (G^2 + H^2 + \overline{G}^2 + \overline{H}^2, 2G \times H + 2\overline{G}\times \overline{H})$
        is the desired logspace many-one reduction.
        
        For the converse reduction, 
        we recall the \textsmaller{CFI} graphs \cite{cai_optimal_1992} studied in \cite{roberson_oddomorphisms_2022}.
        For a connected graph $G$, write $G^0$ and $G^1$ for its even and odd \textsmaller{CFI} graph as defined in \cite[Definition 3.1]{roberson_oddomorphisms_2022}.
        We consider the \textsmaller{CFI} graphs of the $3$-vertex cycle $C_3$ and the $4$-vertex cycle $C_4$.
        Note that $C_n^0 \cong 2C_n$ and $C_n^1 \cong C_{2n}$ for all $n \in \mathbb{N}$.

        \begin{claim}\label{claim:oddo}
            Let $F$ be a cycle or a path.
            \begin{enumerate}
                \item $\hom(F, C_3^0) \neq \hom(F, C_3^1)$ if, and only if, $F$ is an odd cycle on at least $3$ vertices.
                \item $\hom(F, C_4^0) \neq \hom(F, C_4^1)$ if, and only if, $F$ is an even cycle on at least $4$ vertices.
            \end{enumerate}
        \end{claim}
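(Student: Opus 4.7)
The plan is to exploit the simple structure of $C_n^0 \cong 2C_n$ and $C_n^1 \cong C_{2n}$ for $n \in \{3,4\}$: both graphs have $2n$ vertices and are $2$-regular. This immediately disposes of paths, and reduces the cycle case to an explicit spectral calculation.

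First I would handle paths. For any $2$-regular graph $G$, the all-ones vector satisfies $A_G \boldsymbol{1} = 2\boldsymbol{1}$, hence $\hom(P_k, G) = \boldsymbol{1}^\top A_G^k \boldsymbol{1} = |V(G)| \cdot 2^k$. Since $|V(C_n^0)| = |V(C_n^1)| = 2n$, this common value agrees on both CFI graphs regardless of whether $P_k$ counts edges or vertices. Thus paths never separate the pair $(C_n^0, C_n^1)$, which covers the "only if" direction in both items.

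Next I would compute $\hom(C_k, \cdot)$ via $\hom(C_k, G) = \tr(A_G^k)$, using that the eigenvalues of $A_{C_m}$ are $2\cos(2\pi j/m)$ for $j = 0,\dots,m-1$. For the first item, the spectra give $\tr(A_{C_3}^k) = 2^k + 2(-1)^k$, so $\hom(C_k, 2C_3) = 2 \cdot 2^k + 4(-1)^k$, while $\tr(A_{C_6}^k) = 2^k + (-2)^k + 2(-1)^k + 2$. The difference simplifies to $(1-(-1)^k)(2 - 2^k)$, which vanishes for even $k$ and, for odd $k \geq 3$, equals $4 - 2^{k+1} \neq 0$. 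For the second item, $\tr(A_{C_4}^k) = 2^k + (-2)^k$ and $\tr(A_{C_8}^k) = 2^k + (-2)^k + 2(\sqrt{2})^k + 2(-\sqrt{2})^k$; both traces vanish for odd $k$ (as $C_4$ and $C_8$ are bipartite), so odd cycles fail to separate. For even $k$, the difference becomes $2^{k/2+2} - 2^{k+1}$, which is zero at $k = 2$ (irrelevant, since $C_2$ is not a simple cycle) and strictly negative for every even $k \geq 4$.

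There is no real obstacle: the calculation is elementary once one recognises the common degree and vertex count. The only care needed is to ensure that the marginal cases $k = 1$ in the first item and $k = 2$ in the second item, where the algebraic identity vanishes, are ruled out by the stated restriction to simple cycles on $\geq 3$ respectively $\geq 4$ vertices. Combining the path computation in Step~1 with the cycle computation in Step~2 yields both equivalences.
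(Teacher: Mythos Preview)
Your argument is correct and takes a genuinely different route from the paper. The paper invokes Roberson's structural theory of CFI graphs and weak oddomorphisms: it cites \cite[Theorem~3.13]{roberson_oddomorphisms_2022} to characterise when $\hom(F, C_n^0) \neq \hom(F, C_n^1)$ via the existence of a weak oddomorphism $F \to C_n$, then \cite[Corollary~5.16]{roberson_oddomorphisms_2022} to rule out forests, and finally \cite[Corollary~3.16]{roberson_oddomorphisms_2022} to pin down which cycles admit such maps. You bypass this machinery entirely, using only that $C_n^0$ and $C_n^1$ are both $2$-regular on $2n$ vertices (handling all paths uniformly) together with an explicit eigenvalue computation for $\hom(C_k,\cdot) = \tr(A^k)$. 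Your approach is more elementary and fully self-contained, at the cost of being specific to $n \in \{3,4\}$; the paper's approach situates the claim inside a general framework that would work for any base graph, but depends on several external citations. Two cosmetic points: your factored differences carry the opposite sign from what you wrote (e.g.\ $(1-(-1)^k)(2^k-2)$ rather than $(1-(-1)^k)(2-2^k)$), though this does not affect the non-vanishing conclusion; and the boundary cases $k=1$ and $k=2$ you flag are indeed outside the domain of simple cycles, so the argument is complete as stated.
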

        \begin{claimproof}
            By \cite[Theorem~3.13]{roberson_oddomorphisms_2022},
            $\hom(F, C_n^0) \neq \hom(F, C_n^1)$
            if, and only if,
            $F$ admits a so-called weak oddomorphism to $C_n$.
            By \cite[Corollary~5.16]{roberson_oddomorphisms_2022},
            this implies that $F$ is not a forest, i.e.\ it is a cycle.
            The claim follows now from \cite[Corollary~3.16]{roberson_oddomorphisms_2022}.
        \end{claimproof}

        The reduction proceeds as follows.
        Given graphs $G$ and $H$,
        one may check in logspace whether $G$ and $H$ have the same number of vertices and edges, respectively.
        Now it follows from \cref{claim:oddo,eq:disjoint,eq:product}
        that $G \times C_3^0 + H \times C_3^1$ and $H \times C_3^0 + G \times C_3^1$ are homomorphism indistinguishable over all cycles and paths if, and only if, $G$ and $H$ are homomorphism indistinguishable over all odd cycles on at least $3$ vertices, and analogously for $C_4$.
        The required logspace many-one reduction can now be obtained via \cref{claim:and}.
    \end{proof}
\end{document}